\documentclass[10pt]{article}

\usepackage[english]{babel}
\usepackage[letterpaper,top=2cm,bottom=2cm,left=3cm,right=3cm,marginparwidth=1.75cm]{geometry}
\usepackage[utf8]{inputenc}
\usepackage[T1]{fontenc}
\usepackage{setspace}
\usepackage{amsmath,amsthm,amssymb,amsfonts,amscd,keyval}
\usepackage{mathtools,mathrsfs}
\mathtoolsset{mathic=true}
\usepackage{bbm}
\usepackage{adjustbox}
\usepackage{tensor}
\usepackage{braket}
\usepackage{slashed}
\usepackage{tikz-cd}
\usepackage{circuitikz}
\usepackage{multirow}

\usepackage{graphicx}
\usepackage{subcaption}
\usepackage{float}
\usepackage{abstract}
\usepackage{titlesec}
\usepackage{titletoc}
\usepackage[backend=biber,  style=ieee,
  citestyle=numeric-comp, maxcitenames=3, maxbibnames=99 ]{biblatex}
\usepackage[colorlinks=true, allcolors=blue]{hyperref}
\usepackage{microtype} 
\usepackage{url}
\usepackage{tikz}
\usetikzlibrary{arrows.meta,calc}
\usepackage{authblk}

\usepackage{palatino}

\newcommand{\comments}[1]{}

\DeclareMathOperator{\im}{im}

\DeclareMathOperator{\tr}{Tr}

\newcommand{\CZ}{\mathrm{C}Z}

\newcommand{\CCZ}{\mathrm{C}\mathrm{C}Z}
\newcommand{\mCZ}{\mathrm{C}^{r-1}Z}
\DeclareMathOperator{\supp}{supp}
\DeclareMathOperator{\C}{\mathcal{C}}

\DeclareMathOperator{\F}{\mathcal{F}}
\DeclareMathOperator{\HH}{\mathsf{H}}

\DeclareMathOperator{\sd}{sd}
\DeclareMathOperator{\id}{id}

\DeclareMathOperator{\ev}{ev}
\DeclareMathOperator{\RS}{RS}

\numberwithin{equation}{section} 
\theoremstyle{definition}
\newtheorem{definition}{Definition}[section]
\newtheorem*{example}{Example}

\theoremstyle{plain}
\newtheorem{remark}{Remark}

\newtheorem{theorem}[definition]{Theorem}
\newtheorem{proposition}[definition]{Proposition}
\newtheorem{lemma}[definition]{Lemma}
\newtheorem{corollary}[definition]{Corollary}

\title{Transversal non-Clifford gates on almost-good \\ quantum LDPC and quantum locally testable codes}
\author{Yiming Li}
\author{Zimu Li}
\author{Zi-Wen Liu}
\affil{Yau Mathematical Sciences Center, Tsinghua University}
\date{}
\begin{document}

\maketitle

\vspace{-1cm}
\begin{abstract}
We exhibit nontrivial transversal logical multi-controlled-$Z$ gates on $[\![N,\Theta(N),\tilde\Theta(N)]\!]$ quantum low-density parity-check (qLDPC) codes with soundness $\tilde\Theta(1)$, combining nearly optimal code parameters with fault-tolerant non-Clifford gates on qLDPC and quantum locally testable codes for the first time.
Remarkably, our proofs proceed through highly general algebraic arguments. Building on insights from [Li et al.,~arXiv:2603.25831], we develop a general covering space framework for constructing and computing a rich family of cohomological invariant forms on sheaf codes that induce transversal logical multi-controlled-$Z$. To certify their nontriviality, we further demonstrate the existence of two-way product-expanding punctured Reed--Solomon codes, which is striking in light of the many negative examples for the product expansion behavior of ordinary Reed--Solomon codes. {This approach directly overcomes the previous obstruction to realizing nontrivial logical operations while simultaneously preserving the code parameters.} The claimed almost-good code results follow immediately as examples. 
\end{abstract}

\vspace{-4mm}
\tableofcontents


\section{Introduction and main results}\label{sec:intro}

Constructing quantum low-density parity-check (qLDPC) codes with desired properties is a deep and central problem in quantum computation that has attracted sustained effort over many years.
This pursuit carries both practical importance as a pathway to efficient fault-tolerant quantum computation~\cite{Gottesman2013,Fawzi2018PolyTime,Yamasaki2024QusiPloylog,Tamiya2024PolylogTime,Nguyen2025FT}, as well as profound theoretical relevance through substantive connections with mathematics~\cite{Freedom2002,BH2013homological,FreedmanHastings2021,Li2025Poincare,LSWLL2026Theory}, physics~\cite{Rakovszky2023PhysicsLDPCI,DeRoeck2025LDPCGappedPhases,Yin2025StablePhases}, and complexity theory~\cite{AharonovAradVidick2013_qPCPSurvey,Eldar2016NLETS,Anshu_2023}.  
A strong notion known as quantum locally testable codes (qLTCs)~\cite{AharonovEldar2015QLTC} further involves a ``soundness'' property and has important motivations for e.g.~improving the efficiency of quantum fault tolerance~\cite{Nguyen2025FT} and the quantum probabilistically checkable proofs (qPCP) conjecture~\cite{AharonovAradVidick2013_qPCPSurvey,AharonovEldar2015QLTC,Eldar2016NLETS,Dinur2024sheaf}, a central problem in quantum complexity. 
In particular, qLTC constructions with good distance and soundness were shown to imply the NLTS theorem~\cite{Eldar2016NLETS}, a major landmark towards qPCP, prior to its independent resolution via qLDPC codes~\cite{Anshu_2023}. 
Long-term work has recently culminated in breakthroughs which established ``good'' qLDPC codes with asymptotically optimal code distance and rate~\cite{PK2022Good,QuantumTanner2022,DHLV2022,losslessQLDPC} and ``almost-good'' qLTCs whose distance and soundness are polylogarithmically away from optimal~\cite{Dinur2024sheaf}.

Beyond basic code parameters, the logical operation properties are an essential consideration, since quantum computation requires efficient dynamical processing of logical quantum information. Specifically, we would like the codes to directly support fault-tolerant (e.g., transversal) physical implementations of logical gates, especially non-Clifford gates, since they dominate the cost of universal fault-tolerant quantum computation.
Therefore, the ideal goal is to combine fault-tolerant non-Clifford logical gates with desirable code parameters, although this has proven persistently difficult and its very possibility has remained in doubt, with progress so far being limited despite extensive efforts from diverse directions.
First, there are fundamental no-go theorems that rule out certain general combinations of qLDPC code and non-Clifford gate  structures~\cite{Burton2022,Fu2025nogo}. In terms of existing progress, non-Clifford gates have only been established in either
i) qLDPC codes with parameters at least polynomially below optimal~\cite{Bombin_2007,Bomb_2014,Bombin_2013,Kubica2015,RainbowCode,10.21468/SciPostPhys.14.4.065,Chen_2023,Zhu2023,Lin2024transversal,Breuckmann2024Cups,Tiew2026copycup,Zhu2025A,Zhu2025B,Barg_2026QRM,Coble_2026Coxeter,qtannercolor}, or
ii) quantum (algebraic geometry) codes with good parameters yet are non-LDPC~\cite{Nguyen2025CCZ,Wills2024magic,He2025addressable,Golowich_Lin2024,Golowich_Guruswami2025A,Golowich_Guruswami2025B,VirgileCCZ}.
Namely, while any two of the three desired properties---qLDPC, near-optimal parameters, and fault-tolerant non-Clifford gates---have been achieved, it remains unknown how, or indeed whether, all three can be realized simultaneously.
Within the existing framework of almost-good qLDPC codes and qLTCs on high-dimensional expanders and sheaves~\cite{Kaufman2014,PK2022Good,DHLV2022,QuantumTanner2022,VirgileLiftCSSHandlebody,Virgilemoderatelengthliftedquantumtanner,VirgileLiftCSS,First2024,Dinur2024sheaf,Panteleev2024}, the manifest algebraic complexity of the code space poses formidable challenges for analyzing the structure of logical gates. A recent work~\cite{LSWLL2026Theory} proposes a systematic methodology for formulating cohomology invariants, including logical representatives and cup products, and presents explicit computation of nontrivial non-Clifford operations on such sheaf codes, but the related results there involve extra assumptions from number theory.

\paragraph{Main results.}
In this work, building on the cohomological invariant forms on sheaf codes proposed in Ref.~\cite{Li2025Poincare}, we realize nontrivial transversal logical multi-controlled-$Z$ gates on almost-good qLDPC codes and qLTCs~\cite{Dinur2024sheaf}. To this end, we demonstrate the existence of \emph{two-way product-expanding punctured Reed--Solomon local codes} 
{that support nontrivial cup products in cohomology~\cite{LSWLL2026Theory}.} 
This resolves the open problem of achieving fault-tolerant non-Clifford gates simultaneously with nearly optimal parameters on qLDPC codes and qLTCs.

As is standard, $\mCZ$ denotes a (multi-)controlled-$Z$ gate on $r$ qubits with $r-1$ controls.

\begin{theorem}\label{thm:main}
For any integer $r\geq 2$, there exist  
\begin{itemize}
    \item $[\![N,\Theta(N),\Theta(N/(\operatorname{log}N)^{r-1})]\!]$ quantum LDPC codes\footnote{For $r=2$, namely $\mathrm{C}Z$ gates, the code parameters can be further made optimal, i.e., $[\![N,\Theta(N),\Theta(N)]\!]$; see~\cite{Li2025Poincare}.} 
    \item $[\![N,\Theta(N),\Theta(N/(\operatorname{log}N)^{2r-1})]\!]$ quantum locally testable codes with soundness $\Theta(1/(\operatorname{log}N)^{2r-1})$
\end{itemize}
that support nontrivial transversal logical $\mCZ$ gates.
\end{theorem}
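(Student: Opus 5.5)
The plan is to instantiate the sheaf-code construction of \cite{Dinur2024sheaf} on an $r$-dimensional (for qLDPC) or higher-dimensional (for qLTC) coset complex/product of expanders. The local codes will be chosen from a specific family, namely punctured Reed--Solomon codes. Two requirements then have to be checked. First, these local codes satisfy the expansion hypotheses needed for the parameter bounds. Second, the resulting code carries a nontrivial $r$-fold cohomological invariant form that implements $\mCZ$ transversally.

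\textbf{Step 1 (parameters).} I will take the local codes to be punctured Reed--Solomon codes $\RS$ over $\mathbb{F}_q$ with evaluation sets chosen so that the associated tensor/product codes are \emph{two-way product-expanding}. Two-way here means that both the codes and their duals are product-expanding, with constant expansion. Given this, the existing theorems of \cite{Dinur2024sheaf} give the stated parameters directly. The local-to-global cosystolic and systolic expansion for sheaves yields linear rate. It also yields distance $\Theta(N/(\log N)^{r-1})$ in the qLDPC setting, where the qubits sit at a middle level and the polylog loss is one factor per dimension step. In the qLTC setting it yields distance and soundness $\Theta(1/(\log N)^{2r-1})$, using the $(2r-1)$-dimensional-type complex in which logical qubits live at level $r$ or $r-1$.

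\textbf{Step 2 (gates).} The general covering-space framework will produce a multilinear form $\langle x_1\cup\cdots\cup x_r,[M]\rangle$ on $H^{k}$ of the sheaf complex. I will pull the complex back to a cover on which the sheaf trivializes to products of local coefficient spaces, and define the cup product there through a multiplicative structure on the local codes. For Reed--Solomon codes this structure is pointwise multiplication of polynomial evaluations, whose degree adds; this is why degrees must satisfy $\sum \deg < |S|$ constraints. Cohomological invariance of the form, meaning that it vanishes whenever one argument is a coboundary, implies that the diagonal physical $\mCZ$ (suitably phased, using the $r$ copies or the $r$-partite structure) acts as the logical gate $\prod (-1)^{\text{form}}$. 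This gate preserves the code space and is transversal.

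\textbf{Step 3 (nontriviality) — main obstacle.} I must exhibit classes $x_1,\dots,x_r$ with nonzero form while keeping product expansion. The constraint is that degrees have to be small enough for products to lie in a code with a nonzero ``integration'' functional (sum over the puncture set). On the other hand, ordinary RS codes of such rates are known to fail product expansion. My approach is to puncture: choose the evaluation set $S\subset\mathbb{F}_q$ randomly, or as a subgroup coset, of size proportional to the degree. Then the functional $f\mapsto\sum_{s\in S} f(s)$ is nonzero on some product of degree-$d$ polynomials, e.g.\ via a monomial $x^{|S|-1}$ for a multiplicative subgroup. For expansion, I would use a probabilistic argument: show that for random $S$, a product-expansion violation requires a low-rank structure that occurs with small probability, with a union bound over $\mathbb{F}_q$-structures and rank-based counting. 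I expect this existence proof of two-way product-expanding punctured RS codes to be the hardest step. Once it is done, evaluating the form on explicit product cocycles from local codewords on the cover gives a nonzero value, and Theorem~\ref{thm:main} follows.
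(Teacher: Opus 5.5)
Your outline has the right ingredients, but the two steps that carry the proof do not work as stated.

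\textbf{The covering runs the wrong way.} Computing invariants on a cover of the code's complex does not by itself produce nontrivial logical classes on the code you actually use. The paper exploits the reverse relation. The complex $X^t_{\HH}$ of \cite{Dinur2024sheaf} is itself an odd-sheeted cover $P\colon X^t_{\HH}\to X^t_{\{e\}}$ of the $t$-fold product of the double cover of $G_0$, and its sheaf is the pullback $P^*(\boxtimes_i\mathcal F^{h_i})$. One computes on this HGP base via K\"unneth. Cocycles are pulled back with $P^\#$, which commutes with cup products; these are defined on the cubical complex through barycentric subdivision and a compatible approximate inverse. The cycle is transferred with $T_\#$. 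The pairing is then multiplied by $\ell$, which is nonzero in characteristic $2$ because $\ell$ is odd.

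\textbf{Your nontriviality mechanism is inverted.} The top chain $\xi$ must be a cycle, since that is exactly what makes the form cohomologically invariant. For this, the local ``integration'' vector in each direction must \emph{annihilate} the Schur product of the local codes used there; a functional that is nonzero on some product of local codewords breaks the cycle condition. The paper uses two such vectors:
\begin{itemize}
\item the Lagrange vector $\lambda_a=\prod_{a'\ne a}(\beta_a-\beta_{a'})^{-1}$, which kills all degrees $\le n-2$ and hence kills $\C_t^{*t}$ once $t(m-1)\le n-2$;
\item the all-ones vector, in directions where one block uses $(\C_i^{*(t-1)})^\perp$.
\end{itemize}
Nonvanishing comes from explicit cocycles such as $[v_1;a_1]\otimes\zeta_2\otimes\cdots\otimes\zeta_t$, built from constant $0$-cochains $\zeta$ (which are cocycles in characteristic $2$). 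Their cup product is supported on a single cube and pairs with $\xi$ to give $\ell\lambda_{a_t}\ne0$.

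\textbf{Step 1 has a rate gap.} The low-rate local codes needed for $\C_i^{*t}\subsetneq\mathbb F_q^n$ are not covered by the rate guarantee of \cite{Dinur2024sheaf}, so you cannot quote linear rate for every block. In the qLDPC case the paper mixes local codes. Blocks $j\le t-1$ use $\C_i$. Block $t$ uses the high-rate codes $(\C_i^{*(t-1)})^\perp$ for $i\le t-1$ together with $\C_t$. Only that block is known to have $\Theta(N)$ logical qubits, and the combined blocks have constant rate. This is why product expansion is needed for arbitrary mixtures of $\C_i,\C_i^\perp,\C_i^{*r},(\C_i^{*r})^\perp$.

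\textbf{The product-expansion plan lacks its key reduction.} Your union bound needs a finite family of bad events independent of $q$, and the proposal does not identify one. The paper's reduction is that product expansion follows from extendability of every $\epsilon$-closed $M\subseteq[n]^t$ \cite{KP2025Extendable}. Each such condition is the nonvanishing of minors in the evaluation points, which Schwartz--Zippel handles. The real content is showing these polynomials are not identically zero, by constructing extensions for generic points:
\begin{itemize}
\item low-degree vanishing polynomials on the sparse part;
\item interpolation on dense hyperplanes;
\item peeling off high powers;
\item treating crossing lines when $t\ge3$;
\item diagonal scaling to reach the GRS duals when $\nu_i\le 1/2$.
\end{itemize}
Drop the subgroup-coset option. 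The argument needs generic evaluation sets, and evaluation on multiplicative groups is the setting of the known negative results for primitive RS codes.

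\textbf{Two smaller points.}
\begin{itemize}
\item Degrees must match: $r$ classes of degree $k$ land in degree $kr$, which must equal the degree of $\xi$. The paper takes $k=1,\ t=r$ for qLDPC and $k=2,\ t=2r$ for qLTCs, giving the exponents $t-1$. Your ``middle level'' and ``$(2r-1)$-dimensional, level $r$ or $r-1$'' choices generally overshoot the dimension.
\item The form only yields a constant-depth circuit. Transversality comes from the conversion lemma of \cite{Nguyen2025CCZ,Golowich_Lin2024}.
\end{itemize}
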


The key insight underpinning our proof is that the base spaces of almost-good qLDPC codes and qLTCs are covering spaces of the base spaces of certain homological product (HGP) codes, with the local coefficients of the former induced from those of the latter~\cite{LSWLL2026Theory}. The nontrivial cohomological invariants on these primitive HGP codes lift to almost-good codes and remain nontrivial, enabling us to establish nontrivial gate actions. A crucial leap from previous work \cite{LSWLL2026Theory} is the use of judiciously designed two-way product-expanding local codes (see Sec.~\ref{sec:GRS}). This result is based on the notion of extendability proposed in Ref.~\cite{KP2025Extendable} and allows us to directly address the main obstacle arising from tensor products of sheaves. Our results establish Conjecture~1.2 in Ref.~\cite{Li2025Poincare}.

It is well-known that for two Reed--Solomon (RS) codes $\C_1, \C_2$ evaluating over  $\mathbb{F}_q$, $(\C_1,\C_2)$ and $(\C_1^\perp,\C_2^\perp)$ cannot be product-expanding at the same time \cite{PK2023RobustlyTestable,QuantumTanner2022}. Moreover, the collection $(\C_1,\ldots,\C_t)$ of primitive RS codes with rate $< 1/t$ is proved to have $\rho(\C_1,\ldots,\C_t) \leq 1/n$ \cite{KP2025Extendable}. Working over sufficiently large finite fields and using distinct evaluation sets, we show that punctured RS and generalized Reed--Solomon (GRS) codes bypass the above obstructions, yielding the following theorem.

\begin{theorem}\label{thm:GRS}
Fix an integer $t\ge 2$ and constants $\nu_i \in (0,1)$ for each $i \in [t]$. 
Let $n$ be a sufficiently large integer and let $m_i=\lfloor \nu_i n\rfloor$ for each $i\in[t]$.
Let $\mathbb F_q$ be a finite field with $q > 2^{t n^t}$, and sample independently and uniformly at random $t$ subsets $E_1,\dots, E_t\subseteq \mathbb F_q$,  each of size $|E_i|=n$. Then with probability at least $1 - 2^{t n^t - \log q}$, the associated punctured RS codes $\C_i =\RS_q(E_i,m_i)$, each of code dimension $m_i$, have dual codes that satisfy
	\begin{align}
		\rho(\C_1^\perp, \ldots, \C_t^\perp) > c(\nu_i,t),
	\end{align}
	where $c$  depends only on $\nu_i$ and $t$. Moreover, if $\nu_i \leq \frac{1}{2}$, then
	\begin{align}\label{eq:GRS_rho}
		\rho(\C_1^\times,\ldots,\C_t^\times) > c(\nu_i,t),
	\end{align}
	where for each $i$, $C_i^\times$ can be chosen freely from $C_i$ or its dual $C_i^\perp$. 
\end{theorem}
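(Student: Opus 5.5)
We plan to follow the extendability route of \cite{KP2025Extendable}: product expansion $\rho(\C_1,\ldots,\C_t)$ is bounded below by a constant depending only on $t$ and the relative distances whenever the tuple is \emph{extendable}, i.e.\ a rank/genericity condition on the dual tensor code holds uniformly over all restrictions to subsets of coordinates. The task thus reduces to showing that, for random evaluation sets $E_1,\dots,E_t$ over a huge field, the punctured RS codes (and their duals, which are GRS codes of dimension $n-m_i$ on the same set $E_i$) satisfy this genericity condition. The proof will therefore be a Schwartz--Zippel union bound, with the constant $c(\nu_i,t)$ inherited from the extendability theorem.

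First, we would write the extendability condition as the nonvanishing of a family of determinants. For each choice of subsets $S_i\subseteq[n]$ (there are at most $2^{tn}$ choices, and more generally at most $2^{n^t}$ subsets of the grid $[n]^t$, which explains the bound $2^{tn^t}$), the condition requires that certain matrices built from the generator/parity-check matrices of $\C_i$ or $\C_i^\perp$ restricted to $S_i$ (or to subsets of the grid) have full rank. Second, treating the evaluation points $E_i=\{x_{i,1},\dots,x_{i,n}\}$ as indeterminates, each such determinant is a polynomial in the $x_{i,j}$ of degree at most about $n^t$. The dual of $\RS(E_i,m_i)$ is $\mathrm{GRS}(E_i,n-m_i,\mathbf v)$ with multipliers $v_j=\prod_{k\ne j}(x_{i,j}-x_{i,k})^{-1}$; after clearing these (nonzero) denominators the entries are again polynomial. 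Third, we must show that each determinant is not identically zero. For this it suffices to exhibit one specialization, possibly over an extension field or over $\mathbb{Q}$, where the required rank is attained. For example, one can take the points in different codes to be algebraically independent and use a leading-monomial argument: under a lex order on the variables, the Vandermonde-type blocks contribute distinct leading monomials, so the determinant has a nonzero top term. Fourth, Schwartz--Zippel together with a union bound over at most $2^{tn^t}$ polynomials gives failure probability at most $2^{tn^t}\cdot\deg/q\le 2^{tn^t-\log q}$, up to absorbing the degree and the distinctness requirement within each $E_i$. This establishes the dual-code statement.

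For the second claim, with $\nu_i\le 1/2$ we note that $\C_i\subseteq\C_i^\perp$ up to multipliers, i.e.\ $\RS(E_i,m_i)$ with $m_i\le n/2$ sits inside the GRS dual. Both $\C_i$ and $\C_i^\perp$ then have relative distance bounded below by $\min(\nu_i,1-\nu_i)$-type constants, and the dual of any chosen $\C_i^\times$ is again (G)RS on $E_i$. We therefore run the same determinant argument for each of the $2^t$ choices of $\times$, which only multiplies the union bound by a constant.

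The main obstacle is the third step: proving that the extendability determinants are not identically zero. This is precisely where primitive RS codes fail, since with $E_i=\mathbb{F}_q$ shared there are algebraic identities forcing $\rho\le 1/n$. The approach will be to show that in the generic setting the relevant tensor codes behave like tensor products of MDS codes in general position, so that every restriction has the dimension predicted by the MDS count. This can be proved by induction on $t$ via Kronecker structure: a tensor of full-rank generic Vandermonde restrictions is full rank, and the extension condition reduces to a dimension count of $\bigotimes_i \C_i|_{S_i}$ plus duals. The genuinely delicate part is the non-product subsets of the grid.
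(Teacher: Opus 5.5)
There is a genuine gap. Your third step is not a technical detail but the whole theorem, and the mechanisms you suggest for it cannot work: extendability fails for general subsets of the grid no matter how generic the evaluation points are.

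For $t=2$, take $A,B\subseteq[n]$ with $|A|=|B|=m$, $a_0\notin A$, $b_0\notin B$, and $M=(A\times\{b_0\})\cup(\{a_0\}\times B)$. For any $f$ with $\deg_X f,\deg_Y f\le m-1$:
\begin{itemize}
\item the values of $f$ on $A\times\{b_0\}$ determine $f(X,y_{b_0})$, hence $f(x_{a_0},y_{b_0})$;
\item the values on $\{a_0\}\times B$ determine the same number through $f(x_{a_0},Y)$.
\end{itemize}
So the image of $\ev_M$ lies in a proper hyperplane of $\mathbb F_q^M$ for every choice of distinct evaluation points. Yet $|M|=2m\le m^2$, every line meets $M$ in at most $m$ points, and $M$ contains no full line, so extendability would require $\ev_M$ to be onto $\mathbb F_q^M$. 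This $M$ is even $\epsilon$-closed for every $\epsilon>\nu$.

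Consequences for your proposal:
\begin{itemize}
\item No leading-monomial argument or ``MDS count'' can show that the relevant minors are nonzero.
\item Your Kronecker induction only covers product sets $S_1\times\cdots\times S_t$, where surjectivity is trivial.
\item The nonvanishing must come from $M$ being $\epsilon$-closed with $\epsilon=\epsilon(\nu_i,t)$ small. The constant is then $c=\epsilon^t/(t(2^t+1)^t)$ from Lemma~\ref{lemma:inner-g/rho}, not something determined by relative distances.
\item Once $M$ contains full lines, extendability is not a full-rank statement. You must show that $\operatorname{im}\ev_M$ equals the space of local codewords on $M$, whose dimension you would also have to control.
\end{itemize}

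The missing idea is an explicit polynomial construction over the generic field. The paper first works over the fraction field $\mathbb K$ of indeterminate evaluation points. It shows that every $M$ with small line size $\Delta$, small slice size and small cardinality is independent. For each $p\in M$ it builds $\Psi_p$, with all individual degrees $\le m-1$, that vanishes on $M\setminus\{p\}$ but not at $p$. $\Psi_p$ is a product of two kinds of factors:
\begin{itemize}
\item a polynomial $H_p$ of total degree $D_t\approx(t!\,|M|)^{1/t}$ that kills the points of $M$ sharing no coordinate with $p$, by a dimension count; $H_p(p)\ne0$ because the coordinates of $p$ are algebraically independent over $\mathbb K_p$, and this is exactly where genericity is used;
\item univariate products or lower-dimensional bump polynomials that kill the lines and slices through $p$.
\end{itemize}
The degree budget $(\Delta-1)+D_2\le m-1$, resp.\ $2((\Delta-1)+D_2)+D_3\le m-1$, is what forces $\epsilon\ll\nu_i$. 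Only after this does one multiply the resulting nonzero minors and apply Schwartz--Zippel.

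Separately, for an $\epsilon$-closed $M$ the paper extends arbitrary local data from the full lines $L(M)$. It uses inclusion--exclusion Lagrange interpolation on the dense hyperplanes. For $t\ge3$ it also peels off high-order coefficients, fixes the crossing points of sparse lines, and then handles the sparse lines themselves. Finally it corrects the remaining points of $M\setminus L(M)$ with bump polynomials that vanish on $L(M)$.

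Your handling of the duals matches the paper: GRS codes as diagonal scalings, and $\RS(E_i,m_i)\subseteq\RS(E_i,n-m_i)$ when $\nu_i\le 1/2$. But that step only transfers an extendability result your proposal has not established.
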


Explicit bounds on $\epsilon$ in terms of general $\nu_i$ and $t=2,3$ are given in \eqref{eq:epsilon_2D} and \eqref{eq:eplision_3D_condition}, respectively. The lower bound $q > 2^{t n^t}$ is due to a rough counting on certain subsets in $t$-dimensional grid $[n]^t$. For $\nu_i \leq \frac{1}{2}$, \eqref{eq:GRS_rho} is even stronger and indicates the two-way product expansion:
\begin{align}
	\rho(\C_1,\ldots,\C_t), \quad \rho(\C_1^\perp,\ldots,\C_t^\perp) > c(\nu_i,t).
\end{align}
If $\nu_i$ is further assumed to $< \frac{1}{t}$, then the $t$-fold Schur product (see below Definition \ref{def:KR_product}) of each $\C_i$ is a proper subspace in $\mathbb{F}_q^n$. We show in Section \ref{sec:proof} that this multiplication property is a flexible generalization of the multi-orthogonality conditions which is commonly used in previous constructions of $Z$-rotation and $\mCZ$ gates~\cite{Bravyi_2012,Paetznick_2013,Calderbank2020T,Lin2024transversal,Nguyen2025CCZ}. It is important for achieving nontrivial coding rates in our setting.

{
It is noted in Ref.~\cite{Golowich_Guruswami2025B} that a collection $t-1$ tensor products of random punctured RS codes and one dual code possess a weak product expansion $\rho = \Omega(1/n^\epsilon)$. We prove by explicit construction to accomplish the extendability that leads to a constant product-expanding factor \cite{KP2025Extendable} without using tensor products and locally testable codes. By \eqref{eq:GRS_rho}, the product expansion even holds for any combination of punctured RS codes or their dual GRS codes. This result is of independent interest and may find further applications.
}

\emph{Remark.} 
In v1 of this paper, the cap product used there did not satisfy the required Leibniz rule, hence the cupcap gate argument is invalid. Here, we correct this issue and retain the results by removing the erroneous cap product and turning to using well-behaved cup products. The invariance is now guaranteed but proving a nontrivial logical operation becomes more difficult. To achieve the goal, we develop two-way product-expanding punctured Reed--Solomon codes, which constitute a new contribution of this version. Used as local codes, their product expansion preserves the sheaf code parameters. We then rigorously verify that the associated tensor product sheaf admits nonzero top degree (co)homology, yielding nontrivial logical operations.

\section{Preliminaries}

\subsection{Cell complexes and Alexandrov topology}
\begin{definition}[Regular cell complex]
	A finite \emph{regular cell complex} (or finite \emph{regular CW complex}) is a topological space $X$ constructed inductively as follows:
	
	\begin{enumerate}
		\item Start with a discrete set $X^0$, whose points are regarded as $0$-cells.
		
		\item Inductively form the \emph{$n$-skeleton} $X^n$ from $X^{n-1}$ by attaching $n$-cells $e_\alpha^n$
		via maps
		\begin{equation}
			\varphi_\alpha : S^{n-1} \longrightarrow X^{n-1},
		\end{equation}
		where $S^{n-1}=\partial D^n$ is the boundary of an $n$-dimensional closed ball.
		This means that $X^n$ is the quotient of
		\begin{equation}
			X^{n-1}\;\sqcup\;\bigsqcup_\alpha D_\alpha^n
		\end{equation}
		under the identifications $x\sim \varphi_\alpha(x)$ for $x\in \partial D_\alpha^n$.
		The cell $e_\alpha^n$ is the image of the interior $D_\alpha^n-\partial D_\alpha^n$ under the quotient map, so that
		\begin{equation}
			X^n = X^{n-1} \cup \bigcup_\alpha e_\alpha^n,
		\end{equation}
		where each $e_\alpha^n$ is an open $n$-ball.
		
		\item For each cell $e_\alpha^n$, the attaching map $\varphi_\alpha$ restricts to a homeomorphism from $\partial D_\alpha^n$ onto its image in $X^{n-1}$.
		
		\item Terminate this inductive process at a finite stage, setting $X=X^n$ for some $n<\infty$.
	\end{enumerate}
\end{definition}

In particular, the closure of each cell in a regular cell complex is homeomorphic to a closed ball. Therefore, there will be no $1$-dimensional cell joined to the same vertex twice. In this paper, we consider only finite regular cell complexes and will henceforth omit  ``finite'' and ``regular'' when referring to them. 

For our coding-theoretic applications, we will not emphasize the usual point-set topological aspects of cell complexes. Instead, we view cell complexes in a more combinatorial way.

\begin{definition}[Cell poset]
	A \emph{cell poset} $P_X$ is a poset constructed from a cell complex $X$ by
    \begin{equation}
    P_X\coloneqq \{e^n_\alpha:e^n_\alpha \ \text{is an $n$-cell in} \ X\},
    \end{equation}
    and for $e^n_\alpha$ and $e^m_\beta$, we define $e^n_\alpha\leq e^m_\beta$ if and only if $e^n_\alpha$ is contained in the closure of $e^m_\beta$ in $X$. 
\end{definition}

From now on, we will slightly abuse notation and simply denote $P_X$ by $X$. We denote by $X(k)$ the set of $k$-dimensional cells in $X$, and call an element $\sigma \in X(k)$ a $k$-cell. For $\sigma,\tau \in X$ with $\sigma \leq \tau$ and $\dim(\sigma)<\dim(\tau)$, we write $\sigma<\tau$. In particular, when $\dim(\sigma)=\dim(\tau)-1$, we write $\sigma\lessdot\tau$, and say that $\sigma$ and $\tau$ are joined. For each cell $\sigma\in X$, we write $X_{\ge \sigma}$ and $X_{\le \sigma}$ for the sets of cells greater than and less than $\sigma$ in the cell poset, respectively.

In this paper, we restrict attention to cell complexes whose cellular incidence numbers are either $\pm 1$ or $0$, or equivalently, such that a $k$-cell is incident to any given $(k-1)$-cell at most once. This further implies that for any fixed $k$-cell $\sigma$ and $(k+2)$-cell $\pi$ with $\sigma<\pi$, there exist an even number of $(k+1)$-cells $\tau$ such that $\sigma \lessdot \tau \lessdot \pi$. Although all results can be established without this assumption, for simplicity, one can regard this property as part of the definition.

Although a finite cell poset is a purely combinatorial object, it admits a nontrivial topology, which will play a key role in this paper.

\begin{definition}[Alexandrov topology]
    Given a poset $X$, the \emph{Alexandrov topology} of  $X$ is the topology generated by the basis 
    \begin{equation}
       \mathcal{B} \coloneqq  \{X_{\ge \sigma} : \sigma \in X \}, 
    \end{equation}
    {i.e., a subset $U\subseteq X$ is open if and only if it is a union of sets in $\mathcal B$.}
\end{definition}

It is easy to verify the following proposition:
\begin{proposition}
    Let $X,Y$ be posets equipped with Alexandrov topology. A map $f:X\to Y$ is continuous if and only if it is an order-preserving map, i.e. whenever $\sigma, \sigma'\in X$ and $\sigma\leq \sigma'$, then $f(\sigma)\le f(\sigma')$. 
\end{proposition}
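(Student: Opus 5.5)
The plan is to prove the two directions separately, working directly from the basis $\mathcal{B}=\{X_{\ge\sigma}\}$. The one structural fact needed is that $X_{\ge\sigma}$ is the \emph{smallest} open set containing $\sigma$. Indeed, every open set $U$ is a union of basic sets $X_{\ge\tau}$, and each of these is upward closed by transitivity of $\le$. So every open set is an up-set. If $\sigma\in U$, then $U$ is an up-set containing $\sigma$, and hence $X_{\ge\sigma}\subseteq U$. Conversely, any up-set $U$ equals $\bigcup_{\sigma\in U}X_{\ge\sigma}$ and is therefore open. So the open sets are exactly the up-sets. I would record this characterization first, since both directions reduce to it.

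\emph{Order-preserving implies continuous.} Let $f$ be order-preserving and let $V\subseteq Y$ be open, so $V$ is an up-set. Suppose $\sigma\in f^{-1}(V)$ and $\sigma\le\sigma'$. Then $f(\sigma)\le f(\sigma')$, and since $f(\sigma)\in V$ and $V$ is upward closed, $f(\sigma')\in V$. Thus $f^{-1}(V)$ is an up-set, hence open, and $f$ is continuous.

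\emph{Continuous implies order-preserving.} Let $f$ be continuous and let $\sigma\le\sigma'$ in $X$. The set $V=Y_{\ge f(\sigma)}$ is open, so $f^{-1}(V)$ is open in $X$ and contains $\sigma$. By minimality of $X_{\ge\sigma}$, we get $X_{\ge\sigma}\subseteq f^{-1}(V)$. In particular $\sigma'\in f^{-1}(V)$, which means $f(\sigma')\ge f(\sigma)$.

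There is no real obstacle here. The only point needing care is the observation that open sets coincide with up-sets. This is what makes $X_{\ge\sigma}$ the minimal neighborhood of $\sigma$, and it relies on $\mathcal B$ actually being a basis. That holds because the sets $X_{\ge\sigma}$ cover $X$, and any intersection $X_{\ge\sigma}\cap X_{\ge\tau}$ is an up-set, hence equal to the union of the $X_{\ge\pi}$ over its elements $\pi$.
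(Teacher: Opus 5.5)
Your proof is correct: identifying the open sets as exactly the up-sets, so that $X_{\ge\sigma}$ is the smallest open neighborhood of $\sigma$, settles both directions. The paper gives no proof of its own and only says the proposition is easy to verify; your argument is the standard check it leaves to the reader.
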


To build qLDPC codes, the following requirement is common:
\begin{definition}[Sparse cell complex] \label{def:sparse}
	We say that a family of cell complexes $\{X_n\}_{n=1}^\infty$ is \emph{sparse} if for each $k$-cell in $X_n$, it is joined to at most a uniform constant number of $(k+1)$-cells and $(k-1)$-cells.
\end{definition}

\subsection{Sheaves and their operations}\label{sec:sheaves}

Throughout the paper, $\mathbb F_q$ is a finite field of characteristic 2, and all the vector spaces are finite dimensional over $\mathbb F_q$. This is for the sake of finding high dimensional product-expanding local codes in Section \ref{sec:GRS} (see also \cite{kalachev2023ProductExpansion,Dinur2006,KP2025Extendable}). Since $q$ is a power of $2$, the final results can be pulled back to $\mathbb{F}_2$ and realized in the system of qubits \cite{PhysRevA.95.012329,He2025addressable}.

\begin{definition}[Sheaf]
	Let $X$ be a poset with Alexandrov topology. A \emph{sheaf} $\mathcal F$ on $X$ is a functor from $X$ to vector spaces over $\mathbb{F}_q$, i.e., it assigns to each element $\sigma$ a vector space $\mathcal{F}_{\sigma}$, and to each $\sigma\leq \sigma'$  a linear map $\mathcal{F}_{\sigma, \sigma'}:\mathcal{F}_{\sigma}\rightarrow \mathcal{F}_{\sigma'}$ such that for all $\sigma\leq\sigma'\leq\sigma''$, we have $\mathcal{F}_{\sigma,\sigma''}=\mathcal{F}_{\sigma',\sigma''}\circ \mathcal{F}_{\sigma,\sigma'}$. 
\end{definition}

\begin{definition}[Pullback sheaf]
    Let $X, Y$ be two posets with Alexandrov topology, and $f:X\to Y$ is an order-preserving map. Suppose $\mathcal G$ is a sheaf on $Y$, then the pullback sheaf $f^*\mathcal G$ on $X$ is a sheaf defined by, for each $\sigma\in X$,
    \begin{align}
        (f^*\mathcal G)_\sigma\coloneqq\mathcal G_{f(\sigma)},
    \end{align}
    and when $\sigma\le \sigma'\in X$, the restriction map is
    \begin{align}
        (f^*\mathcal G)_{\sigma,\sigma'}\coloneqq \mathcal G_{f(\sigma),f(\sigma')}.
    \end{align}
\end{definition}

\begin{definition}[Tensor product of sheaves]
    Let $X$ be a poset equipped with two sheaves $\mathcal F$ and $\mathcal{G}$, the tensor product of the two sheaves $\mathcal F\otimes\mathcal{G}$ is defined by, for each $\sigma\in X$,
    \begin{equation}
(\mathcal F\otimes\mathcal{G})_\sigma\coloneqq\mathcal F_\sigma\otimes\mathcal{G}_\sigma,
\end{equation}
and when $\sigma\leq \sigma'\in X$, the restriction map is
\begin{equation}
(\mathcal F\otimes\mathcal{G})_{\sigma,\sigma'}\coloneqq\mathcal F_{\sigma,\sigma'}\otimes\mathcal{G}_{\sigma,\sigma'}.
\end{equation}
\end{definition}

\begin{definition}[External tensor product of sheaves]
Let $X$ and $Y$ be posets with sheaves $\mathcal F$ and $\mathcal G$ respectively. Let $p_X:X\times Y\to X$ and $p_Y:X\times Y\to Y$ be the projection maps. The external tensor product of $\mathcal F$ and $\mathcal G$ is the sheaf on $X\times Y$ defined by
\begin{align}
    \mathcal F\boxtimes \mathcal G
:=
p_X^*\mathcal F \otimes  p_Y^*\mathcal G.
\end{align}
\end{definition}

\subsection{Chain complexes and CSS codes}

\begin{definition}[Chain complex and cochain complex]
    A \emph{chain complex} $C_\bullet$ is a sequence of vector spaces over $\mathbb{F}_q$
    \begin{equation}
    C_\bullet=\cdots{\longleftarrow} C_{i-1}\overset{\partial_{i}}{\longleftarrow} C_i\overset{\partial_{i+1}}{\longleftarrow} C_{i+1}\longleftarrow\cdots
    \end{equation}
    such that $\partial_i\circ\partial_{i+1}=0$. The corresponding \emph{cochain complex} $C^\bullet$ is defined by applying the functor $\operatorname{Hom}(-,\mathbb{F}_q)$
    \begin{equation}
    C^\bullet=\cdots{\longrightarrow} C^{i-1}\overset{\delta^{i-1}}{\longrightarrow} C^i\overset{\delta^i}{\longrightarrow} C^{i+1}\longrightarrow\cdots
    \end{equation}
    where $C^i=\operatorname{Hom}(C_i,\mathbb{F}_q)$ is the dual vector space of $C_i$ and $\delta^i=\operatorname{Hom}(\partial_{i+1},\mathbb{F}_q)$.
\end{definition}

An element in $C^i$ is a function $C_i\to\mathbb F_q$. We define the pairing
\begin{align}
    \langle-,-\rangle: C^i\times C_i \rightarrow \mathbb F_q
\end{align}
to be the evaluation pairing. Then by definition, for $\alpha\in C^i$ and $x\in C_{i+1}$, $\langle\alpha,\partial x\rangle=\langle\delta\alpha,x\rangle$. Since all vector spaces in this paper are finite-dimensional, once a basis of each $C_i$ is fixed, we identify $C^i$ with $C_i$. Under this identification, $\delta^i$ is represented by the transpose matrix $\partial_{i+1}^T$.

\begin{definition}[Homology and cohomology]
    The $i$-th \emph{homology group} of a chain complex $C_\bullet$ and the $i$-th \emph{cohomology group} of a cochain complex $C^\bullet$ are defined by
    \begin{equation}
        H_i(C_\bullet)\coloneqq \ker \partial_i / \im \partial_{i+1},
        \qquad
        H^i(C^\bullet)\coloneqq \ker \delta^i / \im \delta^{i-1}.
    \end{equation}
    Elements of $\ker \partial_i$, $\im \partial_{i+1}$, $\ker \delta^i$, and $\im \delta^{i-1}$ are called
    \emph{$i$-cycles}, \emph{$i$-boundaries}, \emph{$i$-cocycles}, and \emph{$i$-coboundaries}, respectively.
\end{definition}

\begin{definition}[Chain map] \label{def:chain_map}
	Suppose $C_\bullet$ and $D_\bullet$ are chain complexes with boundary map $\partial_C$ and $\partial_D$, respectively. We say that $f_\# = (f_i: C_i \rightarrow D_i)$ is a \emph{chain map} if $f_\# \circ \partial_C = \partial_D \circ f_\#$ at each degree of the complexes, i.e., the following diagram is commutative
	\begin{equation}
	\begin{tikzcd}[column sep=1.5cm, row sep=0.7cm]
		\cdots \arrow[r, "\partial_{C,i+1}"] & C_i \arrow[r, "\partial_{C,i}"] \arrow[d,"f_i"] & C_{i-1} \arrow[d, "f_{i-1}"] \arrow[r, "\partial_{C,i-1}"] & \cdots \\
		\cdots \arrow[r, "\partial_{D,i+1}"] & D_i \arrow[r, "\partial_{D,i}"] & D_{i-1} \arrow[r, "\partial_{D,i-1}"] & \cdots
 	\end{tikzcd}
	\end{equation}
    
    Applying the functor $\operatorname{Hom}(-,\mathbb{F}_q)$ to a chain map $f_\#$ induces a \emph{cochain map} $f^\#:D^\bullet\to C^\bullet$. These in turn induce maps on homology and cohomology, denoted by $f_*:H_\bullet(C_\bullet)\to H_\bullet(D_\bullet)$ and $f^*:H^\bullet(D^\bullet)\to H^\bullet(C^\bullet)$.
\end{definition}

\begin{definition}\label{def:CSS}
	A \emph{Calderbank--Shor--Steane (CSS) code} is a stabilizer code given by three consecutive terms of a cochain complex $C^\bullet$ over a finite field $\mathbb{F}_q$: 
	\begin{align}
		C^{i-1}\overset{\delta^{i-1}}{\longrightarrow} C^i\overset{\delta^i}{\longrightarrow} C^{i+1}
	\end{align}
	with $\delta^i = H_Z$ and $\delta^{i-1} = H_X^T$ defining the stabilizer generators. The $X$-type and $Z$-type logical operators are represented by elements from the cohomology $H^i(C^\bullet)$ and the homology $H_i(C_\bullet)$, respectively. Although all linear spaces in this paper are constructed over $\mathbb F_q$, when discussing CSS code parameters we always regard them as $\mathbb F_2$-vector spaces by restriction of scalars, following the convention of Ref.~\cite{Dinur2024sheaf}. Thus, the number of physical qubits is $n=\dim_{\mathbb F_2} C^i$. The \emph{code dimension} $k$ which represents the number of encoded logical qubits, is given by $k = \dim_{\mathbb F_2} H^i(C^\bullet)$. The \emph{code distance} is given by $d = \min\{d_X, d_Z\}$, where
	\begin{align}
		d_X = \min_{x \in \ker \delta^i \setminus \im \delta^{i-1}} \vert x \vert, \quad 
		d_Z = \min_{z \in \ker \partial_i \setminus \im \partial_{i+1}} \vert z \vert,
	\end{align}
	where $|x|$ is the Hamming weight of $x$. The \emph{soundness} $\rho$ is given by $\rho=\min\{\rho_X,\rho_Z\}$, where
    \begin{align}
        \rho_X=\min_{x\in C^i\setminus\ker\delta^i} \frac{|\delta^ix|}{\dim_{\mathbb F_2} C^{i+1}}\cdot\frac{\dim_{\mathbb F_2} C^i}{\operatorname{dist}(x,\ker\delta^i)},\quad \rho_Z=\min_{z\in C_i\setminus\ker\partial_i} \frac{|\partial_iz|}{\dim_{\mathbb F_2} C_{i-1}}\cdot\frac{\dim_{\mathbb F_2} C_i}{\operatorname{dist}(z,\ker\partial_i)}.
    \end{align}
\end{definition}

As an important example of a (co)chain complex, suppose $X$ is a cell complex equipped with a sheaf $\mathcal F$, the corresponding sheaved cellular chain complex is defined to be the direct sum of vector spaces
\begin{equation}
    C^i(X,\mathcal F)\coloneqq \bigoplus_{\sigma\in X(i)}\mathcal F_\sigma,
\end{equation}
with the coboundary map $\delta:  C^i(X, \mathcal{F}) \to C^{i+1}(X, \mathcal{F})$ given by
\begin{equation}\label{eq: sheaved coboundary map}
(\delta \alpha)(\sigma') = \sum_{\sigma \lessdot \sigma'} \mathcal{F}_{\sigma,\sigma'}(\alpha(\sigma)),
\end{equation}
where $\alpha\in C^i(X,\mathcal F)$ and $\sigma'\in X(i+1)$. The boundary map $\partial_{i+1}:C_{i+1}(X,\F)\rightarrow C_{i}(X,\F)$ as the transpose of $\delta^i$ is given by
\begin{equation}\label{eq: sheaved boundary map}
(\partial x)(\sigma'') = \sum_{\sigma\gtrdot \sigma ''} \mathcal{F}_{\sigma'',\sigma}^T(x(\sigma)),
\end{equation}
where $x\in C_{i+1}(X,\mathcal F)$ and $\sigma''\in X(i)$.

\subsection{Logical multi-controlled-$Z$ gates}

\begin{definition}[Cohomological invariant form]\label{def:invariant_poly}

Let $C^{(1)},\cdots, C^{(r)}$ be $r$ vector spaces from $r$ cochain complexes, and let $Z^{(i)}$ and $B^{(i)}$ denote the subspaces of cocycles and coboundaries in $C^{(i)}$, respectively, for each $i\in[r]$. A tensor
$T:C^{(1)}\times\cdots\times C^{(r)}\to\mathbb F_q$ is called a \emph{cohomological invariant form} if for any $x_i\in Z^{(i)}$ and $y_i\in B^{(i)}$, $i\in[r]$,
\begin{align}
    T(x_1+y_1,\dots,x_r+y_r)=T(x_1,\dots,x_r).
\end{align}
Then $T$ defines a diagonal unitary on physical qubits:
\begin{align}\label{eq:invariant_poly}
	U_T \coloneqq \sum_{z_1\in C^{(1)},\dots,z_r\in C^{(r)}}
	(-1)^{\tr_{\mathbb{F}_q/\mathbb{F}_2}(T(z_1,\dots,z_r))}
	\ket{z_1,\dots,z_r}\bra{z_1,\dots,z_r},
\end{align}
which induces a logical circuit composed of $\mathrm{C}^{r-1}Z$ gates on the CSS codes built from $C^{(1)},\dots, C^{(r)}$.
\end{definition}

If $T$ is a sparse tensor, then $U_T$ implements a constant-depth $\mathrm{C}^{r-1}Z$ circuit. As mentioned in the beginning of Section \ref{sec:sheaves}, its implementation on qubits consists of multi-qubit gates that act only across code blocks, and each physical qubit is acted on by constant times. 

\subsection{Product-expanding classical codes and extendability}

\begin{definition}[Dual tensor product code]
	Given classical codes $\C_1 \subseteq \mathbb{F}_q^{n_1},\ldots,\C_t \subseteq \mathbb{F}_q^{n_t}$, their \emph{tensor product code} is simply defined by $\C_1 \otimes \cdots \otimes \C_t$. Let 
	\begin{align}
		\C^{(i)} \coloneq \mathbb{F}_q^{n_1} \otimes \cdots \otimes \C_i \otimes \cdots \otimes \mathbb{F}_q^{n_t}.
	\end{align}
	We also consider
	\begin{align}
		\mathcal{C}_1 \boxplus \cdots \boxplus \mathcal{C}_t \coloneq \mathcal{C}^{(1)} + \cdots + \mathcal{C}^{(t)}
	\end{align}
	One can check by basic linear algebra that $\C_1 \boxplus \cdots \boxplus \C_t = ( \C_1^\perp \otimes \cdots \otimes \C_t^\perp  )^\perp$ and hence it is called \emph{dual tensor product code}.
\end{definition}

\begin{definition}[Hamming weight]
	Given a tensor $x \in \mathbb{F}_q^{n_1} \otimes \cdots \otimes \mathbb{F}_q^{n_t}$, the \emph{Hamming weight} $\vert x \vert$ of $x$ is simply defined as the number of nonzero components of $x$. We also define the \emph{normalized Hamming weight}:
	\begin{align}
		\Vert x \Vert = \frac{1}{\prod_{j \in [t]} n_j}\vert x \vert.
	\end{align}
	In addition, relative to different directions, we have
	\begin{enumerate}
		\item $\vert x \vert_i$ is defined as the number of all nonzero vectors $x(r_1,...,r_{i-1},\text{ - },r_{i+1},...,r_{l})$, or lines in the direction labeled by $i$. Obviously, $\vert x \vert_i \leq \vert x \vert \leq n_i \vert x \vert_i$ for any $x$.
		
		\item Let $\mathcal{N}_i  = \frac{1}{n_i} \prod_{j \in [l]} n_j$. Then $\Vert x \Vert_i \coloneq \frac{1}{\mathcal{N}_i} \vert x \vert_i$ and $\frac{1}{n_i} \Vert x \Vert_i \leq \Vert x \Vert \leq \Vert x \Vert_i$ for any $x$. 
	\end{enumerate}
\end{definition}

\begin{definition}[Product expansion]
	Let $(\C_i)$ be linear codes of length $n_i$. Given a real nonnegative number $\rho$, we say that $(\C_i)$ is \emph{$\rho$-product-expanding} if for all $c \in \C_1 \boxplus \cdots \boxplus \C_l$, there exists $c_1 \in \C^{(i)}$ such that $c = \sum_i c_i$ and 
	\begin{align}
		\Vert c \Vert \geq \rho \sum_{i \in [t]} \Vert c_i \Vert_i, \ \text{ 	or equivalently, } \
		\vert c \vert \geq \rho \sum_{i \in [t]} n_i \vert c_i \vert_i.
	\end{align}
\end{definition}

Suppose $(\C_i)$ is $\rho$-product-expanding. It is proved in Ref.~\cite{PK2023RobustlyTestable} that any subcollection of the codes is also product-expanding. Extremely, when we restrict to each single code $\C_i$, this indicates that its distance is $\geq \rho n_i$.

We also borrow definitions and results about extendability from Ref.~\cite{KP2025Extendable}. For simplicity, let $n_i \equiv n$. Given the grid $[n]^t$ and the $i$-th direction, a line $\ell$ is simply a subset consisting of all points in that direction:
\begin{align}
	\{a_1\} \times \cdots \times \{a_{i-1}\} \times [n] \times \{a_{i+1}\} \times \cdots \times \{a_t\}, \quad a_j \in [n]. 
\end{align}
Let $M \subseteq [n]^t$, the collection $L(M)$ contains all lines in $M$. Given codes $\C_1,\ldots,\C_t$, we also define for any line $\ell$ in the $i$-th direction,
\begin{align}
	\C_\ell \coloneq \{c \in \mathbb{F}_q^{[n]^t}: \supp(c) \subset \ell, c \vert_\ell \in \C_i \}.
\end{align}

\begin{definition}[Inner-generation] \label{def:inner_generated}
	A set $M \subseteq [n]^t$ is \emph{inner-generated} for $\C_1 \boxplus \cdots \boxplus \C_t$ if for any codeword $c$ with $\supp(c) \subseteq M$, $c \in \sum_{\ell \in L(M)} \C_\ell$.
\end{definition}

\begin{definition}[Extendability] \label{def:extendable}
	A set $M \subseteq [n]^t$ is \emph{extendable} for $\C_1 \otimes \cdots \otimes \C_t$ if every vector $c_M \in \mathbb{F}_q^M$ satisfying $c_M \vert_\ell \in \C_i$ for any $i$-th directional line in $L(M)$ can be extended to a global codeword $c \in \C_1 \otimes \cdots \otimes \C_t$.
\end{definition}

\begin{proposition}[\cite{KP2025Extendable}] \label{lemma:inner-g/ext}
	A set $M \in [n]^t$ is inner-generated for $\C_1 \boxplus \cdots \boxplus \C_t$ if and only if it is extendable in $\C_1^\perp \otimes \cdots \otimes \C_t^\perp$.
\end{proposition}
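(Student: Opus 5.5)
The plan is to recast both conditions as equalities between subspaces of $\mathbb{F}_q^M$ and observe that the two equalities are annihilators of one another under the standard bilinear pairing $\langle a,b\rangle=\sum_{p\in M}a(p)b(p)$. This pairing is nondegenerate on the finite-dimensional space $\mathbb{F}_q^M$, so $U\mapsto U^{\perp}$ (taken inside $\mathbb{F}_q^M$) is an inclusion-reversing bijection on subspaces. I identify $\mathbb{F}_q^M$ with the vectors in $\mathbb{F}_q^{[n]^t}$ supported in $M$ by extending by zero.

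\emph{Inner-generation side.} Define
\begin{align}
A \coloneq \{c\in \C_1\boxplus\cdots\boxplus\C_t : \supp(c)\subseteq M\}, \qquad B \coloneq \sum_{\ell\in L(M)}\C_\ell .
\end{align}
Both are subspaces of $\mathbb{F}_q^M$. We have $B\subseteq A$ always, since each $\C_\ell$ with $\ell$ in direction $i$ lies in $\C^{(i)}$ and is supported on $\ell\subseteq M$. By Definition~\ref{def:inner_generated}, $M$ is inner-generated exactly when $A=B$.

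\emph{Extendability side.} Let $D\coloneq \C_1^\perp\otimes\cdots\otimes\C_t^\perp$ and let $\pi:D\to\mathbb{F}_q^M$ be restriction to $M$. Define
\begin{align}
E\coloneq\{c_M\in\mathbb{F}_q^M : c_M|_\ell\in\C_i^\perp \text{ for every $i$-directional } \ell\in L(M)\}.
\end{align}
We have $\pi(D)\subseteq E$, because the restriction of a tensor codeword to any full line in direction $i$ lies in $\C_i^\perp$. By Definition~\ref{def:extendable}, $M$ is extendable exactly when $\pi(D)=E$.

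\emph{Key computation.} I will show that $E=B^\perp$ and $\pi(D)=A^\perp$, with both annihilators taken in $\mathbb{F}_q^M$.

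For $E=B^\perp$: let $\ell\in L(M)$ be a line in direction $i$. A vector $c_M$ is orthogonal to $\C_\ell$ if and only if $c_M|_\ell\perp\C_i$, that is, $c_M|_\ell\in\C_i^\perp$. Ranging over all $\ell\in L(M)$ gives $E=B^\perp$.

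For $\pi(D)=A^\perp$: take $a\in\mathbb{F}_q^M$ and extend it by zero. Then $\langle a,\pi(d)\rangle=\langle a,d\rangle$ for every $d\in D$, so $\pi(D)^\perp=\mathbb{F}_q^M\cap D^\perp$. The duality noted after the definition of the dual tensor product code gives $D^\perp=\C_1\boxplus\cdots\boxplus\C_t$. Hence $\pi(D)^\perp=A$, and taking annihilators once more yields $\pi(D)=A^\perp$ by nondegeneracy.

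\emph{Conclusion.} Since $B\subseteq A$ is equivalent to $A^\perp\subseteq B^\perp$, we get
\begin{align}
A=B \iff A^\perp=B^\perp \iff \pi(D)=E .
\end{align}
This is exactly the claimed equivalence between inner-generation for $\C_1\boxplus\cdots\boxplus\C_t$ and extendability in $\C_1^\perp\otimes\cdots\otimes\C_t^\perp$. The only points that need care are the zero-extension identification in the $\pi(D)^\perp$ computation and the use of the identity $(\C_1^\perp\otimes\cdots\otimes\C_t^\perp)^\perp=\C_1\boxplus\cdots\boxplus\C_t$; neither is a genuine obstacle.
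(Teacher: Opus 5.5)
The paper gives no proof of this proposition; it is quoted from \cite{KP2025Extendable}, so there is no in-paper argument to compare against. Your proof is correct and complete. The two identities you establish carry the whole argument:
\begin{itemize}
\item[(i)] $E=B^\perp$, i.e.\ the local code of $\C_1^\perp\otimes\cdots\otimes\C_t^\perp$ on $M$ is the annihilator of $\sum_{\ell\in L(M)}\C_\ell$;
\item[(ii)] $\pi(D)^\perp=A$, i.e.\ puncturing $D$ to $M$ is dual to shortening $D^\perp=\C_1\boxplus\cdots\boxplus\C_t$ to $M$.
\end{itemize}
This puncturing/shortening duality is the standard basis for the equivalence. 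Combined with the automatic inclusions $B\subseteq A$ and $\pi(D)\subseteq E$, and the double-annihilator property of the nondegenerate pairing on $\mathbb{F}_q^M$, they give the claim.

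Your argument uses nothing about the particular codes or about $M$ (e.g.\ $\epsilon$-closedness). So replacing each $\C_i$ by $\C_i^\perp$ immediately yields the form the paper actually uses in Corollaries~\ref{coro:GRS_extendable_2D} and~\ref{coro:GRS_extendable}: extendability in $\C_1\otimes\cdots\otimes\C_t$ is equivalent to inner-generation for $\C_1^\perp\boxplus\cdots\boxplus\C_t^\perp$.
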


\begin{definition}[$\epsilon$-closed set] \label{def:epsilon_close}
	The \emph{$\epsilon$-closure} of a set $M \subseteq [n]^t$ is the minimum set $M_\epsilon$ containing $M$ such that for any line $\ell$, either $\ell \subseteq M_\epsilon$ or $\vert \ell \cap M_\epsilon \vert < \epsilon n$. If $M = M_\epsilon$, it is \emph{$\epsilon$-closed}.
\end{definition}

\begin{lemma}[\cite{KP2025Extendable}] \label{lemma:inner-g/rho}
	Let $\C_1,\ldots,\C_t$ be codes in $\mathbb{F}_q^n$. If for any $\epsilon$-closed subset $M \subseteq [n]^t$ is inner-generated for $\C_1 \boxplus \cdots \boxplus \C_t$, then
	\begin{align}
		\rho(\C_1,\ldots,\C_t) \geq \frac{\epsilon^t}{t(2^t+1)^t}.
	\end{align} 
\end{lemma}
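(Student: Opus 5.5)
The plan is to reduce product expansion to a purely combinatorial estimate on the size of $\epsilon$-closures, and then read off the decomposition from inner-generation. Fix $c\in \C_1\boxplus\cdots\boxplus\C_t$ and let $S=\supp(c)$. Set $M\coloneqq S_\epsilon$, the $\epsilon$-closure of $S$. By hypothesis $M$ is inner-generated, so by Definition~\ref{def:inner_generated} we can write $c=\sum_{\ell\in L(M)} c_\ell$ with $c_\ell\in\C_\ell$.

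Grouping the lines by direction, let $c_i\coloneqq\sum_{\ell\in L(M),\,\ell\parallel i} c_\ell\in\C^{(i)}$. Then $c=\sum_i c_i$ and $|c_i|_i\le \#\{\ell\in L(M):\ell\parallel i\}$. Each line in direction $i$ contributes $n$ points of $M$, and distinct lines in the same direction are disjoint, so
\begin{align}
\sum_{i\in[t]} n\,|c_i|_i \;\le\; n\,|L(M)| \;\le\; t\,|M|.
\end{align}
It therefore suffices to prove the closure bound
\begin{align}
|S_\epsilon|\;\le\;\Big(\frac{2^t+1}{\epsilon}\Big)^t |S| \qquad\text{for all } S\subseteq[n]^t,
\end{align}
since then $|c|=|S|\ge \frac{\epsilon^t}{t(2^t+1)^t}\sum_i n|c_i|_i$, which is exactly the claimed value of $\rho$.

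For the closure bound, first dispose of the dense case. If $|S|\ge (\epsilon/(2^t+1))^t n^t$, the inequality holds trivially because $|S_\epsilon|\le n^t$. Note also that $[n]^t$ is itself $\epsilon$-closed, so this case is consistent with the hypothesis.

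In the sparse case I would build an explicit $\epsilon$-closed superset $M'\supseteq S$ of the required size; minimality of the closure then gives $S_\epsilon\subseteq M'$. The construction is by induction on the dimension $t$. First, single out the ``heavy'' hyperplanes in each direction, namely those containing at least roughly $\epsilon' n^{t-1}$ points of $S$, for a threshold $\epsilon'$ of order $\epsilon/(2^t+1)$. There are few of them, by Markov's inequality. Next, take $M'$ to be the union of $S$ together with the product ``boxes'' spanned by the heavy coordinates in every subset of directions; this is where the $2^t$ subsets of coordinates enter and produce the factor $2^t+1$. Finally, check that any line meeting $M'$ in at least $\epsilon n$ points must lie in one of these boxes, so that $M'$ is $\epsilon$-closed.

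The main obstacle is this last verification, together with the accounting that the total size of the boxes is at most $((2^t+1)/\epsilon)^t|S|$. A line could in principle collect its $\epsilon n$ points by passing through many different boxes or through scattered points of $S$. To rule this out, the plan is to use the pigeonhole principle on the $2^t+1$ possible sources of a line's points: $S$ itself, or one of the $2^t$ box types. This forces at least $\epsilon n/(2^t+1)$ points from a single source, and that source must then make the line's coordinates heavy, so the line lies in a box. Iterating over the $t$ directions is what yields the exponent $t$.
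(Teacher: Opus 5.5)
Your reduction is correct. Inner-generation of $M=S_\epsilon$, grouping the lines of $L(M)$ by direction, and the bound $n|L(M)|\le t|M|$ together give $\rho\ge |S|/(t|S_\epsilon|)$. So everything rests on the closure bound $|S_\epsilon|\le((2^t+1)/\epsilon)^t|S|$. The paper only imports this lemma without proof, so that bound is the part you must actually supply.

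The gap is in your construction of the closed superset $M'$. You measure heaviness only on hyperplanes. Any product box formed from the heavy coordinate sets $H_i$ is either $[n]^t$ or lies inside the union of the heavy hyperplanes, so $M'\subseteq S\cup\bigcup\{\text{heavy hyperplanes}\}$. For $t\ge 3$ this set need not be $\epsilon$-closed. Let $S$ be $\lceil \epsilon n\rceil$ points on one line $\ell_0\subseteq[n]^3$. For large $n$, no plane contains more than $\lceil\epsilon n\rceil<\epsilon' n^2$ points of $S$. So there are no heavy hyperplanes and $M'=S$, yet $|\ell_0\cap M'|\ge\epsilon n$ and $\ell_0\not\subseteq M'$.

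Your pigeonhole step fails exactly when the dominant source is $S$ itself. Having $\epsilon n/(2^t+1)$ points of $S$ on a line says nothing about any hyperplane through it. For $t=2$ your construction does work, because hyperplanes are lines. Invoking ``induction on $t$'' does not repair the higher-dimensional case as described, because your boxes never record concentrations of $S$ on flats of intermediate dimension.

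The missing idea is to measure heaviness on axis-parallel flats of every dimension, with thresholds that scale geometrically.

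\textbf{Construction.} For $T\subseteq[t]$ and $a\in[n]^T$, let $F_{T,a}=\{x: x_T=a\}$, a flat of dimension $k=t-|T|$. Call it heavy if $|S\cap F_{T,a}|\ge \delta^k n^k$; in particular, the heavy $0$-flats are exactly the points of $S$. Let $M'$ be the union of all heavy flats.

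\textbf{Size.} Flats with the same $T$ are disjoint, so $|M'|\le\sum_{k=0}^t\binom{t}{k}\delta^{-k}|S|=(1+\delta^{-1})^t|S|$.

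\textbf{Closedness.} Let $\ell$ be a line in direction $j$ with $\ell\not\subseteq M'$. Every heavy flat meeting $\ell$ must have $j\in T$, and it meets $\ell$ in a single point. Fix $T\ni j$. The flats of this type meeting $\ell$ are the direction-$j$ slices of the $(k+1)$-flat $G_T\supseteq\ell$ obtained by fixing only the coordinates in $T\setminus\{j\}$ to their values on $\ell$. If at least $\delta n$ of these slices were heavy, then $|S\cap G_T|\ge\delta^{k+1}n^{k+1}$. That would make $G_T$ heavy and force $\ell\subseteq M'$, a contradiction. Summing over the $2^{t-1}$ choices of $T\ni j$ gives $|\ell\cap M'|<2^{t-1}\delta n$.

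\textbf{Conclusion.} Taking $\delta=\epsilon/2^{t-1}$ makes $M'$ $\epsilon$-closed. Using $\epsilon\le 1$, this gives $|S_\epsilon|\le|M'|\le((2^{t-1}+1)/\epsilon)^t|S|$, which is stronger than required. No separate dense case is needed, since if $[n]^t$ itself is heavy then $M'=[n]^t$.
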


\begin{definition}[Khatri--Rao product] \label{def:KR_product}
	Let $A \in \mathbb{F}_q^{m_A \times n}$, $B \in \mathbb{F}_q^{m_B \times n}$. We denote by $A \ast B$ the $m_A m_B \times n$ matrix obtained from their column-wise tensor product, as a special case of \emph{Khatri--Rao product}. 
\end{definition}

Particularly, let $\C_1,\C_2$ be classical codes of length $n$. Then $\C_1 \ast \C_2$ denotes the code generated by the Khatri--Rao product of any generator matrices of $\C_1$ and $\C_2$. The definition is independent of the choice of generator matrices. Equivalently,  $\C_1 \ast \C_2$ is the span of all componentwise products $c_1*c_2$ with $c_i\in C_i$. 
In this sense, this code product is also called \emph{Schur product} in the literature~\cite{randriambololona2014}.


\section{Cohomological invariant forms on sheaf codes}\label{sec:cup and cap on sheaf}

Cup and cap products are fundamental operations on (co)chain complexes in the construction of cohomological invariant forms: the Leibniz rules ensure that they are well-defined on (co)homology classes. Moreover, the natural pairing between chains and cochains provides the basic linear cohomological invariant form, which can be extended to multi-linear forms by using cup and cap products. Our transversal logical $\mCZ$ gates on almost-good qLDPC codes and qLTCs are induced from invariant forms built in this way.

\subsection{Cup and cap products on sheaved simplicial complexes}

In this section, we restate the definition of cup and cap products on sheaved simplicial complexes constructed in Ref.~\cite{Li2025Poincare} for completeness.

\begin{definition}[Cup product]
	Given a simplicial complex $X$, with two sheaves $\mathcal F,\mathcal G$,  we define the \emph{cup product}
	\begin{align}
		\smile: C^p(X,\mathcal{F}) \times C^q(X,\mathcal{G}) \longrightarrow C^{p+q}(X,\mathcal{F} \otimes \mathcal{G})
	\end{align}
	as follows: for $\alpha\in C^p(X,\F)$, $\beta\in C^{q}(X,\mathcal{G})$, $\sigma=[ v_0,...,v_{p+q}]$ a $(p+q)$-simplex, let ${}_p\sigma=[v_0,v_1,...,v_p]$ be the \emph{former p-face} and $\sigma_q=[v_p,\cdots, v_{p+q}]$ be the \emph{latter q-face},
	\begin{align}
		(\alpha \smile \beta)(\sigma) \coloneqq  \mathcal{F}_{{}_p\sigma, \sigma}(\alpha( {}_p\sigma ) ) \otimes \mathcal{G}_{\sigma_q, \sigma}(\beta( \sigma_q ) ) .
	\end{align}
\end{definition}

\begin{proposition}[\cite{Li2025Poincare}, Proposition~4.3]\label{prop:cup_Leibniz}
	The cup product satisfies the \emph{Leibniz rule}: for $\alpha\in C^p(X,\mathcal F)$ and $\beta\in C^q(X,\mathcal G)$,
	\begin{align}
		\delta(\alpha \smile \beta) = (\delta \alpha) \smile \beta + \alpha \smile (\delta \beta). 
	\end{align}
\end{proposition}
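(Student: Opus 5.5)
We verify the identity by evaluating both sides on an arbitrary $(p+q+1)$-simplex $\sigma=[v_0,\dots,v_{p+q+1}]$ and comparing terms. Since $\mathbb F_q$ has characteristic $2$, all signs can be ignored; throughout we write $\sigma^{(j)}$ for the face of $\sigma$ obtained by deleting $v_j$.

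\textbf{The left-hand side.} By \eqref{eq: sheaved coboundary map},
\begin{align}
(\delta(\alpha\smile\beta))(\sigma)=\sum_{j=0}^{p+q+1}(\F\otimes\mathcal G)_{\sigma^{(j)},\sigma}\bigl((\alpha\smile\beta)(\sigma^{(j)})\bigr).
\end{align}
We expand each summand using the definition of the cup product. Functoriality of $\F$ and $\mathcal G$ lets us compose restriction maps. Each term therefore becomes $\F_{a,\sigma}(\alpha(a))\otimes\mathcal G_{b,\sigma}(\beta(b))$, where $a$ is the former $p$-face of $\sigma^{(j)}$ and $b$ is its latter $q$-face. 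The faces depend on where $j$ falls.
\begin{itemize}
\item For $j\le p$: $a=[v_0,\dots,\widehat{v_j},\dots,v_{p+1}]$ and $b=[v_{p+1},\dots,v_{p+q+1}]=\sigma_q'$, the latter $q$-face ending at the top vertex.
\item For $j\ge p+1$: $a=[v_0,\dots,v_p]={}_p\sigma$ and $b=[v_p,\dots,\widehat{v_j},\dots,v_{p+q+1}]$.
\end{itemize}

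\textbf{The right-hand side.} We expand $((\delta\alpha)\smile\beta)(\sigma)$. It equals $\F_{{}_{p+1}\sigma,\sigma}\bigl(\sum_{j=0}^{p+1}\F_{{}_{p+1}\sigma^{(j)},{}_{p+1}\sigma}(\alpha(\cdot))\bigr)\otimes\mathcal G_{\sigma_q',\sigma}(\beta(\sigma_q'))$. After applying functoriality, this gives $p+2$ terms indexed by $j=0,\dots,p+1$. Similarly, $(\alpha\smile\delta\beta)(\sigma)$ yields $q+2$ terms indexed by $j=p,\dots,p+q+1$, each deleting $v_j$ from the latter $(q+1)$-face $[v_p,\dots,v_{p+q+1}]$.

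\textbf{Matching the terms.} The terms $j\le p$ of the first sum match the left-hand terms $j\le p$ exactly. The terms $j\ge p+1$ of the second sum match the left-hand terms $j\ge p+1$ exactly. Two terms are left over: $j=p+1$ in the first sum and $j=p$ in the second. Both equal
\begin{align}
\F_{{}_p\sigma,\sigma}(\alpha({}_p\sigma))\otimes\mathcal G_{\sigma_q',\sigma}(\beta(\sigma_q')).
\end{align}
In the first sum, deleting $v_{p+1}$ from $[v_0,\dots,v_{p+1}]$ leaves ${}_p\sigma$, and the $\beta$ factor is evaluated on $\sigma_q'$. In the second sum, deleting $v_p$ from $[v_p,\dots,v_{p+q+1}]$ leaves $\sigma_q'$, and the $\alpha$ factor is evaluated on ${}_p\sigma$. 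The two leftover terms therefore cancel in characteristic $2$; in general characteristic they would cancel via the signs $(-1)^{p+1}$ and $(-1)^p$.

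\textbf{Where care is needed.} The only delicate point is bookkeeping of restriction maps. Each composite such as $\F_{{}_{p+1}\sigma,\sigma}\circ\F_{a,{}_{p+1}\sigma}$ must be identified with $\F_{a,\sigma}$, and likewise for $\mathcal G$ and for the tensor sheaf. This is exactly the functoriality axiom together with the definition $(\F\otimes\mathcal G)_{\tau,\sigma}=\F_{\tau,\sigma}\otimes\mathcal G_{\tau,\sigma}$. Once every term is written in the normal form $\F_{a,\sigma}(\alpha(a))\otimes\mathcal G_{b,\sigma}(\beta(b))$, the matching above is purely combinatorial.
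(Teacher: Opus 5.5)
Your proof is correct and follows essentially the same route as the paper: evaluate on a $(p+q+1)$-simplex, split the codimension-one faces according to whether the deleted vertex has index $\le p$ or $\ge p+1$, and use functoriality to put every term in the normal form $\mathcal F_{a,\sigma}(\alpha(a))\otimes\mathcal G_{b,\sigma}(\beta(b))$. The paper completes both sums by adding the common term $\mathcal F_{{}_p\sigma,\sigma}(\alpha({}_p\sigma))\otimes\mathcal G_{\sigma_q,\sigma}(\beta(\sigma_q))$ twice, which is the same as your observation that the two leftover terms cancel in characteristic $2$.
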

\begin{proof}
For a $(p+q+1)$-cell $\sigma = [v_0,...,v_{p+q+1}]$,
\begin{align}
& ( \delta(\alpha \smile \beta) )(\sigma) 
		= \sum_{\sigma' \lessdot \sigma} (\mathcal{F} \otimes \mathcal{G})_{\sigma',\sigma}[ (\alpha \smile \beta)(\sigma') ] \notag \\
		= & \sum_{i = 0}^p (\mathcal{F} \otimes \mathcal{G})_{\sigma \setminus v_i,\sigma}[ 
		\mathcal{F}_{{}_{p+1}\sigma \setminus v_i, \sigma \setminus v_i}(\alpha( {}_{p+1}\sigma \setminus v_i ) ) \otimes \mathcal{G}_{\sigma_q, \sigma \setminus v_i}(\beta( \sigma_q ) )  ] \notag \\
		& + \sum_{i = p+1}^{p+q+1} (\mathcal{F} \otimes \mathcal{G})_{\sigma \setminus v_i,\sigma}[ 
		\mathcal{F}_{{}_p\sigma, \sigma \setminus v_i}(\alpha( {}_p\sigma ) ) \otimes \mathcal{G}_{\sigma_{q+1} \setminus v_i, \sigma \setminus v_i }(\beta( \sigma_{q+1} \setminus v_i) )  ] \\ 
		= & \sum_{i = 0}^p 
		\mathcal{F}_{{}_{p+1}\sigma \setminus v_i, \sigma }(\alpha( {}_{p+1}\sigma \setminus v_i ) ) \otimes \mathcal{G}_{\sigma_q, \sigma }(\beta( \sigma_q ) ) 
		+ \sum_{i = p+1}^{p+q+1} 
		\mathcal{F}_{{}_p\sigma, \sigma }(\alpha( {}_p\sigma ) ) \otimes \mathcal{G}_{\sigma_{q+1} \setminus v_i, \sigma }(\beta( \sigma_{q+1} \setminus v_i) )  \notag \\ 
		= & \sum_{i = 0}^{p+1}
		\mathcal{F}_{{}_{p+1}\sigma \setminus v_i, \sigma }(\alpha( {}_{p+1}\sigma \setminus v_i ) ) \otimes \mathcal{G}_{\sigma_q, \sigma }(\beta( \sigma_q ) ) 
		+ \sum_{i = p}^{p+q+1} 
		\mathcal{F}_{{}_p\sigma, \sigma }(\alpha( {}_p\sigma ) ) \otimes \mathcal{G}_{\sigma_{q+1} \setminus v_i, \sigma }(\beta( \sigma_{q+1} \setminus v_i) ), \notag
\end{align} 
where in the last line, we add 
\begin{equation}
\mathcal{F}_{{}_{p}\sigma, \sigma }(\alpha( {}_{p}\sigma ) ) \otimes \mathcal{G}_{\sigma_q, \sigma }(\beta( \sigma_q ) )
\end{equation}
twice and hence the equality holds. On the other hand,
\begin{align}
	\begin{aligned}
		( (\delta \alpha) \smile \beta) (\sigma) 
		= & \mathcal{F}_{{}_{p+1}\sigma, \sigma}( (\delta\alpha) ( {}_{p+1}\sigma ) ) \otimes \mathcal{G}_{\sigma_q, \sigma}(\beta( \sigma_q ) ) \\
		= & \sum_{i = 0}^{p+1} \mathcal{F}_{{}_{p+1}\sigma, \sigma}\, \mathcal{F}_{{}_{p+1}\sigma \setminus v_i, {}_{p+1}\sigma} (\alpha ( {}_{p+1}\sigma \setminus v_i ) ) \otimes \mathcal{G}_{\sigma_q, \sigma}(\beta( \sigma_q ) ) \\
		=  & \sum_{i = 0}^{p+1} \mathcal{F}_{{}_{p+1}\sigma \setminus v_i, \sigma} (\alpha ( {}_{p+1}\sigma \setminus v_i ) ) \otimes \mathcal{G}_{\sigma_q, \sigma}(\beta( \sigma_q ) ) 
	\end{aligned}
\end{align}
Similarly,
\begin{align}
	(\alpha \smile (\delta \beta))(\sigma) = 
	\sum_{i = p}^{p+q+1} 
	\mathcal{F}_{{}_p\sigma, \sigma }(\alpha( {}_p\sigma ) ) \otimes \mathcal{G}_{\sigma_{q+1} \setminus v_i, \sigma }(\beta( \sigma_{q+1} \setminus v_i) )
\end{align}
and this finishes the proof.
\end{proof}

Although the cap product defined below is not directly employed in the present paper, it is naturally dual to the cup product, and we expect it to play a useful role in future developments. Indeed, in Ref.~\cite{Li2025Poincare}, the Poincar\'e duality map for quantum codes was realized through the following cap product. This structure is important for estimating the number of logical transversal $\CZ$ gates on good qLDPC codes constructed there.

\begin{definition}[Cap product]
Given a simplicial complex $X$, with two sheaves $\mathcal F,\mathcal G$, we define the \emph{cap product}
\begin{equation}
\frown:C^p(X,\mathcal{F})\times C_{p+q}(X,\mathcal{F}\otimes\mathcal{G})\longrightarrow C_q(X,\mathcal G)
\end{equation}
as follows: for $\alpha \in C^{p}(X,\mathcal{F}), x\in C_{p+q}(X,\mathcal{F}\otimes\mathcal{G})$,
\begin{align}
	\alpha\frown x\coloneqq \sum_{\sigma\in X(p+q)}\mathcal{G}^T_{\sigma_q,\sigma}\langle\mathcal{F}_{{}_p\sigma,\sigma}\alpha({}_p\sigma),x(\sigma)\rangle_{\mathcal{F}}\cdot \sigma_q.
\label{eq:cap_other_definition}
\end{align}
\end{definition}

\begin{proposition}
    For $\alpha\in C^p(X,\mathcal{F})$ and $x\in C_{p+q+1}(X,\mathcal{F}\otimes\mathcal{G})$, we have $\alpha\frown x\in C_{q+1}(X,\mathcal{G})$ and  $\partial(\alpha\frown x)=(\delta\alpha)\frown x+\alpha\frown (\partial x)$
\end{proposition}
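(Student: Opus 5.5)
The plan is to deduce the Leibniz rule for the cap product from the Leibniz rule for the cup product, Proposition~\ref{prop:cup_Leibniz}. The link between the two is an adjunction identity with respect to the evaluation pairings, and nondegeneracy of the pairing then finishes the argument. The degree claim is immediate from the definition: if $x\in C_{p+q+1}(X,\mathcal F\otimes\mathcal G)$, then the formula \eqref{eq:cap_other_definition} with $q$ replaced by $q+1$ sums over $(p+q+1)$-simplices $\sigma$. It places the value $\mathcal G^T_{\sigma_{q+1},\sigma}\langle \mathcal F_{{}_p\sigma,\sigma}\alpha({}_p\sigma),x(\sigma)\rangle_{\mathcal F}\in\mathcal G_{\sigma_{q+1}}$ on the $(q+1)$-simplex $\sigma_{q+1}$, so $\alpha\frown x\in C_{q+1}(X,\mathcal G)$.

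\textbf{Step 1 (adjunction).} For every $k$, every $\alpha\in C^p(X,\mathcal F)$, every $\beta\in C^k(X,\mathcal G)$ and every $y\in C_{p+k}(X,\mathcal F\otimes\mathcal G)$, I would show
\begin{align}
\langle \beta,\alpha\frown y\rangle=\langle \alpha\smile\beta, y\rangle .
\end{align}
The computation is done simplex by simplex. Here $\langle-,-\rangle_{\mathcal F}:\mathcal F_\sigma\times(\mathcal F_\sigma\otimes\mathcal G_\sigma)\to\mathcal G_\sigma$ is the partial contraction against the first tensor factor. Moving $\mathcal G^T_{\sigma_k,\sigma}$ across the pairing turns it into $\mathcal G_{\sigma_k,\sigma}$ acting on $\beta(\sigma_k)$. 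After that, the partial contraction followed by pairing with $\mathcal G_{\sigma_k,\sigma}\beta(\sigma_k)$ is exactly the full pairing of $\mathcal F_{{}_p\sigma,\sigma}\alpha({}_p\sigma)\otimes\mathcal G_{\sigma_k,\sigma}\beta(\sigma_k)=(\alpha\smile\beta)(\sigma)$ with $y(\sigma)$. Summing over $\sigma\in X(p+k)$ gives the identity. We also need the standard facts $\langle\delta\gamma,z\rangle=\langle\gamma,\partial z\rangle$ for every sheaf, in particular for $\mathcal F\otimes\mathcal G$, whose transposed restriction maps are $\mathcal F^T\otimes\mathcal G^T$.

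\textbf{Step 2 (testing against cochains).} Fix $\beta\in C^q(X,\mathcal G)$ and compute, using Step 1, Proposition~\ref{prop:cup_Leibniz}, and the fact that signs are irrelevant in characteristic $2$:
\begin{align}
\langle\beta,\partial(\alpha\frown x)\rangle
&=\langle\delta\beta,\alpha\frown x\rangle
=\langle\alpha\smile\delta\beta,x\rangle
=\langle\delta(\alpha\smile\beta),x\rangle+\langle(\delta\alpha)\smile\beta,x\rangle\\
&=\langle\alpha\smile\beta,\partial x\rangle+\langle(\delta\alpha)\smile\beta,x\rangle
=\langle\beta,\alpha\frown\partial x\rangle+\langle\beta,(\delta\alpha)\frown x\rangle .
\end{align}
Step 1 is applied here with $k=q+1$, $k=q$ and $k=q$ (the last with $\delta\alpha$ in degree $p+1$). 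Since the evaluation pairing $C^q(X,\mathcal G)\times C_q(X,\mathcal G)\to\mathbb F_q$ is nondegenerate, this yields $\partial(\alpha\frown x)=(\delta\alpha)\frown x+\alpha\frown\partial x$.

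\textbf{Main obstacle.} The delicate point is Step 1. One must check that the partial contraction $\langle-,-\rangle_{\mathcal F}$ is compatible with the transposed restriction maps, i.e.\ that $\mathcal G^T$ applied after contracting equals contracting against $(\mathcal F\otimes\mathcal G)$-data paired with $\mathcal G$-pushforwards. One must also check that the face conventions match exactly: the cup product's latter face $\sigma_k=[v_p,\dots,v_{p+k}]$ must be the same face the cap product lands on. Given the remark that an earlier cap product failed Leibniz, I would verify this bookkeeping on the relevant simplices, confirming that the front/back faces used in $\frown$ and $\smile$ coincide for every degree $k$ used. Once this is in place, the rest is formal.
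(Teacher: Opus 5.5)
Your proof is correct, but it takes a different route from the paper. The paper verifies the identity by direct computation on a single term $x=x(\sigma)\cdot\sigma$. It expands $\alpha\frown\partial x$ and splits the faces $\sigma\setminus v_i$ into two groups:
\begin{itemize}
\item $i\le p$, where the front face changes and the back face $[v_{p+1},\dots,v_{p+q+1}]$ stays fixed;
\item $i\ge p+1$, where the front face ${}_p\sigma$ stays fixed and the back face changes.
\end{itemize}
It then adds one extra term twice (zero in characteristic $2$) to complete both sums, and matches them with $(\delta\alpha)\frown x$ and $\partial(\alpha\frown x)$. In effect it redoes the index bookkeeping of the cup-product Leibniz proof in dual form.

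You instead isolate the adjunction $\langle\beta,\alpha\frown y\rangle=\langle\alpha\smile\beta,y\rangle$. Your Step 1 check of it is right: $\smile$ and $\frown$ use the same front face ${}_p\sigma$ and back face $\sigma_k$. Moreover, contracting against $\mathcal F_{{}_p\sigma,\sigma}\alpha({}_p\sigma)$ and then pairing with $\mathcal G_{\sigma_k,\sigma}\beta(\sigma_k)$ is exactly the full pairing on $\mathcal F_\sigma\otimes\mathcal G_\sigma$. With the adjunction in hand, the cap rule is just the transpose of Proposition~\ref{prop:cup_Leibniz}. It needs only $\partial=\delta^T$ for $\mathcal G$ and for $\mathcal F\otimes\mathcal G$, plus nondegeneracy of the evaluation pairing, and your degree bookkeeping ($k=q+1,q,q$) is correct.

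The paper's computation is self-contained and does not rely on the cup Leibniz rule. Yours is shorter, avoids a second round of index juggling, and shows the two Leibniz rules are equivalent. It also carries over immediately to the subdivided definitions on general cell complexes, since $\langle\beta,A_\#(A^\#\alpha\frown S_\#y)\rangle=\langle S^\#(A^\#\alpha\smile A^\#\beta),y\rangle$. Finally, it shows that a cap-based pairing $\langle\beta,\alpha\frown\xi\rangle$ is the same as the cup-based invariant form $\langle\alpha\smile\beta,\xi\rangle$.
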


\begin{proof}
Consider a single term $x=x(\sigma)\cdot\sigma$, $\sigma=[v_0,\cdots,v_{p+q+1}]$, then
    \begin{equation}
        \begin{aligned}
            &\alpha\frown (\partial x)=\alpha\frown \sum_{i=0}^{p+q+1}(\mathcal F\otimes\mathcal G)^T_{\sigma\setminus v_i,\sigma}x(\sigma)\cdot(\sigma\setminus v_i)\\
            &=\sum_{i=0}^{p+q+1}\mathcal G^T_{(\sigma\setminus v_i)_q,\sigma\setminus v_i}\langle\mathcal F_{{}_{p}(\sigma\setminus v_i),\sigma\setminus v_i}\alpha({}_p(\sigma\setminus v_i)),(\mathcal F\otimes\mathcal G)^T_{\sigma\setminus v_i,\sigma}x(\sigma)\rangle_{\mathcal F}\cdot(\sigma\setminus v_i)_q\\
            &=\sum_{i=0}^{p+q+1}\mathcal G^T_{(\sigma\setminus v_i)_q,\sigma}\langle\mathcal F_{{}_{p}(\sigma\setminus v_i),\sigma}\alpha({}_p(\sigma\setminus v_i)),x(\sigma)\rangle_{\mathcal F}\cdot(\sigma\setminus v_i)_q\\
            &=\sum_{i=0}^p\mathcal G^T_{\sigma_q,\sigma}\langle\mathcal F_{{}_{p+1}\sigma\setminus v_i,\sigma}\alpha({}_{p+1}\sigma\setminus v_i),x(\sigma)\rangle_\mathcal F\cdot\sigma_q+\sum_{i=p+1}^{p+q+1}\mathcal G^T_{\sigma_{q+1}\setminus v_i,\sigma}\langle\mathcal F_{{}_p\sigma,\sigma}\alpha({}_p\sigma),x(\sigma)\rangle_{\mathcal F}\cdot(\sigma_{q+1}\setminus v_i)\\
            &=\sum_{i=0}^{p+1}\mathcal G^T_{\sigma_q,\sigma}\langle\mathcal F_{{}_{p+1}\sigma\setminus v_i,\sigma}\alpha({}_{p+1}\sigma\setminus v_i),x(\sigma)\rangle_\mathcal F\cdot\sigma_q+\sum_{i=p}^{p+q+1}\mathcal G^T_{\sigma_{q+1}\setminus v_i,\sigma}\langle\mathcal F_{{}_p\sigma,\sigma}\alpha({}_p\sigma),x(\sigma)\rangle_{\mathcal F}\cdot(\sigma_{q+1}\setminus v_i),
        \end{aligned}
    \end{equation}
    where the last equality is derived by adding
    \begin{equation}
        \mathcal G^T_{\sigma_q,\sigma}\langle\mathcal F_{{}_p\sigma,\sigma}\alpha({}_p\sigma),x(\sigma)\rangle_{\mathcal F}\cdot\sigma_q
    \end{equation}
    twice. Note that
    \begin{equation}
        \begin{aligned}
            (\delta\alpha)\frown x&=\mathcal G^T_{\sigma_q,\sigma}\langle\mathcal F_{{}_{p+1}\sigma,\sigma}((\delta\alpha)({}_{p+1}\sigma)),x(\sigma)\rangle_{\mathcal F}\cdot\sigma_q\\
            &=\mathcal G^T_{\sigma_q,\sigma}\langle\mathcal F_{{}_{p+1}\sigma,\sigma}\sum_{i=0}^{p+1}\mathcal F_{{}_{p+1}\sigma\setminus v_i, {}_{p+1}\sigma}\alpha({}_{p+1}\sigma\setminus v_i),x(\sigma)\rangle_{\mathcal F}\cdot\sigma_q\\
            &=\sum_{i=0}^{p+1}\mathcal G^T_{\sigma_q,\sigma}\langle\mathcal F_{{}_{p+1}\sigma\setminus v_i,\sigma}\alpha({}_{p+1}\sigma\setminus v_i),x(\sigma)\rangle_{\mathcal F}\cdot \sigma_{q}.
        \end{aligned}
    \end{equation}
On the other hand,
\begin{equation}
    \begin{aligned}
        \partial(\alpha\frown x)&=\partial (\mathcal G^T_{\sigma_{q+1},\sigma}\langle \mathcal F_{{}_p\sigma,\sigma}\alpha({}_p\sigma),x(\sigma)\rangle_{\mathcal F}\cdot \sigma_{q+1})\\
        &=\sum_{i=p}^{p+q+1}\mathcal G^T_{\sigma_{q+1}\setminus v_i,\sigma}\langle\mathcal F_{{}_p\sigma,\sigma}\alpha({}_p\sigma),x(\sigma)\rangle_{\mathcal F}\cdot (\sigma_{q+1}\setminus v_i).
    \end{aligned}
\end{equation}
Therefore, we have
\begin{equation}
\partial(\alpha\frown x)=\alpha\frown (\partial x)+(\delta \alpha)\frown x.
\end{equation}
\end{proof}

\subsection{Barycentric subdivision of cell complexes}
The almost-good qLDPC codes and qLTCs are built from cubical complexes rather than simplicial complexes. To extend the notion of cup and cap products to general cell complexes, we borrow the method of barycentric subdivision.
\begin{definition}[Subdivision functor]
    The \emph{subdivision functor} $\sd$ is a functor from cell posets to simplicial complexes. For each cell poset $X$, $\sd X$ is defined to be the simplicial complex called the \emph{barycentric subdivision} of $X$, whose vertices correspond to the cells of $X$, and whose $k$-simplices $[\sigma_0,\sigma_1,\cdots,\sigma_k]$ are the strictly increasing chains of cells in $X$.
\begin{align}
\sigma_0 <\sigma_1 <\cdots<\sigma_k\in X.
\end{align}
    Let $X,Y$ be cell posets and let $f: X \to  Y$ be an order-preserving map. Then
    \begin{align}
        \sd f:\sd X\to \sd Y
    \end{align}
    is defined to be a map given by, for each $k$-cell $[\sigma_0,\sigma_1,\cdots,\sigma_k]\in \sd X$, 
    \begin{align}
        (\sd f)([\sigma_0,\dots,\sigma_k])\coloneqq[f(\sigma_0),\dots,f(\sigma_k)].
    \end{align}
It is straightforward to verify that $\sd f$ is order-preserving.
\end{definition}

\begin{definition}[Carrier map]
    Suppose $X$ is a cell poset, the \emph{carrier map} $s:\sd X\to X$ is defined by, for each $k$-cell $[\sigma_0,\sigma_1,\cdots,\sigma_k]\in \sd X$,
    \begin{align}
        s([\sigma_0,\sigma_1,\cdots,\sigma_k])\coloneqq \sigma_k.
    \end{align}
    We call $\sigma_k$ to be the \emph{carrier} of $[\sigma_0,\sigma_1,\cdots,\sigma_k]$, since it is the minimal cell in $X$ that contains the cell $[\sigma_0,\sigma_1,\cdots,\sigma_k]\in \sd X$.
\end{definition}

It is a direct calculation to verify the following proposition:
\begin{proposition}
    The carrier map $s$ is a natural transformation between the subdivision functor \emph{sd} and the identity functor, i.e., for any cell posets $X,Y$ and any order-preserving map $f:X\to Y$, the following diagram commutes:
\begin{equation}
	\begin{tikzcd}
	\operatorname{sd} X \arrow[r, "s"] \arrow[d, "\text{\emph{sd}} f"'] & X \arrow[d, "f"]
    \\
	\operatorname{sd} Y \arrow[r, "s"'] & Y
	\end{tikzcd}
	\end{equation}
\end{proposition}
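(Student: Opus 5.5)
We verify the identity $f\circ s = s\circ \sd f$ pointwise on an arbitrary cell of $\sd X$. The only real work is checking that both sides are well defined, i.e.\ that $\sd f$ genuinely lands in $\sd Y$, and that all maps involved are morphisms of posets.

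Fix a $k$-simplex $\tau=[\sigma_0,\dots,\sigma_k]\in\sd X$, so $\sigma_0<\sigma_1<\cdots<\sigma_k$ in $X$.

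\begin{itemize}
\item \textbf{Going right then down.} By definition of the carrier map, $s(\tau)=\sigma_k$, hence $f(s(\tau))=f(\sigma_k)$.
\item \textbf{Going down then right.} We have $(\sd f)(\tau)=[f(\sigma_0),\dots,f(\sigma_k)]$. Since $f$ is order-preserving, $f(\sigma_0)\le f(\sigma_1)\le\cdots\le f(\sigma_k)$. The chain may fail to be strict, so we interpret $(\sd f)(\tau)$ as the simplex spanned by the distinct elements $\{f(\sigma_0),\dots,f(\sigma_k)\}$, which form a chain in $Y$. This is the usual convention for simplicial maps, and it is what makes $\sd f$ land in $\sd Y$. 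The maximal element of this chain is $f(\sigma_k)$, because $f(\sigma_i)\le f(\sigma_k)$ for all $i$. Therefore $s((\sd f)(\tau))=f(\sigma_k)$.
\end{itemize}

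The two composites agree on every cell of $\sd X$, so the square commutes.

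For completeness, we confirm that the carrier map is order-preserving, so that it is a legitimate morphism of posets (continuous for the Alexandrov topology by the earlier proposition). In $\sd X$ the order is face inclusion. If $\tau'\le\tau$, then $\tau'$ is a subchain of $\tau$, so its top element is at most $\sigma_k$, i.e.\ $s(\tau')\le s(\tau)$.

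The only subtle point, and the main obstacle, is the possible degeneracy of $[f(\sigma_0),\dots,f(\sigma_k)]$ when $f$ is not strictly monotone. Collapsing repeated vertices does not change the maximum, so the carrier is unaffected and naturality holds in all cases. This is exactly the property needed later to compare carrier maps of covering spaces with those of their base spaces.
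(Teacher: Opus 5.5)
Your proposal is correct and matches the paper, which leaves this as a ``direct calculation'': both composites send $[\sigma_0,\dots,\sigma_k]$ to $f(\sigma_k)$. Collapsing repeated vertices when $f$ is not strictly monotone is a sensible tightening of the paper's definition of $\sd f$; it is never needed for the covering maps used later, which are injective on each $X_{\le\sigma}$.
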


\begin{definition}[Subdivision map]
    Let $X$ be a cell poset, the \emph{subdivision map}
    \begin{align}
        S_\#:C_\bullet(X,\mathbb F_q)\to C_\bullet(\sd X,\mathbb F_q),
    \end{align}
    is a linear map defined by, for each $k$-cell $\sigma\in X$,
    \begin{align}
        S_\#(\sigma)\coloneqq \sum_{\sigma_0<\cdots<\sigma_{k-1}<\sigma}[\sigma_0,\cdots,\sigma_{k-1},\sigma].
    \end{align}
\end{definition}

\begin{proposition}\label{prop:subdivision is chain map}
    The subdivision map $S_\#$ defined above is a chain map.
\end{proposition}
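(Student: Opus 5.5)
We check $\partial^{\sd}\circ S_\# = S_\#\circ \partial^X$ on each basis $k$-cell $\sigma\in X(k)$ and extend linearly. Everything is over $\mathbb F_q$ of characteristic $2$, so signs can be ignored; the cellular boundary of $X$ is then $\partial\sigma=\sum_{\tau\lessdot\sigma}\tau$, since incidence numbers are $0$ or $\pm1$. The simplicial boundary removes one vertex at a time.

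First, we expand
\begin{align}
\partial S_\#(\sigma)=\sum_{\sigma_0<\cdots<\sigma_{k-1}<\sigma}\ \sum_{j=0}^{k}[\sigma_0,\dots,\widehat{\sigma_j},\dots,\sigma_{k-1},\sigma],
\end{align}
where $\widehat{\sigma_j}$ denotes omission and $\sigma_k=\sigma$. A $k$-simplex of $\sd X$ is a strictly increasing chain of length $k+1$ ending in $\sigma$. In a regular cell complex such a chain must have $\dim\sigma_i=i$: it is a full flag, because dimensions strictly increase from $\ge 0$ up to $k$.

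We split the double sum into two groups of terms.

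\textbf{Terms with $j=k$.} Deleting $\sigma$ itself gives $[\sigma_0,\dots,\sigma_{k-1}]$ with $\sigma_{k-1}\lessdot\sigma$. Grouping by $\tau=\sigma_{k-1}$ gives
\begin{align}
\sum_{\tau\lessdot\sigma}\ \sum_{\sigma_0<\cdots<\sigma_{k-2}<\tau}[\sigma_0,\dots,\sigma_{k-2},\tau]=\sum_{\tau\lessdot\sigma}S_\#(\tau)=S_\#(\partial\sigma).
\end{align}

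\textbf{Terms with $j<k$.} These must cancel in pairs. Fix a resulting $(k-1)$-simplex, i.e. a chain with one gap at dimension $j$.
\begin{itemize}
\item If $0<j<k$, the gap is between $\sigma_{j-1}$ and $\sigma_{j+1}$, whose dimensions differ by $2$. By the standing assumption stated after the definition of the cell poset (the "diamond property"), there is an even number of cells $\tau$ with $\sigma_{j-1}\lessdot\tau\lessdot\sigma_{j+1}$. Hence this simplex appears an even number of times and vanishes mod $2$.
\item If $j=0$, the chain begins with a $1$-cell $\sigma_1$, and $\sigma_0$ ranges over the vertices of $\sigma_1$. By regularity there are exactly two of them (the closure is an arc), so these terms cancel as well.
\end{itemize}

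The only point needing care is this $j=0$ case, which is not literally covered by the diamond condition stated for $k$-cells inside $(k+2)$-cells. It is handled by regularity of the $1$-cells, equivalently by adjoining an empty $(-1)$-cell. Degree $0$ is trivial: $S_\#$ is the identity on vertices and both boundaries vanish. Combining the two groups gives $\partial S_\#=S_\#\partial$, so $S_\#$ is a chain map.
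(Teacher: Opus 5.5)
Your proof is correct and follows the paper's argument essentially step by step. You use the same expansion of $\partial S_\#(\sigma)$. The terms that delete the top cell $\sigma$ reassemble into $S_\#(\partial\sigma)$, the interior deletions cancel by the even-diamond assumption, and the deletion of $\sigma_0$ is handled separately via the two endpoints of each $1$-cell in a regular complex, exactly as the paper does.
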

\begin{proof}
    By definition,
    \begin{align}
    \begin{aligned}
        \partial S_\#(\sigma)&=\partial\sum_{\sigma_0<\cdots<\sigma_{k-1}<\sigma}[\sigma_0,\cdots,\sigma_{k-1},\sigma]\\
        &=\sum_{\sigma_0<\cdots<\sigma_{k-1}<\sigma}\bigg(\sum_{i=0}^{k-1}[\sigma_0,\cdots,\hat{\sigma}_i,\cdots,\sigma_{k-1},\sigma]+[\sigma_0,\cdots,\sigma_{k-1}]\bigg).
    \end{aligned}
    \end{align}
    Since we only consider regular cell complexes, each $1$-dimensional edge in $X$ is joint to two different vertices. Therefore, if we fix $i=0$ in the first term, we have
\begin{align}
    \sum_{\sigma_0<\cdots<\sigma_{k-1}<\sigma}[\sigma_1,\cdots,\sigma_{k-1},\sigma]=\sum_{\sigma_1<\cdots<\sigma_{k-1}<\sigma}\sum_{\sigma_0\lessdot\sigma_1}[\sigma_1,\cdots,\sigma_{k-1},\sigma]=0.
\end{align}
    For convenience, we write $\sigma_k\coloneqq\sigma$, then for $1\le i\le k-1$,
    \begin{align}
        \sum_{\sigma_0<\cdots<\sigma_{k-1}<\sigma}[\sigma_0,\cdots,\hat{\sigma}_i,\cdots,\sigma_{k-1},\sigma]=\sum_{\sigma_0<\cdots<\hat{\sigma}_i<\cdots<\sigma_k}\sum_{\sigma_{i-1}\lessdot\sigma_i\lessdot\sigma_{i+1}}[\sigma_0,\cdots,\hat{\sigma}_i,\cdots,\sigma_{k}].
    \end{align}
    Still by our convention, for fixed $\sigma_{i-1}<\sigma_{i+1}$, $\dim \sigma_{i+1}=\dim\sigma_{i-1}+2$, there is an even number of $\sigma_i\in X(i)$ such that $\sigma_{i-1}\lessdot\sigma_i\lessdot\sigma$. Therefore, the above term equals zero, and
    \begin{align}
        \partial S_\#(\sigma)=\sum_{\sigma_0<\cdots<\sigma_{k-1}<\sigma}[\sigma_0,\cdots,\sigma_{k-1}].
    \end{align}
    On the other hand,
    \begin{align}
    \begin{aligned}
        S_\#\partial(\sigma)&=S_\#\sum_{\sigma_{k-1}\lessdot \sigma}\sigma_{k-1}\\
        &=\sum_{\sigma_{k-1}\lessdot \sigma}\sum_{\sigma_0<\cdots<\sigma_{k-1}}[\sigma_0,\cdots,\sigma_{k-1}]\\
        &=\sum_{\sigma_0<\cdots<\sigma_{k-1}<\sigma}[\sigma_0,\cdots,\sigma_{k-1}].
    \end{aligned}
    \end{align}
    As a result, $\partial S_\#=S_\#\partial$, and $S_\#$ is a chain map.
\end{proof}

\begin{definition}[Approximate inverse]
An \emph{approximate inverse} of the subdivision map $S_\#$ is a chain map 
\begin{align}
    A_\#:C_\bullet(\sd X,\mathbb F_q)\to C_\bullet(X,\mathbb F_q),
\end{align}
satisfying
\begin{enumerate}
    \item Carrier condition: for every cell $\rho\in\sd X$,
    \begin{align}
        A_\#(\rho)\in C_{\dim\rho}(X_{\leq s(\rho)},\mathbb F_q),
    \end{align}
    \item Left inverse condition:
    \begin{align}
        A_\#S_\#=\mathrm{id}_{C_\bullet(X;\mathbb F_q)}.
    \end{align}
\end{enumerate}
\end{definition}

When it is necessary to emphasize the domain, we prefer to write $S_{X,\#}$ and $A_{\sd X,\#}$ for $S_\#$ and $A_\#$ defined above, respectively. An important property of the barycentric subdivision is that such an approximate inverse always exists.

\begin{proposition}\label{prop:existence of approximate inverse}
For any finite regular cell complex $X$, an approximate inverse of the subdivision map always exists.
\end{proposition}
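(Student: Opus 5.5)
We construct $A_\#$ by induction on the dimension of simplices of $\sd X$, using the method of acyclic carriers. The key input is that for every cell $\sigma\in X$, the closed subcomplex $X_{\le\sigma}$ is the cell structure of a closed ball, because $X$ is regular. Hence its reduced cellular homology with $\mathbb F_q$ coefficients vanishes: $\tilde H_\bullet(C_\bullet(X_{\le\sigma},\mathbb F_q))=0$. We take this as a standard fact about regular CW complexes; cellular homology of a subcomplex computes the singular homology of its underlying space, which here is a closed ball.

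In degree $0$, a vertex $\rho=[\sigma_0]$ of $\sd X$ has carrier $\sigma_0$. We set $A_\#([\sigma_0])\coloneqq v$ for a chosen vertex $v\le\sigma_0$, and when $\sigma_0$ is itself a vertex we must take $v=\sigma_0$. This satisfies the carrier condition and preserves augmentation.

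Now suppose $A_\#$ is defined on simplices of dimension $<k$, is a chain map there, and satisfies the carrier condition. Let $\rho=[\sigma_0,\dots,\sigma_k]$ with $s(\rho)=\sigma_k$. Each face of $\rho$ has carrier $\le\sigma_k$, since it is either $\sigma_k$ or $\sigma_{k-1}$. So $z\coloneqq A_\#(\partial\rho)$ lies in $C_{k-1}(X_{\le\sigma_k})$, and $\partial z=A_\#\partial\partial\rho=0$. For $k=1$ we instead check that $z$ has augmentation $0$. By acyclicity, $z=\partial w$ for some $w\in C_k(X_{\le\sigma_k})$, and we define $A_\#(\rho)\coloneqq w$. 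This gives the chain map identity and the carrier condition.

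The left inverse condition needs a careful choice of $w$. The main point is the top simplices: if $\dim\sigma_k=k$, then $\rho=[\sigma_0<\dots<\sigma_k]$ is a full flag with $\dim\sigma_i=i$. Here we cannot choose freely. $C_k(X_{\le\sigma_k})=\mathbb F_q\sigma_k$, and $\partial$ is injective on it because $\sigma_k$ is not a cycle. So $w=c_\rho\sigma_k$ is forced, with $c_\rho\in\mathbb F_q$, and we must show $\sum_{\text{flags ending at }\sigma_k}c_\rho=1$. We check this by induction. Apply $\partial$ to $A_\#S_\#(\sigma)$ and use Proposition \ref{prop:subdivision is chain map}:
\begin{align}
\partial A_\# S_\#\sigma=A_\# S_\#\partial\sigma=\partial\sigma.
\end{align}
The first equality holds because $A_\#$ is a chain map on degree $<k$ simplices, and the second holds by the inductive hypothesis $A_\#S_\#=\mathrm{id}$ in degree $k-1$. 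Since $A_\#S_\#\sigma$ is a multiple of $\sigma$, and $\partial$ is injective on $\mathbb F_q\sigma$ (as $\partial\sigma\ne0$ for $k\ge1$), this gives $A_\#S_\#\sigma=\sigma$. In degree $0$ the identity holds by our choice $v=\sigma_0$.

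For simplices with $\dim\sigma_k>k$, any valid $w$ works, and they do not affect the left inverse condition, since $S_\#\sigma$ involves only full flags. The main obstacle is therefore the acyclicity of $X_{\le\sigma}$, which rests on regularity, and not the inductive bookkeeping.
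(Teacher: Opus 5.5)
Your proposal is correct and follows essentially the same route as the paper. Both run an acyclic-carrier induction. Both use the same vertex choice $v_\sigma\le\sigma$ in degree $0$, and both choose fillings in $X_{\le s(\rho)}$ using the acyclicity of the closed ball. Both also get the left-inverse condition for free in top degree, because $C_k(X_{\le\sigma})=\mathbb F_q\sigma$ contains no nonzero cycle; the paper phrases this as $H_n(X_{\le\sigma})=0$ with no $(n+1)$-cells.

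One small correction: the first equality in your display does not follow from the chain-map property on simplices of degree $<k$ alone. It uses the degree-$k$ identity $\partial A_\#(\rho)=A_\#(\partial\rho)$ that your construction has just established, together with $\partial S_\#=S_\#\partial$.
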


\begin{proof}
We construct $A_\#$ inductively on the dimension of cells in $\sd X$.

\medskip
\noindent
\textbf{Step 0: definition of $0$-cells.} For each cell $\sigma$ of $X$, we choose a vertex $v_\sigma\in X(0)$ with $v_\sigma\le \sigma$, and require that
$v_\sigma=\sigma$ whenever $\sigma$ itself is a $0$-cell.
A $0$-cell of $\operatorname{sd}X$ is precisely a cell $\sigma$ of $X$.
Define
\begin{align}
    A_\#([\sigma]):=v_\sigma\in C_0(X,\mathbb F_q).
\end{align}
This clearly satisfies the carrier condition, since $v_\sigma\le \sigma=s([\sigma])$. Moreover, if $v$ is a $0$-cell of $X$, then $S_\#(v)=[v]$ and hence
\begin{align}
    A_\#S_\#(v)=A_\#([v])=v.
\end{align}
Thus, $A_\#S_\#=\mathrm{id}$ on $C_0(X;\mathbb F_q)$.

\medskip
\noindent
\textbf{Step 1: definition of $1$-cells.}
Let $\rho=[\sigma_0,\sigma_1]$ be a $1$-cell of $\sd X$.
Then $\partial \rho=[\sigma_0]+[\sigma_1]$, so by Step~0,
\begin{align}
    A_\#(\partial \rho)=v_{\sigma_0}+v_{\sigma_1}\in C_0(X_{\le \sigma_1},\mathbb F_q).
\end{align}
Since $v_{\sigma_0}\le \sigma_0\le \sigma_1$ and $v_{\sigma_1}\le \sigma_1$,
both vertices lie in $X_{\le \sigma_1}$. Note that $X_{\le \sigma_1}$ is homeomorphic to a closed ball and is path-connected. Therefore, there exists a $1$-chain $b_\rho\in C_1(X_{\le \sigma_1},\mathbb F_q)$ such that
\begin{align}
    \partial b_\rho=v_{\sigma_0}+v_{\sigma_1}=A_\#(\partial \rho).
\end{align}
We define $A_\#(\rho)\coloneqq b_\rho$. Then extending the map linearly defines
\begin{align}
    A_\#:C_1(\sd X,\mathbb F_q)\to C_1(X,\mathbb F_q).
\end{align}
By construction, $\partial A_\#(\rho)=A_\#(\partial \rho)$, so $A_\#$ is a chain map in degree $1$, and the carrier condition holds.

Now let $\sigma$ be a $1$-cell of $X$. Since $S_\#(\sigma)$ is supported in
$s^{-1}(\sigma)$, the carrier condition gives
\begin{align}
    A_\#S_\#(\sigma)\in C_1(X_{\le \sigma},\mathbb F_q).
\end{align}
We set $d_\sigma:=A_\#S_\#(\sigma)-\sigma\in C_1(X_{\le \sigma},\mathbb F_q)$. Because both $A_\#$ and $S_\#$ are chain maps in degree $1$, and $A_\#S_\#=\mathrm{id}$ on $C_0(X,\mathbb F_q)$, we have
\begin{align}
    \partial d_\sigma=A_\#S_\#(\partial \sigma)-\partial \sigma=0.
\end{align}
This indicates that $d_\sigma$ is a $1$-cycle in $C_1(X_{\le \sigma},\mathbb F_q)$. Since $X_{\le \sigma}$ is homeomorphic to a closed $1$-ball, $H_1(X_{\le \sigma},\mathbb F_q)=0$ and there is no $2$-dimensional cell, so $d_\sigma=0$, i.e.
\begin{align}
    A_\#S_\#(\sigma)=\sigma,
\end{align}
which shows $A_\#S_\#=\mathrm{id}$ on $C_1(X,\mathbb F_q)$.

\medskip
\noindent
\textbf{Inductive hypothesis.}
Assume that for some $n\ge 2$ we have already defined $A_\#$ on
$C_k(\operatorname{sd}X;\mathbb F_q)$ for all $k<n$, in such a way that:

\begin{enumerate}
    \item $\partial A_\# = A_\#\partial$ on $C_k(\sd X;\mathbb F_q)$ for all $k<n$;
    \item for every cell $\rho$ of $\sd X$ with $\dim\rho< n$,
    \[
    A_\#(\rho)\in C_{\dim\rho}(X_{\le s(\rho)},\mathbb F_q);
    \]
    \item $A_\#S_\#=\mathrm{id}$ on $C_k(X;\mathbb F_q)$ for all $k<n$.
\end{enumerate}

We now define $A_\#$ on $n$-cells of $\operatorname{sd}X$.

\medskip
\noindent
\textbf{Step $n$: extension to $n$-cells.}
Let $\rho$ be an $n$-cell of $\sd X$ for $n\ge 2$. By the inductive hypothesis, $A_\#$ is defined on $\partial\rho$, and
\begin{align}
    A_\#(\partial\rho)\in C_{n-1}(X_{\le s(\rho)},\mathbb F_q).
\end{align}
Since $A_\#$ is already a chain map in lower degrees,
\begin{align}
    \partial\big(A_\#(\partial\rho)\big)=A_\#(\partial^2\rho)=0.
\end{align}
This further indicates that $A_\#(\partial\rho)$ is an $(n-1)$-cycle in $C_{n-1}(X_{\le s(\rho)},\mathbb F_q)$. Notice that $X_{\le s(\rho)}$ is homeomorphic to a closed ball, hence
\begin{align}
    H_{n-1}(X_{\le s(\rho)},\mathbb F_q)=0,\quad \text{for}\ n\geq 2.
\end{align}
Then we can choose a chain $ b_\rho\in C_n(X_{\le s(\rho)},\mathbb F_q)$ such that $\partial b_\rho = A_\#(\partial\rho)$. We define $A_\#(\rho):=b_\rho$, and extending linearly over all $n$-cells of $\sd X$ gives a map
\begin{align}
    A_\#:C_n(\operatorname{sd}X;\mathbb F_q)\to C_n(X;\mathbb F_q),
\end{align}
satisfying $\partial A_\#(\rho)=A_\#(\partial\rho)$ for every $n$-cell $\rho$. Consequently, $A_\#$ is now a chain map in degree $n$. The carrier condition is immediate from the construction:
\begin{align}
    b_\rho=A_\#(\rho)\in C_n(X_{\le s(\rho)},\mathbb F_q).
\end{align}

Now we verify the left inverse condition. Let $\sigma$ be an $n$-cell of $X$.
Since $S_\#(\sigma)$ is supported in $s^{-1}(\sigma)$, the carrier condition implies
\begin{align}
    A_\#S_\#(\sigma)\in C_n(X_{\leq\sigma},\mathbb F_q).
\end{align}
Consider the difference $d_\sigma:=A_\#S_\#(\sigma)-\sigma$. As both $A_\#$ and $S_\#$ are chain maps, and by the inductive hypothesis
$A_\#S_\#=\mathrm{id}$ on $C_{n-1}(X;\mathbb F_q)$, we have
\begin{align}
    \partial d_\sigma=A_\#S_\#(\partial\sigma)-\partial\sigma=0.
\end{align}
Thus $d_\sigma$ is an $n$-cycle in $C_n(X_{\le\sigma};\mathbb F_q)$. But $X_{\le\sigma}$ is homeomorphic to a closed $n$-ball, so $H_n(X_{\le\sigma};\mathbb F_q)=0$ and there are no $(n+1)$-dimensional cells. Hence $d_\sigma=0$ and $A_\#S_\#(\sigma)=\sigma$. By linearity,
\begin{align}
    A_\#S_\#=\id_{C_n(X;\mathbb F_q)}.
\end{align}

This completes the induction: $A_\#$ is defined in all degrees. It is a chain map, satisfies the carrier condition, and obeys the left inverse condition.
\end{proof}

For clarity, we would like to write $A_\#$ in a more explicit way by, for each $\rho\in \sd X$, 
\begin{align}
    A_\#(\rho)=\sum_{\sigma\in X(\dim\rho)}A_{\rho,\sigma}\cdot \sigma,
\end{align}
where each $A_{\rho,\sigma}\in\mathbb F_q$ is a number. Then $\partial A_\#=A_\#\partial$ directly implies that
\begin{align}
    \sum_{\sigma\in X(\dim\rho)}\ \sum_{\sigma'\lessdot\sigma} A_{\rho,\sigma}\cdot \sigma'=\sum_{\rho'\lessdot\rho}\ \sum_{\sigma'\in X(\dim\rho')} A_{\rho',\sigma'}\cdot\sigma'.
\end{align}
By rearranging the order of the sum, for each $\sigma'\in X(\dim\rho-1)$ we get
\begin{align}
\sum_{\sigma\gtrdot\sigma'}A_{\rho,\sigma}=\sum_{\rho'\lessdot\rho}A_{\rho',\sigma'}.
\end{align}

\medskip
We now introduce sheaf-valued versions of the subdivision map and the approximate inverse for a cell poset $X$ equipped with a sheaf $\mathcal F$. For simplicity, we abuse the notation to write
\begin{align}
    S_\#:C_\bullet(X,\mathcal F)\to C_\bullet(\sd X,s^*\mathcal F).
\end{align}
It is defined by, for each $k$-cell $\sigma\in X$ and $x(\sigma)\in\mathcal F_\sigma$,
\begin{align}
    S_\#(x(\sigma)\cdot\sigma)\coloneqq \sum_{\sigma_0<\cdots<\sigma_{k-1}<\sigma}x(\sigma)\cdot[\sigma_0,\cdots,\sigma_{k-1},\sigma].
\end{align}
This is well-defined since $s([\sigma_0,\cdots,\sigma_{k-1},\sigma])=\sigma$. Similarly to the proof of Proposition~\ref{prop:subdivision is chain map}, we can easily check that the sheaf-valued subdivision map defined above is still a chain map. 

For the sheaf-valued approximate inverse, we define
\begin{align}
    A_\#:C_\bullet(\sd X,s^*\mathcal F)\to C_\bullet(X,\mathcal F),
\end{align}
by, for each $\rho\in\sd X$ and $\tilde x(\rho)\in \mathcal F_{s(\rho)}$,
\begin{align}
    A_\#(\tilde x(\rho)\cdot \rho)\coloneqq\sum_{\sigma\in X(\dim\rho)}\mathcal F^T_{\sigma,s(\rho)}\tilde x(\rho) A_{\rho,\sigma}\cdot \sigma.
\end{align}
This is well-defined because by the carrier condition, $A_\#(\rho)$ is a sum of cells contained in $X_{\le s(\rho)}$. 

\begin{proposition}
    The sheaf-valued approximate inverse is still a chain map.
\end{proposition}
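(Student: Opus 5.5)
We verify $\partial A_\# = A_\#\partial$ directly on a single generator $\tilde x(\rho)\cdot\rho$, with $\rho\in\sd X$ an $n$-cell and $\tilde x(\rho)\in\mathcal F_{s(\rho)}$. We then compare the coefficients at each $(n-1)$-cell $\sigma'\in X$. The only inputs are the scalar identity
\begin{align}
\sum_{\sigma\gtrdot\sigma'}A_{\rho,\sigma}=\sum_{\rho'\lessdot\rho}A_{\rho',\sigma'},
\end{align}
derived just above from the scalar chain-map property, together with the carrier condition and functoriality of $\mathcal F$.

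\textbf{Left side.} By the definition of the sheaved boundary map \eqref{eq: sheaved boundary map},
\begin{align}
\partial A_\#(\tilde x(\rho)\cdot\rho)(\sigma')=\sum_{\sigma\gtrdot\sigma'}A_{\rho,\sigma}\,\mathcal F^T_{\sigma',\sigma}\mathcal F^T_{\sigma,s(\rho)}\tilde x(\rho)=\Big(\sum_{\sigma\gtrdot\sigma'}A_{\rho,\sigma}\Big)\mathcal F^T_{\sigma',s(\rho)}\tilde x(\rho).
\end{align}
Here we use $(\mathcal F_{\sigma,s(\rho)}\mathcal F_{\sigma',\sigma})^T=\mathcal F^T_{\sigma',s(\rho)}$, which needs $\sigma'<\sigma\le s(\rho)$. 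This holds whenever $A_{\rho,\sigma}\neq 0$, by the carrier condition.

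\textbf{Right side.} Since $\partial(\tilde x(\rho)\rho)=\sum_{\rho'\lessdot\rho}(s^*\mathcal F)^T_{\rho',\rho}\tilde x(\rho)\cdot\rho'$ with $(s^*\mathcal F)_{\rho',\rho}=\mathcal F_{s(\rho'),s(\rho)}$, applying $A_\#$ gives the coefficient at $\sigma'$:
\begin{align}
\sum_{\rho'\lessdot\rho}A_{\rho',\sigma'}\,\mathcal F^T_{\sigma',s(\rho')}\mathcal F^T_{s(\rho'),s(\rho)}\tilde x(\rho)=\Big(\sum_{\rho'\lessdot\rho}A_{\rho',\sigma'}\Big)\mathcal F^T_{\sigma',s(\rho)}\tilde x(\rho).
\end{align}
Again functoriality applies, because the carrier condition forces $\sigma'\le s(\rho')$ whenever $A_{\rho',\sigma'}\ne0$. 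Also $s(\rho')\le s(\rho)$, since $s$ is order-preserving, being the last vertex of the chain. The two sides then agree by the scalar identity.

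\textbf{Main obstacle.} The subtle point is the step where a sum of terms with different restriction maps collapses to a common factor $\mathcal F^T_{\sigma',s(\rho)}$. This requires care exactly where coefficients vanish. Terms with $A_{\rho,\sigma}=0$ or $A_{\rho',\sigma'}=0$ contribute nothing, so we can declare their restriction maps to be anything. The coefficients at $\sigma'\not\le s(\rho)$ vanish on both sides for the same reason. I would make this explicit by restricting all sums to cells inside $X_{\le s(\rho)}$ before factoring.
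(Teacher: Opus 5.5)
Your proposal is correct and matches the paper's proof: both expand $\partial A_\#$ and $A_\#\partial$ on a generator $\tilde x(\rho)\cdot\rho$, use functoriality of $\mathcal F$ to reduce both composites to $\mathcal F^T_{\sigma',s(\rho)}$, and conclude with the scalar identity $\sum_{\sigma\gtrdot\sigma'}A_{\rho,\sigma}=\sum_{\rho'\lessdot\rho}A_{\rho',\sigma'}$. Your explicit check of where the restriction maps are defined, using the carrier condition and the monotonicity of $s$, makes precise a point the paper leaves implicit.
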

\begin{proof}
    For each $\rho\in\sd X$ and $\tilde x(\rho)\in \mathcal F_{s(\rho)}$, by definition, we have
    \begin{equation}
        \begin{aligned}
        \partial A_{\#}(\tilde x(\rho)\cdot\rho)&=\partial\sum_{\sigma\in X(\dim\rho)}\mathcal{F}^T_{\sigma,s(\rho)}\tilde x(\rho)A_{\rho,\sigma}\cdot\sigma\\
        &=\sum_{\sigma\in X(\dim\rho)}\ \sum_{\sigma'\lessdot\sigma}\mathcal F^T_{\sigma',s(\rho)}\tilde x(\rho)A_{\rho,\sigma}\cdot\sigma'\\
        &=\sum_{\sigma'\in X(\dim\rho-1)}(\sum_{\sigma\gtrdot\sigma'}A_{\rho,\sigma})\mathcal{F}^T_{\sigma',s(\rho)}\tilde x(\rho)\cdot\sigma'\\
        &=\sum_{\sigma'\in X(\dim\rho-1)}(\sum_{\rho'\lessdot\rho}A_{\rho',\sigma'})\mathcal{F}^T_{\sigma',s(\rho)}\tilde x(\rho)\cdot\sigma'\\
        &=\sum_{\rho'\lessdot\rho}\ \sum_{\sigma'\in X(\dim\rho')}\mathcal{F}^T_{\sigma',s(\rho)}\tilde x(\rho)A_{\rho',\sigma'}\cdot\sigma'.
    \end{aligned}
    \end{equation}
    On the other hand,
    \begin{equation}
        \begin{aligned}
            A_\#\partial(\tilde x(\rho)\cdot\rho)&=A_\#\sum_{\rho'\lessdot\rho}(s^*\mathcal F)^T_{\rho',\rho}\tilde x(\rho)\cdot\rho'\\
            &=\sum_{\rho'\lessdot\rho}A_\#\mathcal F^T_{s(\rho'),s(\rho)}\tilde x(\rho)\cdot \rho'\\
            &=\sum_{\rho'\lessdot\rho}\ \sum_{\sigma'\in X(\dim\rho')}\mathcal F^T_{\sigma',s(\rho')}\mathcal F^T_{s(\rho'),s(\rho)}\tilde x(\rho)A_{\rho',\sigma'}\cdot\sigma'\\
            &=\sum_{\rho'\lessdot\rho}\ \sum_{\sigma'\in X(\dim\rho')}\mathcal{F}^T_{\sigma',s(\rho)}\tilde x(\rho)A_{\rho',\sigma'}\cdot\sigma'.
        \end{aligned}
    \end{equation}
    Therefore, $\partial A_\#=A_\#\partial$ and the sheaf-valued approximate inverse $A_\#$ is still a chain map.
\end{proof}

A direct calculation shows that for sheaf-valued subdivision map and approximate inverse, we still have
\begin{align}
    A_\#S_\#=\id_{C_\bullet(X,\mathcal F)}.
\end{align}

\subsection{General definition of cup and cap products}\label{sec:cupcap}

Now we are able to define cup and cap products when $X$ is not a simplicial complex. 

\begin{definition}
    Suppose $X$ is a finite regular cell complex equipped with sheaves $\mathcal F$ and $\mathcal G$, then the cup product is defined by the following commutative diagram:
\begin{equation}
    \begin{tikzcd}
    C^p(X,\mathcal{F}) \times C^q(X,\mathcal{G})
  \arrow[r, "\smile"]
  \arrow[d, "A^\# \times A^\#"']
& C^{p+q}(X,\mathcal{F}\otimes\mathcal{G})
   \\
C^p(\sd X,s^*{\mathcal{F}}) \times C^q(\sd X,s^*{\mathcal{G}})
  \arrow[r, "\smile"]
& C^{p+q}(\sd X,s^*({\mathcal{F}}\otimes{\mathcal{G}})) \arrow[u, "S^\#" ]
     \end{tikzcd}
\end{equation}

The cap product is defined by the following commutative diagram:
\begin{equation}
    \begin{tikzcd}
    C^p(X,\mathcal{F}) \times C_{p+q}(X,\mathcal{F}\otimes \mathcal G)
  \arrow[r, "\frown"]
  \arrow[d, "A^\# \times S_\#"']
& C_q(X,\mathcal{G})
   \\
C^p(\sd X, s^*{\mathcal{F}}) \times C_{p+q}(\sd X, s^*({\mathcal{F}}\otimes\mathcal G))
  \arrow[r, "\frown"]
& C_q(\sd X,s^*{\mathcal{G}}) \arrow[u, "A_\#" ]
     \end{tikzcd}
\end{equation}
\end{definition}

It is obvious that the Leibniz rule for cup and cap products still holds. Note that the chain-level definition of cup and cap products inevitably depends on the choice of approximate inverse. Nevertheless, the (co)homology-level definitions are independent of the choice because we have the following proposition:

\begin{proposition}[\protect{\cite[Theorem 7.3.9]{curry2014sheavescosheavesapplications}}]\label{prop: subdivided sheaf has same cohomology}
    Suppose $\F$ is a sheaf on cell poset $X$, then there is an isomorphism
    \begin{equation}
    H^\bullet(X,\mathcal F)\cong H^\bullet(\sd X,s^*\mathcal F).
    \end{equation}
\end{proposition}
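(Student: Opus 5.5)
The most direct route is to show that the subdivision cochain map $S^\#: C^\bullet(\sd X,s^*\F)\to C^\bullet(X,\F)$ and the approximate-inverse cochain map $A^\#: C^\bullet(X,\F)\to C^\bullet(\sd X,s^*\F)$ induce mutually inverse maps on cohomology. The argument works on chains and then dualizes. Since $A_\#S_\#=\id$, we get $S^\#A^\#=\id$ on cochains. Hence $A^\*$ is injective on cohomology and $S^\*$ is surjective. It remains to show that $S_\#A_\#$ is chain homotopic to the identity on $C_\bullet(\sd X,s^*\F)$. Dualizing the homotopy then gives $A^\#S^\#\simeq\id$, and both maps become isomorphisms.

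\textbf{Building the chain homotopy.} I construct $h:C_n(\sd X,s^*\F)\to C_{n+1}(\sd X,s^*\F)$ by induction on $n$, using acyclic carriers. For a simplex $\rho$ of $\sd X$ with carrier $\tau=s(\rho)$, consider the subcomplex $\sd(X_{\le\tau})\subseteq \sd X$. This is the barycentric subdivision of a closed ball, and it is a cone with apex $[\tau]$. Every simplex $[\sigma_0<\dots<\sigma_k]$ in it either ends in $\tau$ or can be extended by $\tau$. The carrier condition on $A_\#$, together with the definition of $S_\#$, places $S_\#A_\#(\rho)$ inside $\sd(X_{\le\tau})$.

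The sheaf coefficients need care. On $\sd(X_{\le\tau})$ every cell $\rho'$ satisfies $s(\rho')\le\tau$. A coefficient $\tilde x\in\F_\tau$ can therefore be pushed to $\rho'$ as $\F^T_{s(\rho'),\tau}\tilde x$, which is exactly what $A_\#$ already does. So I work with the cosheaf-type complex in which $\tilde x$ is transported by the maps $\F^T$.

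\textbf{The cone homotopy.} On this complex I define the explicit cone operator $c(\rho')=[\rho',\tau]$ when $\tau\notin\rho'$ and $c(\rho')=0$ otherwise. The value at $[\rho',\tau]$ is the same $\F^T_{s(\rho'),\tau}\tilde x$ pushed appropriately; its carrier is $\tau$, so the coefficient $\tilde x$ sits there directly. I will verify the identity $\partial c+c\partial=\id-(\text{projection onto the apex in degree }0)$. This is the standard cone contraction; in characteristic 2 no signs arise. The compatibility of the restriction maps follows from functoriality $\F_{\sigma,\sigma''}=\F_{\sigma',\sigma''}\F_{\sigma,\sigma'}$.

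\textbf{Main obstacle.} The key difficulty is showing that the relevant local complex is acyclic in positive degrees with these twisted coefficients. I would define
\begin{align}
h(\tilde x\,\rho)\coloneqq c\bigl(S_\#A_\#(\tilde x\rho)-\tilde x\rho-h\partial(\tilde x\rho)\bigr)
\end{align}
inductively and check $\partial h+h\partial=S_\#A_\#-\id$. The base case in degree $0$ requires $S_\#A_\#([\sigma])=[v_\sigma]$ and $[\sigma]$ to differ by a boundary in $\sd(X_{\le\sigma})$; the edge $[v_\sigma,\sigma]$ supplies this, with coefficient transported via $\F^T_{v_\sigma,\sigma}$.

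If the cone bookkeeping with coefficients becomes too messy, my fallback is to cite the simplicial-versus-cellular comparison in Curry's framework directly. There, $H^\bullet(X,\F)$ is identified with sheaf cohomology on the Alexandrov space, which is invariant under the pullback $s^*$ because $s$ has contractible fibers $s^{-1}(X_{\ge\sigma})$.
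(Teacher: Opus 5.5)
Your argument is correct, and it takes a genuinely different route from the paper. The paper does not prove this proposition. It cites Curry's Theorem~7.3.9, a general result in the framework identifying cellular sheaf cohomology with sheaf cohomology on the Alexandrov space. Only afterwards does it combine that abstract isomorphism with $S^*A^*=\id$ and a dimension count to conclude that $A^*,A_*$ invert $S^*,S_*$. You instead prove $S_\#A_\#\simeq\id$ directly, by acyclic carriers on the cones $\sd(X_{\le s(\rho)})$.

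One point in your write-up needs tightening. The cone operator is not defined on arbitrary sheaf chains, because it would require choosing preimages under $\mathcal F^T_{s(\rho'),\tau}$. So your inductive hypothesis must record that $h$ has the transported form. The cleanest way is the device the paper already uses for the sheaf-valued $A_\#$. The maps $\iota_{\tau,\tilde x}:C_\bullet(\sd(X_{\le\tau}),\mathbb F_q)\to C_\bullet(\sd X,s^*\mathcal F)$, $\rho'\mapsto\mathcal F^T_{s(\rho'),\tau}\tilde x\cdot\rho'$, are chain maps. By functoriality they satisfy $\iota_{\tau',\mathcal F^T_{\tau',\tau}\tilde x}=\iota_{\tau,\tilde x}$ on $\sd(X_{\le\tau'})$ for $\tau'\le\tau$. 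Build a constant-coefficient homotopy $h_0$, with $h_0(\rho)$ supported in $\sd(X_{\le s(\rho)})$, using your cone formula. Then set $h(\tilde x\rho):=\iota_{s(\rho),\tilde x}(h_0\rho)$. The identity $\partial h+h\partial=S_\#A_\#-\id$ transfers verbatim, so there is no separate ``twisted acyclicity'' obstacle; it is just the cone.

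Comparing the two routes:
\begin{itemize}
\item Your route is self-contained. It yields the mutual inverseness of $A^*,S^*$ (and of $A_*,S_*$) with no dimension count.
\item It also shows that any two approximate inverses are chain homotopic, so the cup product on cohomology is visibly independent of the choice.
\item It does rely on the existence of $A_\#$, and hence on the closed-ball property of $X_{\le\sigma}$. The homotopy itself uses only the cone structure.
\item The citation buys brevity and a framework that needs no approximate inverse at all.
\end{itemize}

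One caution on your fallback. Contractibility of $s^{-1}(X_{\ge\sigma})$ only controls constant coefficients; it is not the right criterion for arbitrary sheaves. The relevant condition for $s^*$ to be an isomorphism for every $\mathcal F$ is contractibility of $s^{-1}(X_{\le\tau})$, i.e.\ homotopy initiality of $s$. That set consists of the simplices of $\sd(X_{\le\tau})$, which is exactly the cone your main argument exploits.
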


Note that the left inverse condition guarantees that $S^*A^*=\id_{H^\bullet(X,\mathcal F)}$. Since $H^\bullet(X,\mathcal F)$ and $H^\bullet(\sd X,s^*\mathcal F)$ are isomorphic finite dimensional vector spaces, $A^*$ and $A_*$ must be the inverse of $S^*$ and $S_*$, respectively. This verifies the independence of the (co)homology-level definition. 

It is straightforward to verify the following property:
\begin{proposition}
    Let $X$ be a sparse cell complex. Then $\sd X$ is sparse, and the operators $S_\#$, $S^\#$, $A_\#$, $A^\#$ are also sparse. The cup and cap products on $X$ are also sparse.
\end{proposition}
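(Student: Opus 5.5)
The plan is to check each claim by bounding, locally, how many nonzero entries each operator has in each row and column. Fix a uniform constant $D$ that bounds, for every $k$-cell of $X_n$, the number of $(k\pm1)$-cells joined to it, and assume the dimension is uniformly bounded (say by $d$), as it is in the cubical complexes of interest.

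\textbf{Step 1: local finiteness of $X$.} First we show that for each cell $\sigma$, the sets $X_{\le\sigma}$ and $X_{\ge\sigma}$ have size bounded by a constant depending only on $D$ and $d$. Every cell below $\sigma$ is reached by a descending chain of covering relations of length at most $d$. At each step there are at most $D$ choices, so $|X_{\le\sigma}|\le \sum_{j\le d}D^j$. The same bound holds for $X_{\ge\sigma}$ using ascending chains.

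\textbf{Step 2: sparsity of $\sd X$.} A $k$-simplex $[\sigma_0<\cdots<\sigma_k]$ of $\sd X$ has exactly $k+1\le d+1$ faces. Each coface is obtained by inserting one cell $\tau$ into the chain. Such a $\tau$ is either comparable with $\sigma_0$ from below, or lies between consecutive $\sigma_i$ (hence in $X_{\le\sigma_k}$), or lies in $X_{\ge\sigma_k}$. By Step 1 the number of such $\tau$ is bounded. Therefore $\sd X$ is sparse.

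\textbf{Step 3: sparsity of $S_\#$ and its transpose $S^\#$.} The column of $S_\#$ for $\sigma$ is supported on chains ending at $\sigma$. These lie inside $X_{\le\sigma}$, so there are boundedly many of them. In the other direction, each simplex $\rho$ appears only in $S_\#(s(\rho))$. Passing to the sheaf-valued version only introduces blocks of identity maps or restriction maps $\mathcal F_{\sigma,s(\rho)}$. Block sparsity in terms of cells therefore transfers to $\mathbb F_q$-sparsity, provided the stalk dimensions are bounded, which holds in our setting since the local codes have constant size.

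\textbf{Step 4: sparsity of $A_\#$ and $A^\#$.} This is the only place where a choice matters, and I expect it to be the main subtlety. The map $A_\#(\rho)$ is constructed inductively inside $X_{\le s(\rho)}$, so by the carrier condition its column support has bounded size. For the row bound, a cell $\sigma$ appears in $A_\#(\rho)$ only if $\sigma\le s(\rho)$. Hence $s(\rho)\in X_{\ge\sigma}$, and $\rho$ is a chain ending at a cell in $X_{\ge\sigma}$. By Step 1 there are boundedly many such $\rho$. The coefficients $A_{\rho,\sigma}$ may be arbitrary, but only their support matters here. The sheaf-valued version again just adds bounded blocks.

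\textbf{Step 5: sparsity of the products.} The simplicial cup and cap products evaluate on $\sigma$ using only its front and back faces. Consequently, on $\sd X$ each output entry depends on boundedly many input entries, and each input entry influences the outputs on the simplices having it as front or back face, of which there are boundedly many by Step 2. The cup and cap products on $X$ are compositions of three sparse maps ($A^\#$ or $S_\#$, the simplicial product, then $S^\#$ or $A_\#$). Composing sparse operators preserves sparsity with multiplied constants: the associated trilinear tensor has, for each fixed input index, boundedly many nonzero entries. This gives the final claim.
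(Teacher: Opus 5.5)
Your proof is correct: it is the routine local-finiteness check that the paper leaves to the reader, since the paper gives no proof and only calls the statement straightforward. You also make explicit two hypotheses the paper leaves implicit, uniformly bounded dimension and bounded stalk dimensions (in the paper, dimension $t$ and local code length $n$ are constants), and both are genuinely needed: the regular CW structure on the $k$-ball with two cells in each dimension below $k$ has all incidence degrees at most $2$, yet $S_\#$ of its top cell has $2^k$ terms.
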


\begin{example}
    Let $X$ be a cell poset, $\sigma\in X$ be a $k$-cell and $A_\#$ be an approximate inverse. If we abuse the notation and identify $\sigma \in C^k(X,\mathbb{F}_q)$ as the function mapping $\sigma$ to $1$ and else to $0$, then
    \begin{align}
        \sigma\frown \sigma=A_\#([\sigma])\in C_0(X,\mathbb F_q),
    \end{align}
    which is just a single point. This is because, by definition
    \begin{align}
    \begin{aligned}
        \sigma\frown\sigma&=A_\#(A^\#\sigma\frown S_\# \sigma)\\
        &=A_\#(\langle A^\#\sigma, S_\#\sigma\rangle\cdot [\sigma])\\
        &=A_\#(\langle \sigma,A_\#S_\#\sigma\rangle\cdot[\sigma])\\
        &=A_\#([\sigma]).
    \end{aligned}
    \end{align}
\end{example}

\begin{example}
    Let $X=[0,1]\times[0,1]\subseteq\mathbb R^2$ be a square. Then $X$ is a $2$-dimensional cell complex with the cell structure illustrated below. If we choose $A_\#[f]=v_2$, $A_\#[e_{01}]=v_1$, $A_\#[e_{30}]=v_3$, and $A_\#[v_0,f]=e_{30}+e_{23}$, then the only nonzero cup products in $C^1(X,\mathbb F_q)$ are $e_{01}\smile e_{12}=e_{30}\smile e_{23}=f$.
    \begin{figure}[H]
\centering
\begin{tikzpicture}[scale=1.25, >=Stealth]
\begin{scope}
\node at (1,2.9) {$X$};
\draw[thick] (0,0) -- (2,0) -- (2,2) -- (0,2) -- cycle;

\fill (0,0) circle (2pt);
\fill (2,0) circle (2pt);
\fill (2,2) circle (2pt);
\fill (0,2) circle (2pt);

\node[below left]  at (0,0) {$v_0$};
\node[below right] at (2,0) {$v_1$};
\node[above right] at (2,2) {$v_2$};
\node[above left]  at (0,2) {$v_3$};

\node[below] at (1,0) {$e_{01}$};
\node[right] at (2,1) {$e_{12}$};
\node[above] at (1,2) {$e_{23}$};
\node[left]  at (0,1) {$e_{30}$};

\node at (1,1) {$f$};
\end{scope}

\begin{scope}[xshift=6cm]
\node at (1,2.9) {$\sd X$};

\draw[thick] (0,0) -- (2,0) -- (2,2) -- (0,2) -- cycle;

\fill (0,0) circle (2pt);
\fill (2,0) circle (2pt);
\fill (2,2) circle (2pt);
\fill (0,2) circle (2pt);

\fill (1,0) circle (2pt);
\fill (2,1) circle (2pt);
\fill (1,2) circle (2pt);
\fill (0,1) circle (2pt);

\fill (1,1) circle (2pt);

\draw (0,0) -- (1,0);
\draw (2,0) -- (1,0);

\draw (2,0) -- (2,1);
\draw (2,2) -- (2,1);

\draw (2,2) -- (1,2);
\draw (0,2) -- (1,2);

\draw (0,2) -- (0,1);
\draw (0,0) -- (0,1);

\draw (1,0) -- (1,1);
\draw (2,1) -- (1,1);
\draw (1,2) -- (1,1);
\draw (0,1) -- (1,1);

\draw (1,1) -- (0,0);
\draw (1,1) -- (2,0);
\draw (1,1) -- (2,2);
\draw (1,1) -- (0,2);

\node[below left]  at (0,0) {$[v_0]$};
\node[below right] at (2,0) {$[v_1]$};
\node[above right] at (2,2) {$[v_2]$};
\node[above left]  at (0,2) {$[v_3]$};

\node[below] at (1,0) {$[e_{01}]$};
\node[right] at (2,1) {$[e_{12}]$};
\node[above] at (1,2) {$[e_{23}]$};
\node[left]  at (0,1) {$[e_{30}]$};

\node[below right] at (1,1) {$[f]$};

\end{scope}
\end{tikzpicture}

\end{figure}
\end{example}

We can use cup product to construct cohomological invariant forms. For example, given any cycle $\xi \in C_{i+j+k}(X,\F \otimes \F \otimes \F)$, define
\begin{align}
        I_\xi: C^i(X,\mathcal{F})\times C^j(X,\mathcal{F})\times C^k(X,\mathcal{F})\to\mathbb F_q
    \end{align}
   by
    \begin{align}\label{eq:invariant form from cup}
        I_\xi(\alpha,\beta,\theta) = \langle \alpha\smile\beta \smile \theta, \xi \rangle,
    \end{align}
where $\alpha\in C^i(X,\mathcal{F})$, $\beta\in C^j(X,\mathcal{F})$ and $\theta \in C^k(X,\mathcal{F})$. Since $\xi$ is a cycle, the Leibniz rule of cup product implies that $I_\xi$ is invariant under changing any cocycle representative by a coboundary. Hence $I_\xi$ defines a cohomological invariant form. Moreover, when $X$ is sparse, the locality of the cup product implies that $I_\xi$ is a sparse tensor. Therefore, the corresponding cohomological invariant form induces a constant-depth logical $\CCZ$ circuit. We will show in Section \ref{sec:proof} that $I_\xi$ and its generalizations induce nontrivial multi-controlled-$Z$ gates on almost-good qLDPC codes and qLTCs.

\section{Existence of product-expanding punctured Reed--Solomon codes}\label{sec:GRS}

For simplicity, we only prove Theorem \ref{thm:GRS} for $t = 2,3$ and this suffices to exhibit the significant differences between 2-dimensional and higher dimensional cases. By definition, each punctured RS code $\C_i$ is defined by evaluating over the polynomial space of degree $\leq m_i$. To keep the notation simple, we avoid excessive subscripts by letting all codes defined with dimension $m = \min_i \{m_i\}$ (as it will become clear gradually that dealing with small $m$ is more intricate). Then the polynomial space is spanned by monomials: $1,X,\ldots,X^{m-1}$, and for punctured RS codes, the evaluation map $\ev: \mathbb{F}_q^m \to \mathbb{F}_q^n$ has the following matrix representation:
\begin{align}
	\ev\Big( f(X) = \sum_i c_i X^i \Big) = \begin{pmatrix} c_1 & \cdots & c_n	\end{pmatrix} \begin{pmatrix}
		1      & 1      & \dots  & 1      \\
		x_1 & x_2 & \dots  & x_n  \\
		x_1^2  & x_2^2  & \dots  & x_n^2  \\
		\vdots & \vdots & \ddots & \vdots \\
		x_1^{m-1} & x_2^{m-1} & \dots & x_n^{m-1}
	\end{pmatrix},
\end{align}
where $\{x_1,\ldots,x_n\}$ is the evaluation set. Moreover, evaluating products of polynomials is tantamount to taking tensor product of Vandermonde matrices, which produces the tensor product code. We would index the columns of each Vandermonde matrix by $a,b,c \in [n]$ in the following. 

\subsection{Two-dimensional preliminaries}

We first treat the two-dimensional case. Let $M \subseteq [n]^2$ a subset. Within the 2-dimensional grid, let $\Delta(M) = \Delta$ be the maximal size of $M \cap \ell$ among any full line $\ell \subseteq [n]^2$. Let $\vert M \vert$ denote the total number of points. We set $D_2 = \min_{\mathfrak{d} \in \mathbb{Z}}\left\{ \binom{\mathfrak{d}+1}{2} \geq \vert M \vert \right\}$.
If
\begin{align}\label{eq:2D_size}
	(\Delta - 1) + D_2 \leq m-1,
\end{align}
then the following lemma holds. 

\begin{lemma}\label{lemma:2D_independence}
	For any finite field $\mathbb{F}_q$ with $q > 2^{n^2}$, there always exist evaluation sets $\{y_1,\ldots,y_n\}$ and $\{z_1,\ldots,z_n\}$ such that the evaluation map of $\C_2 \otimes \C_3$ is surjective when restricted to $\mathbb{F}_q^M \subseteq \mathbb{F}_q^{[n]^2}$ for any $M \subseteq [n]^2$ satisfying~\eqref{eq:2D_size}. Taking transposes, this is equivalent to saying that	
	\begin{align}
	\{ (1,\ldots,y_b^{m-1}) \otimes (1,\ldots,z_c^{m-1}): (b,c) \in M \}
	\end{align}
	is an independent set for any of these $M$.
\end{lemma}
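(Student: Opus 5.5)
We reduce the statement to the non-vanishing of finitely many polynomials and then count roots. Treat $y_1,\dots,y_n,z_1,\dots,z_n$ as indeterminates $Y_b,Z_c$. For a fixed $M$ satisfying \eqref{eq:2D_size}, the vectors $(1,\ldots,Y_b^{m-1})\otimes(1,\ldots,Z_c^{m-1})$, $(b,c)\in M$, form the columns of an $m^2\times|M|$ matrix $V_M$ over $\mathbb F_q[Y,Z]$. The set is independent exactly when some $|M|\times|M|$ minor of $V_M$ is nonzero. The plan has three steps.

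\textbf{Step 1: exhibit a nonzero minor as a polynomial.} Choose $|M|$ rows, i.e. monomials $Y^iZ^j$ with $0\le i,j\le m-1$, and assign them injectively to the points $(b,c)\in M$, so that the product of diagonal terms is a monomial $\prod_{(b,c)\in M} Y_b^{i(b,c)}Z_c^{j(b,c)}$. We want this monomial to occur in the minor with a nonzero coefficient. The coefficient is a sum over the assignments (permutations) producing the same monomial, and we must show it does not cancel in characteristic $2$.

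The assignment will be chosen in a staircase manner, and this is where \eqref{eq:2D_size} enters. Order the points in each row $b$ (the points of $M$ with first coordinate $b$), of which there are at most $\Delta$. Give them distinct $Z$-exponents $0,1,\dots,\Delta-1$ relative to that row. On top of this, assign a distinct "global" pair taken from the triangle $\{(i,j): i+j\le D_2-1\}$, which has $\binom{D_2+1}{2}\ge|M|$ elements. The total degree then stays at most $(\Delta-1)+D_2\le m-1$, so every exponent lies in range.

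The cleaner way to certify non-vanishing is a leading-monomial argument. Fix a lexicographic order on the variables $Y_1>\dots>Y_n>Z_1>\dots>Z_n$ and take the greedy maximal assignment: process the points in order and give each the largest available exponent. Then argue that the lex-largest monomial of the minor arises from a unique permutation, so its coefficient is $\pm1\ne0$. This is the main obstacle. Uniqueness fails naively when two points share a row $b$ or a column $c$, because swapping their monomials can produce the same product. The fix I would try is an induction on $|M|$. Remove the point whose row $b$ is maximal in the order, and within that row the one with the largest column. Use a Laplace expansion along its column, note that the top $Y_b$-degree term isolates a single row, and apply the hypothesis to $M$ minus that point, which still satisfies \eqref{eq:2D_size}. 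The bound $(\Delta-1)+D_2\le m-1$ is exactly what leaves room for the required exponents at each step.

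\textbf{Step 2: bound degrees and apply Schwartz--Zippel.} Each nonzero minor $P_M$ has total degree at most $|M|(2m-2)$. Let $P=\prod_M P_M$, taken over all $M\subseteq[n]^2$ satisfying \eqref{eq:2D_size}, of which there are at most $2^{n^2}$. Also multiply in $\prod_{b<b'}(Y_b-Y_{b'})\prod_{c<c'}(Z_c-Z_{c'})$ so that the evaluation sets consist of distinct points. Then $P$ is nonzero, since $\mathbb F_q[Y,Z]$ is a domain. By Schwartz--Zippel, a uniformly random point of $\mathbb F_q^{2n}$ fails to be a root of $P$ with probability at least $1-\deg P/q$. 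Here $\deg P\le 2^{n^2}\cdot\mathrm{poly}(n)$, and the hypothesis $q>2^{n^2}$ (with the slack absorbed into "$n$ sufficiently large", or by tightening the count to $2^{n^2}$ times the per-minor degree) makes this probability positive. So suitable $y,z$ exist.

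\textbf{Step 3: the equivalence.} The evaluation map restricted to $M$ has as its matrix the transpose of $V_M$ restricted to $M$. It is therefore surjective onto $\mathbb F_q^M$ exactly when the columns indexed by $M$ are independent, which gives the stated reformulation.
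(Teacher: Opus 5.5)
The gap is Step 1, and it carries the entire content of the lemma: you never show that any $|M|\times|M|$ minor of $V_M$ is a nonzero polynomial. Steps 2 and 3 are routine and match the paper's reduction: one nonzero minor per admissible $M$, take the product, then Schwartz--Zippel.

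There is a side issue in Step 2. Neither of your remedies for the field size works as stated, because $2^{n^2}\mathrm{poly}(n)/2^{n^2}$ grows with $n$. What does work is to note that $D_2\le m-1$ forces $|M|\le\binom{m}{2}<\nu^2n^2/2$, so there are only $2^{(1-\Omega(1))n^2}$ admissible $M$.

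Step 1 fails at several points:
\begin{itemize}
\item The staircase assignment is not guaranteed to be injective. A point with global pair $(i,j)$ and row offset $1$ gets the same monomial $Y^iZ^{j+1}$ as a point of another row with global pair $(i,j+1)$ and offset $0$.
\item You explicitly leave open whether the lex-leading monomial comes from a unique permutation.
\item The induction you sketch does not close this. After removing $p=(b_0,c_0)$, the variable $Y_{b_0}$ still occurs in the columns of the other points of row $b_0$, and $Z_{c_0}$ in those of column $c_0$. So in a Laplace expansion along the column of $p$, the cofactors also depend on $Y_{b_0}$ and $Z_{c_0}$. The top $Y_{b_0}$-degree part therefore mixes several terms instead of isolating one.
\item The inductive hypothesis only gives \emph{some} nonzero minor of $V_{M\setminus\{p\}}$, not one on the particular row set your expansion needs.
\end{itemize}
As a result, the hypothesis $(\Delta-1)+D_2\le m-1$ is never actually used.

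The missing idea, which is the paper's route, is to argue dually over $\mathbb K=\mathbb F_q(Y_1,\dots,Y_n,Z_1,\dots,Z_n)$. For each $p=(b_0,c_0)\in M$, build a polynomial $\Psi_p$ with $\deg_Y\Psi_p,\deg_Z\Psi_p\le m-1$ that vanishes on $M\setminus\{p\}$ but not at $p$. Such a $\Psi_p$ is a combination of the rows of $V_M$ equal to a nonzero multiple of $e_p$. Having one for every $p$ gives surjectivity of $\ev_M$, hence a nonzero minor, with no cancellation bookkeeping.

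The construction goes as follows.
\begin{itemize}
\item Let $\mathbb K_p=\mathbb F_q(Y_b,Z_c: b\ne b_0,\ c\ne c_0)$. Polynomials of total degree at most $D_2$ form a space of dimension $\binom{D_2+2}{2}>|M|$. So there is a nonzero $H_p$ of total degree at most $D_2$ over $\mathbb K_p$ that vanishes at every point of $M$ sharing no coordinate with $p$.
\item Multiply $H_p$ by $\prod(Y-Y_b)$ over the other points $(b,c_0)\in M$, and by $\prod(Z-Z_c)$ over the other points $(b_0,c)\in M$. Each product has at most $\Delta-1$ factors.
\item The resulting $\Psi_p$ is nonzero with coefficients in $\mathbb K_p$. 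Since $Y_{b_0},Z_{c_0}$ are algebraically independent over $\mathbb K_p$, we get $\Psi_p(Y_{b_0},Z_{c_0})\ne 0$.
\item Its partial degrees are at most $(\Delta-1)+D_2\le m-1$, which is exactly where the hypothesis enters.
\end{itemize}
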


As a general property of RS codes, any of its $n-k$ columns in the Vandermonde (generator) matrix are linearly independent, provided the evaluation points are distinct. This makes it quite flexible to choose information sets of free variables of the matrix. However, for a tensor product of Vandermonde matrices, we cannot find free variables arbitrarily. Lemma \ref{lemma:2D_independence} provides one condition when this is still possible. 

\begin{proof}[Proof of Lemma \ref{lemma:2D_independence}]
	Given any specific $M \subseteq [n]^2$, let
	\begin{align}
		\mathcal{P} \coloneq \{f \in \mathbb{F}_q[Y,Z]: \deg f \leq D_2 \} \text{ with } \dim \mathcal{P} = \binom{D_2 + 1}{2} > \vert M \vert. 
	\end{align}
	Polynomials in $\mathcal{P}$ are spanned by monomials $1,X,Y,XY,\ldots,X^{D_2},Y^{D_2}$. 
	
	We also define a polynomial ring $R = \mathbb{F}_q[\cdots,Y_b,\cdots,Z_c,\cdots]$ which exhausting indeterminates $Y_b,Z_c$ over any $b,c \in [n]$. Let 
	\begin{align}
		\mathbb{K} \coloneq \left\{ \frac{A}{B}: A,B \in R, B \neq 0 \right\}
	\end{align}
	be the fraction field obtained from $R$. We prove the lemma over $\mathbb{K}$ and find concrete evaluation points in the last step. It will become clear subsequently that working with the generic field is essential to overcome several difficulties in the proof. For any fixed point $p = (b_0,c_0) \in M$, we also define another polynomial ring $\mathbb{F}_q[\cdots,\hat{Y}_{b_0},\cdots,\hat{Z}_{c_0},\cdots]$ with $Y_{b_0}, Z_{c_0}$ being removed. The associated fraction field $\mathbb{K}_p \subseteq \mathbb{K}$.
	
	Now, we treat $\mathcal{P}$ as a polynomial space over $\mathbb{K}_p$. For the set
	\begin{align}
		T_p = \{(b,c) \in M: b \neq b_0, c \neq c_0\},
	\end{align}
	we consider linear equations
	\begin{align}\label{eq:H_p}
		\begin{pmatrix}
			\vdots & \vdots & \vdots & \vdots & \cdots & \vdots & \vdots \\
			1 & Y_b & Z_c & Y_b Z_c & \cdots & Y_b^{D_2} & Z_c^{D_2} \\
			\vdots & \vdots & \vdots & \vdots & \cdots & \vdots & \vdots
		\end{pmatrix} \begin{pmatrix} \vdots \\ f_{ij} \\ \vdots \end{pmatrix} = 0.
	\end{align}
	As a reminder, $Y_b,Z_c$ are scalars in $\mathbb{K}_p$ which are not evaluated over $\mathbb{F}_q$. Since
	\begin{align}
		\dim_{\mathbb{K}_p} \mathcal{P} \geq \vert M \vert > \vert T_p \vert,  
	\end{align}
	we can solve $f_{ij}$ and define $H_p(Y,Z) = \sum_{i+j \leq D_2} f_{ij} Y^i Z^j \in \mathcal{P}$ such that $H_p(Y,Z)$ vanishes at any $(Y_b,Z_c)$ with $b \neq b_0$ and $c \neq c_0$. More importantly, $H_p(Y,Z)$ can be redefined as a polynomial over $\mathbb{K}$ and by the definition of $\mathbb{K}$, $(Y_{b_0}, Z_{c_0})$ are algebraically independent over $\mathbb{K}_p$. This implies $H_p(Y_{b_0}, Z_{c_0}) \neq 0 \in \mathbb{K}$.
	
	On the other hand, we consider polynomials 
	\begin{align}
		B_p(Y) \coloneq \prod_{b \neq b_0,(b,c_0) \in M} (Y - Y_b), \quad C_p(Z) \coloneq \prod_{c \neq c_0,(b_0,c) \in M} (Z - Z_c).
	\end{align}
	The product $\Psi_p(Y,Z) \coloneq B_p(Y) C_p(Z) H_p(Y,Z)$ now vanishes for any $(Y_b,Z_c)$ labeled by $(b,c) \neq (b_0,c_0)$ inside $M$. By definition, the degree of $B_p$ and $C_p$ are bounded by $\Delta - 1$, and thus
	\begin{align}
		& \deg_Y \Psi(Y,Z) \leq \deg B_p + \deg_Y P \leq (\Delta - 1) + D_2 \leq m-1, \\ 
		& \deg_Z \Psi(Y,Z) \leq \deg C_p + \deg_Z P \leq (\Delta - 1) + D_2 \leq m-1 \\
		\implies &
		\Psi_p \in \{f \in \mathbb{K}[Y,Z]: \deg_Y f, \deg_Z f \leq m-1 \} \cong \mathbb{K}^{m \times m}.
	\end{align}
	Since we can do this for any point in $M$, the evaluation map
	\begin{align}
		\ev_M: \mathbb{K}^{m \times m} \to \mathbb{K}^{M} \subseteq \mathbb{K}^{[n]^2}
	\end{align}
	is surjective. Equivalently
	\begin{align}\label{eq:column_tensors}
		(b,c) \mapsto (1,\ldots,Y_b^{m-1}) \otimes (1,\ldots,Z_c^{m-1}) \in \mathbb{K}^{m \times m}.
	\end{align}
	is injective and the corresponding tensors are independent over $\mathbb{K}$. 
	
	The remaining problem is verifying this result for every admissible $M$ and moving back to the number field $\mathbb{F}_q$. The surjectivity of the evaluation map $\ev_M$ indicates there is a nonzero minor $\det_M (Y_b^i Z_c^j)$ in tensor product of Vandermonde matrices over $\mathbb{K}$. In the formal sense, this minor is the evaluation of a nontrivial polynomial from another polynomial ring
	\begin{align}
		\mathbb{K}[\cdots,\tilde{Y}_b,\cdots,\tilde{Z}_c,\cdots]
	\end{align}
	where $\tilde{Y}_b,\tilde{Z}_c$ are indeterminates but different from $Y_b,Z_c$. By \eqref{eq:column_tensors}, the polynomial of minor is simply a sum of powers of $\tilde{Y}_b$ and $\tilde{Z}_c$. The coefficients are always $1 \in \mathbb{K}$. We denote it by $P_M$. As a comparison, coefficients $f_{ij}$ solved in Eq.~\eqref{eq:H_p} are complicated rational functions in $\mathbb{K}$. 
	
	Since there are only finitely many admissible $M \subseteq [n]^2$ and each is assigned with a nontrivial polynomial $P_M$ given by a minor, 
	\begin{align}
		P(\cdots,\tilde{Y}_b,\cdots,\tilde{Z}_c,\cdots) = \prod_M P_M(\cdots,\tilde{Y}_b,\cdots,\tilde{Z}_c,\cdots) \text{ over } \mathbb{K}
	\end{align}
	is nontrivial and the coefficients are still $1 \in \mathbb{K}$. As a result, given $\mathbb{F}_q$ with $q > 2^{n^2}$, we must have
	\begin{align}
		P(\cdots,\tilde{Y}_b,\cdots,\tilde{Z}_c,\cdots) = \prod_M P_M(\cdots,\tilde{Y}_b,\cdots,\tilde{Z}_c,\cdots) \text{ over } \mathbb{F}_q
	\end{align}
	Since the degree of each indeterminate cannot exceed
	\begin{align}
		\frac{m(m-1)}{2} \cdot \text{ total number of possible } M < 2^{n^2}
	\end{align}
	for large $n$, by Schwartz--Zippel lemma, we can find $y_1,\ldots,y_n$ and $z_1,\ldots,z_n \in \mathbb{F}_q$ such that
	\begin{align}
		P(\cdots,z_b,\cdots,z_c,\cdots) \neq 0 \implies P_M(\cdots,z_b,\cdots,z_c,\cdots) \neq 0 
	\end{align}
	for each $M$. This finishes the proof.
\end{proof}

\begin{remark}
	As a quick check, let $(1,\ldots,y_b^{m-1}) = (1,\ldots,z_c^{m-1})$ be the same evaluation set in $\mathbb{F}_q^n$. Then a row of Eq.~\eqref{eq:H_p} in becomes
	\begin{align}
		\begin{pmatrix}
			1 & y_b & y_c & y_b^2 & \cdots & y_b^{2 D_2} 
		\end{pmatrix}
	\end{align}
	Let $D_2 = m-1$ be the largest possible value and let $M = \{(b,c): b = c \in [n]\}$ be the diagonal set with $(b_0,c_0) = (n,n)$. It is true that we can always find nontrivial $f_{ij}$ to solve Eq.~\eqref{eq:H_p} as the column size is far beyond the row size. However, the matrix is occupied by too many repeated columns and it is easy to check that $H_p$ must also vanish at $(n,n)$. This is one prominent reason that we have to use generic distinct evaluation sets.
\end{remark}

Let $M \subseteq [n]^2$ an $\epsilon$-closed set, it would contain full lines by Definition \ref{def:epsilon_close}. Then the above requirement that $\Delta(M)$ is small can never be fulfilled. Nevertheless, the 2-dimensional independence still holds at points that are not in any of these full lines. To be precise, let $M \subseteq [n]^2$ an $\epsilon$-closed set. Let $M' = M \setminus L(M)$ with $L(M)$ being full lines inside $M$. Then $\Delta(M') \leq \epsilon n$, $\vert M' \vert \leq \epsilon n^2$. Given fixed $\nu \in (0,1)$, let $m = \nu n$ for sufficiently large $n$ and let $D_2 = \sqrt{2 \vert M' \vert}$. More rigorously, we should take the smallest integers greater than each of these numbers, but this slight difference is omitted in the following computations. The inequality 
\begin{align}\label{eq:epsilon_2D}
	(\epsilon n - 1) + \sqrt{2\epsilon} n + \epsilon n \leq m-1 = \nu n -1
\end{align}
can hold for small $\epsilon$ which depends only on $\nu$ and $t = 2$. Then the following lemma holds and Theorem \ref{thm:GRS} can be verified in 2D.

\begin{lemma}\label{lemma:2D_independence_2}
	For any finite field $\mathbb{F}_q$ with $q > 2^{n^2}$, there are always evaluation sets $\{y_1,\ldots,y_n\}$ and $\{z_1,\ldots,z_n\}$ such that for any $\epsilon$-closed $M \subseteq [n]^2$ with $\epsilon$ satisfying \eqref{eq:epsilon_2D} and for any $(b,c) \in M' \subseteq M$, there is a polynomial whose evaluation vanishes inside $(M' \setminus \{(b,c)\} ) \cup L(M) = M \setminus \{(b,c)\}$.
\end{lemma}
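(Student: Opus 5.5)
We adapt the proof of Lemma~\ref{lemma:2D_independence}: first build the required polynomial over the generic field $\mathbb{K}$, then specialize to $\mathbb{F}_q$ by a Schwartz--Zippel argument over all admissible $M$. Fix an $\epsilon$-closed $M\subseteq[n]^2$ with $\epsilon$ satisfying \eqref{eq:epsilon_2D}. Let $L(M)$ consist of full row lines $\{b\}\times[n]$ for $b\in R$ and full column lines $[n]\times\{c\}$ for $c\in C$. Put $M'=M\setminus L(M)$. Every line not contained in $M$ meets $M$ in fewer than $\epsilon n$ points, so $|R|,|C|<\epsilon n$; otherwise a full column, resp.\ row, would lie in $M$ and the line count would contradict the closedness bound on lines crossing them. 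Since $M'$ avoids the full lines, we get $\Delta(M')\le\epsilon n$ and $|M'|\le\epsilon n^2$.

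For $p=(b_0,c_0)\in M'$ we need $\Psi_p$ vanishing on $M\setminus\{p\}$ but not at $p$, with $\deg_Y,\deg_Z\le m-1$. We construct it as a product of four factors.
\begin{itemize}
\item The factor $L_p(Y,Z)=\prod_{b\in R}(Y-Y_b)\prod_{c\in C}(Z-Z_c)$ kills all of $L(M)$ and costs at most $\epsilon n$ in each degree. Since $p\notin L(M)$, we have $b_0\notin R$ and $c_0\notin C$, so $L_p(Y_{b_0},Z_{c_0})\neq0$ by distinctness of the generic indeterminates.
\item The factors $B_p(Y)$ and $C_p(Z)$ are defined as before, now relative to $M'$. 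They kill the points of $M'$ in row $b_0$ and column $c_0$ other than $p$, each costing at most $\Delta(M')-1\le\epsilon n-1$.
\item The factor $H_p$ of total degree $D_2\approx\sqrt{2|M'|}\le\sqrt{2\epsilon}\,n$ is obtained by solving the linear system \eqref{eq:H_p} over $\mathbb{K}_p$ on $T_p=\{(b,c)\in M': b\ne b_0,\ c\ne c_0\}$. The dimension count $\binom{D_2+1}{2}>|T_p|$ gives a nonzero solution.
\end{itemize}
Because $H_p$ has coefficients in $\mathbb{K}_p$ while $Y_{b_0},Z_{c_0}$ are algebraically independent over $\mathbb{K}_p$, we get $H_p(Y_{b_0},Z_{c_0})\ne0$. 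Hence $\Psi_p=L_pB_pC_pH_p$ vanishes exactly where required and is nonzero at $p$. Its degree in each variable is at most $\epsilon n+(\epsilon n-1)+\sqrt{2\epsilon}\,n\le m-1$, and this is precisely the inequality \eqref{eq:epsilon_2D}. So $\Psi_p$ lies in the $m\times m$ tensor polynomial space.

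The remaining task is to pass from $\mathbb{K}$ to concrete evaluation points in $\mathbb{F}_q$. Over $\mathbb{K}$, the set $\{\Psi_p: p\in M'\}$ shows that the columns of $M'$ in the tensor Vandermonde matrix are independent modulo the span of the conditions imposed by $L(M)$. Concretely, the restricted evaluation $\ev:\{f:\ f|_{L(M)}=0\}\to\mathbb{K}^{M'}$ is surjective. This gives a nonzero minor of an explicit matrix with entries monomials in $\tilde Y_b,\tilde Z_c$. One can take, for instance, the matrix whose rows are a basis of $\prod_{b\in R}(Y-\tilde Y_b)\prod_{c\in C}(Z-\tilde Z_c)\cdot\mathcal{Q}$, with $\mathcal{Q}$ the monomials of bidegree at most $m-1-\epsilon n$, evaluated on $M'$. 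This minor is a nonzero polynomial $P_M$ with coefficients in $\mathbb{F}_q$.

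We then take $P=\prod_M P_M$ over the at most $2^{n^2}$ admissible $M$. Each factor has degree polynomial in $n$, so $\deg P<2^{n^2}\le q$ for large $n$. Schwartz--Zippel then yields $y_b,z_c\in\mathbb{F}_q$ with $P\ne0$, and we additionally require the points to be distinct by multiplying $P$ by the Vandermonde discriminants. At such a point every $P_M$ is nonzero, which gives the polynomial claimed for every admissible $M$ and every $(b,c)\in M'$ at once.

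The main obstacle is this specialization step. The nonvanishing of $H_p(Y_{b_0},Z_{c_0})$ uses the rational-function field $\mathbb{K}_p$, and we must repackage the existence of $\Psi_p$ as the nonvanishing of a fixed polynomial $P_M$ with the same structure at every $p$. The cleanest route is to certify surjectivity of the restricted evaluation map by a single determinant, as above. The degree bookkeeping also needs care, namely that the $L(M)$ factor costs at most $\epsilon n$ and not $\epsilon n$ per line.
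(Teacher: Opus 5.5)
Your proposal is correct and is essentially the paper's argument. The paper applies Lemma~\ref{lemma:2D_independence} directly to the line-free set $M'$ with Vandermonde matrices of $m-\epsilon n$ rows, so one choice of evaluation points serves all small sets at once, and only afterwards multiplies by $\prod_{b\in I_2}(Y-y_b)\prod_{c\in I_3}(Z-z_c)$ over $\mathbb{F}_q$; nonvanishing at $p$ then needs only distinctness of the points, so folding the line factor into the generic minor, as you do, works but makes your ``main obstacle'' heavier than necessary.

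Two bookkeeping slips need fixing. First, $|R|\ge\epsilon n$ gives no contradiction but forces $M=[n]^2$, where $M'=\emptyset$ and the claim is vacuous. Second, in the Schwartz--Zippel step, $2^{n^2}$ factors of polynomial degree do not give $\deg P<2^{n^2}$; you must use that there are at most $2^{2n}\sum_{k\le\epsilon n^2}\binom{n^2}{k}$ admissible $M$, which is below $2^{n^2}/\mathrm{poly}(n)$ for small $\epsilon$ and large $n$ (the paper relies on the same kind of count).
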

\begin{proof}
	For any $\epsilon$-closed $M$, we apply Lemma \ref{lemma:2D_independence} to $M'$. When \eqref{eq:epsilon_2D} holds, by using Vandermonde matrices with $m- \epsilon n$ rows, we can find $\Psi_p(Y,Z)$ with even lower degrees:
	\begin{align}
		(\Delta(M') - 1) + \sqrt{2 \vert M' \vert} \leq m-1 - \epsilon n.
	\end{align} 
	Then, with respect to any concrete evaluation sets $\{y_1,\ldots,y_n\}$ and $\{z_1,\ldots,z_n\}$, the evaluation of $\Psi_p(Y,Z)$ over $\mathbb{F}_q$ supports solely on the point $p \in M'$. 
	
	Now, let $I_2, I_3 \subseteq [n]$ denote the index of full $z$-lines and $y$-lines, respectively. Then
	\begin{align}
		\Psi_p(Y,Z) \prod_{b \in I_2} (Y - y_b) \prod_{c \in I_3} (Z - z_c)
	\end{align} 
	has degree
	\begin{align}\label{eq:2D_independence_2}
		\deg_Y \Psi_p(Y,Z) + \deg \prod_{b \in I_2} (Y - y_b) \leq 	(\epsilon n - 1) + \sqrt{2\epsilon} n  + \epsilon n \leq m-1 
	\end{align}
	Similar inequality holds for $\deg_Z$ and thus the polynomial represents a tensor product code in $\C_2 \otimes \C_3$ that will also vanish on any of the full lines. 
\end{proof}

\begin{corollary}\label{coro:GRS_extendable_2D}
	Any $\epsilon$-closed $M$ with $\epsilon$ satisfying \eqref{eq:epsilon_2D} is extendable for the punctured RS codes $\C_1,\C_2$ obtained from the evaluation sets in Lemma \ref{lemma:2D_independence_2}. As a result
	\begin{align}
		\rho(\C_1^\perp,\C_2^\perp) \geq \frac{\epsilon^2}{2(2^2+1)^2},
	\end{align}
	where $\epsilon$ depends only on $\nu$ and $t = 2$.
\end{corollary}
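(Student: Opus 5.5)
The proof will go through the extendability framework of Proposition~\ref{lemma:inner-g/ext} and Lemma~\ref{lemma:inner-g/rho}, applied to the pair of codes $\C_1^\perp,\C_2^\perp$. Lemma~\ref{lemma:inner-g/rho} reduces the bound $\rho(\C_1^\perp,\C_2^\perp)\ge \epsilon^2/(2(2^2+1)^2)$ to showing that every $\epsilon$-closed $M\subseteq[n]^2$ is inner-generated for $\C_1^\perp\boxplus\C_2^\perp$. By Proposition~\ref{lemma:inner-g/ext}, this is equivalent to $M$ being extendable for $(\C_1^\perp)^\perp\otimes(\C_2^\perp)^\perp=\C_1\otimes\C_2$. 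So the main task is the extendability statement. The bound on $\rho$ then follows by direct substitution, with $\epsilon$ chosen from \eqref{eq:epsilon_2D}, which depends only on $\nu$ and $t=2$.

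To prove extendability, fix an $\epsilon$-closed $M$ and split it as $M=L(M)\sqcup M'$. Let $I_1,I_2\subseteq[n]$ index the full lines in the two directions. Let $c_M\in\mathbb F_q^M$ restrict to codewords of $\C_1$ or $\C_2$ on every full line in $L(M)$. The goal is a global codeword $c\in\C_1\otimes\C_2$ with $c|_M=c_M$, built in two steps.

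\textbf{Step 1: the full lines.} First produce $c^{(0)}\in\C_1\otimes\C_2$ that agrees with $c_M$ on $L(M)$. The data on the full lines forms a codeword on the union of rows $I_1$ and columns $I_2$, and it is consistent on their intersections. Since $|I_1|,|I_2|\le n$ and the codes are MDS, I would take the standard extension, viewed as a bivariate polynomial of bidegree $<m$. Concretely, write the data on each full line as a univariate polynomial of degree $<m$. Then combine these by Lagrange-type interpolation with the factors $\prod_{b\in I_1}(Y-y_b)$ and $\prod_{c\in I_2}(Z-z_c)$, using the usual identity
\[
\C_1\otimes\C_2 \to \Big(\textstyle\bigoplus_{\text{rows in }I_1}\C_2\Big)\times_{\text{overlap}}\Big(\textstyle\bigoplus_{\text{cols in }I_2}\C_1\Big)
\]
being surjective. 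This is the classical fact that a union of full rows and columns is extendable for tensor codes. It holds whenever $|I_1|,|I_2|$ are at most the respective dimensions; otherwise it needs a separate check. Since $M$ is $\epsilon$-closed with $\epsilon$ small, a count shows $|I_i|\le$ the dimension, or else the relevant lines are all of $[n]$ and the statement is easy.

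\textbf{Step 2: the remaining points.} Subtract $c^{(0)}$. The residual data is zero on $L(M)$ and arbitrary on $M'$. For each $p\in M'$, Lemma~\ref{lemma:2D_independence_2} supplies a polynomial $\Psi_p\prod_{b\in I_1}(Y-y_b)\prod_{c\in I_2}(Z-z_c)$ in $\C_1\otimes\C_2$. Its evaluation is nonzero at $p$ and vanishes on $M\setminus\{p\}$. After rescaling, these form a dual basis on $M'$ that vanishes on $L(M)$. The appropriate linear combination plus $c^{(0)}$ is the required extension.

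The main obstacle I expect is Step 1. It needs the full-line extension to be compatible, with degrees at most $m-1$ in each variable, precisely when $M$ contains many full lines. This is where Lemma~\ref{lemma:2D_independence_2}'s degree budget \eqref{eq:epsilon_2D} has to leave enough room.
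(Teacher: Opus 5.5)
Your proposal is correct and follows essentially the same route as the paper. Both interpolate the full-line data with the Lagrange factors $\phi_a,\psi_b$, using $|I_i|<\epsilon n\le m$ from $\epsilon$-closedness; both then correct the residual on $M'$ point by point via Lemma~\ref{lemma:2D_independence_2} and conclude through Proposition~\ref{lemma:inner-g/ext} and Lemma~\ref{lemma:inner-g/rho}. One small correction: the degree budget \eqref{eq:epsilon_2D} is consumed in your Step 2 (inside Lemma~\ref{lemma:2D_independence_2}), not Step 1, which needs only $|I_i|<\epsilon n\le m$.
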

\begin{proof}
	Let $M$ be any $\epsilon$-closed set with $\epsilon$ satisfying \eqref{eq:epsilon_2D}. Given any local data of codewords on sparse points and full lines in $M$. We first extend the data from full lines $L(M)$. In 2 dimensions, there are always $I_1, I_2 \subseteq [n]$ for which
	\begin{align}
		L(M) = (I_1 \times [n]) \cup ([n] \times I_2) 
	\end{align}
	The geometry of $L(M)$ becomes drastically complicated in high dimension (see Lemma \ref{lemma:3D_line_ext}), but the 2D case is simple. By definition, $\vert I_i \vert < \epsilon n$. In each direction, for each $a,b$, let
	\begin{align}
		\phi_a(X) \coloneq \mathcal{N}_a \prod_{a' \neq a \in I_1}(X - x_{a'}), \quad
		\psi_b(Y) \coloneq \mathcal{N}_b \prod_{b' \neq b \in I_2}(Y - y_{b'}).
	\end{align}
	where $\mathcal{N}_i$ are normalization constant such that, e.g., $\phi_a(x_a) = 1$ and thus the evaluation within $I_1$ is the delta function. Note that for general classical codes, finding such codewords use information sets \cite{Dinur2006,Meir2021,PK2023RobustlyTestable}. On the other hand, local codewords in $I_1 \times [n]$ and $[n] \times I_2$ are
	\begin{align}
		& A_a(Y) \in \C_2, \quad a \in I_1, \quad \deg_Y A_a \leq m-1 \\
		& B_b(X) \in \C_1, \quad b \in I_2, \quad \deg_X B_b \leq m-1.
	\end{align} 
	They agree with each other on the intersections $A_a(y_b) = B_b(x_a)$, $a \in I_1, b \in I_2$. We consider the interpolation:
	\begin{align}
		F(X,Y) = \sum_{a \in I_1} \phi_a(X) A_a(Y) + \sum_{b \in I_2} \psi_b(Y) B_b(X) 
		- \sum_{a \in I_1, b \in I_2} \phi_a(X) \psi_b(Y) A_a(y_b) 
	\end{align}
	By checking the degrees, we see $F(X,Y) \in \C_1 \otimes \C_2$. Then we show that it matches the local data: e.g., for any $y$-line $\{a'\} \times [n]$, 	
	\begin{align}
		& F(x_{a'},Y) = A_{a'}(Y) + \sum_{b \in I_2} \psi_b(Y) B_b(x_{a'}) 
		- \sum_{b \in I_2} \psi_b(Y) A_{a'}(y_b) = A_{a'}(Y)
	\end{align}
	by the intersection agreement: $B_b(x_{a'}) = A_{a'}(y_b)$.
	
	The global extension $F(X,Y)$ would change the residual data on sparse points, but each of them can be redefined independently by using Lemma \ref{lemma:2D_independence_2} and the involved polynomials vanish on $L(M)$. As a result, codewords on full lines are intact. This establishes the global extension and the result follows from Lemma \ref{lemma:inner-g/rho}.
\end{proof}

Demonstrating $\rho(\C_1,\C_2) \geq \frac{\epsilon^2}{2(2^2+1)^2}$ is more straightforward by using Lemma \ref{lem:diagonal-scaling-preserves-extendability} and Corollary \ref{coro:RS_dual} in 2-dimensional.


\subsection{Three- and higher-dimensional extendability}

We now tackle the three-dimensional case. Let $M \subseteq [n]^3$ be $\epsilon$-closed. By definition, $M$ may contain full lines in each of the following three directions:
\begin{align}
	\{a\} \times \{b\} \times [n], \quad \{a\} \times [n] \times \{c\}, \quad [n] \times \{b\} \times \{c\}.
\end{align}
A key difference from the two-dimensional case is that, if one defines
\begin{align}
	I_1 = \{a \in [n]: \text{ if there is some } \{a\} \times \{b\} \times [n] \text{ or } \{a\} \times [n] \times \{c\} \subseteq M \}
\end{align}
with $I_2$ and $I_3$ defined analogously, it is impossible to infer that $\vert I_i \vert \leq \epsilon n$. A counterexample is obtained by taking the full lines to be
\begin{align}\label{eq:diagonal_lines}
	\{a\} \times \{b\} \times [n], \ a = b = 1,\ldots,n.
\end{align}
The difficulty here is that, if we define
\begin{align}
	\prod_{a \in I_1} (X - x_a) \prod_{b \in I_2} (Y - y_b)
\end{align}
to annihilate the nonzero values on these full lines, the degree is exactly $n$ and hence cannot be realized as a codeword in $\C_2 \otimes \C_3$ for small $m$. This reflects the fundamental difficulty of achieving product expansion in three and higher dimensions. Fortunately, Lemma \ref{lemma:2D_independence_2} and a weaker three-dimensional version without full lines help resolve this issue.

Analogously to the setting of Lemma \ref{lemma:2D_independence}, let $M \subseteq [n]^3$ be a subset. Suppose that the maximal size of a line in any direction is $\Delta = \Delta(M)$, the maximal slice size in any direction is $\Sigma = \Sigma(M)$, and the total number of points is $\vert M \vert$. We consider the condition
\begin{align}\label{eq:3D_size}
	2\left( (\Delta - 1) + D_2 \right) + D_3 \leq m-1,
\end{align}
where
\begin{align}
	D_2 = \min_{\mathfrak{d} \in \mathbb{Z}}\left\{ \binom{\mathfrak{d}+1}{2} \geq \Sigma \right\}, \qquad
	D_3 = \min_{\mathfrak{d} \in \mathbb{Z}}\left\{ \binom{\mathfrak{d}+2}{3} \geq  \vert M \vert \right\}.
\end{align}
Compared with the 2D case, the additional term $D_3 \approx \vert M \vert^{1/3}$ appears. In particular, \eqref{eq:3D_size} prevents $M$ from containing any full line. The following lemma then holds in 3D.

\begin{lemma}\label{lemma:3D_independence}
	For any finite field $\mathbb{F}_q$ with $q > 2^{n^3}$, there exist evaluation sets $\{x_1,\ldots,x_n\}$, $\{y_1,\ldots,y_n\}$, and $\{z_1,\ldots,z_n\}$ such that, for any $M \subseteq [n]^3$ satisfying \eqref{eq:3D_size}, the evaluation map on $\C_1 \otimes \C_2 \otimes \C_3$ is surjective when restricted to $\mathbb{F}_q^M \subseteq \mathbb{F}_q^{[n]^3}$. Taking transposes, this is equivalent to saying that
	\begin{align}
		\{ (1,\ldots,x_a^{m-1}) \otimes (1,\ldots,y_b^{m-1}) \otimes  (1,\ldots,z_c^{m-1}): (a,b,c) \in M \}
	\end{align}
	is an independent set for every such $M$.
\end{lemma}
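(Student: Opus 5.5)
We follow the template of Lemma~\ref{lemma:2D_independence}: first work over the generic field $\mathbb{K}$, the fraction field of $R=\mathbb{F}_q[\ldots,X_a,\ldots,Y_b,\ldots,Z_c,\ldots]$. For each point $p=(a_0,b_0,c_0)\in M$ we build a polynomial $\Psi_p(X,Y,Z)$ with $\deg_X,\deg_Y,\deg_Z\le m-1$ that vanishes on $M\setminus\{p\}$ and not at $p$. Surjectivity of $\ev_M$ over $\mathbb{K}$ then follows. To pass to $\mathbb{F}_q$ we use the same argument as in 2D: take the product over all admissible $M$ of a nonzero maximal minor $P_M$, bound its degree in each indeterminate by roughly $m^2\cdot 2^{n^3}$, and apply Schwartz--Zippel with $q>2^{n^3}$.

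The construction of $\Psi_p$ peels $M$ in layers according to how many coordinates a point shares with $p$.
\begin{itemize}
\item \emph{Line layer.} Points differing from $p$ in exactly one coordinate lie on the three lines through $p$. They are killed by $\prod_{a\ne a_0,(a,b_0,c_0)\in M}(X-X_a)$ and its $Y$- and $Z$-analogues, each of degree at most $\Delta-1$.
\item \emph{Slice layer.} Points differing from $p$ in exactly two coordinates lie in the three coordinate slices through $p$. In the slice $\{a_0\}\times[n]\times[n]$ we take the 2D interpolant $H^{(1)}_p(Y,Z)$ of degree at most $D_2$ from Lemma~\ref{lemma:2D_independence}, constructed over $\mathbb{K}_p$. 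It vanishes at the points of the slice with $b\ne b_0$ and $c\ne c_0$. Since $(Y_{b_0},Z_{c_0})$ is generic over $\mathbb{K}_p$, it does not vanish at $p$. We do the same for the other two slices.
\item \emph{Bulk layer.} Points with $a\ne a_0$, $b\ne b_0$, $c\ne c_0$ are handled by a trivariate $H_p(X,Y,Z)$ of total degree at most $D_3$. It exists because $\binom{D_3+2}{3}\ge|M|>|T_p|$, and it is nonzero at $p$ because $(X_{a_0},Y_{b_0},Z_{c_0})$ is algebraically independent over $\mathbb{K}_p$.
\end{itemize}

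The product of all these factors vanishes on $M\setminus\{p\}$: every such point has some coordinate different from $p$'s, so it falls into exactly one of the layers above. For the degree count, each variable, say $Y$, appears in one line factor ($\le\Delta-1$), in two slice interpolants ($\le 2D_2$) and in the bulk factor ($\le D_3$). This gives $(\Delta-1)+2D_2+D_3\le 2((\Delta-1)+D_2)+D_3\le m-1$ by \eqref{eq:3D_size}.

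\textbf{Main obstacle: nonvanishing at $p$.} The slice interpolants $H^{(i)}_p$ and the bulk factor must all be nonzero at $p$. Each has coefficients in $\mathbb{K}_p$, and each is a nonzero polynomial over $\mathbb{K}_p$ that we then evaluate at indeterminates transcendental over $\mathbb{K}_p$. So each factor is nonzero at $p$, and the product is nonzero because $\mathbb{K}$ is a field. The care needed is to define every factor using only the variables indexed away from $a_0,b_0,c_0$. This is exactly why we work with $\mathbb{K}_p$ rather than $\mathbb{K}$, and why the generic field is essential.
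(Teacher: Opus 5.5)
Your proposal is correct and follows essentially the paper's proof. The shared steps are: work over the generic field $\mathbb{K}$; multiply a bulk interpolant $H_p$ of total degree $D_3$ by factors that handle the three coordinate slices through $p$; obtain nonvanishing at $p$ from the algebraic independence of $(X_{a_0},Y_{b_0},Z_{c_0})$ over $\mathbb{K}_p$; then specialize to $\mathbb{F}_q$ through nonzero minors and Schwartz--Zippel. The one difference is a small refinement. The paper uses the full 2D polynomials of Lemma~\ref{lemma:2D_independence} as slice factors, and each of these already contains two line factors, so every line through $p$ is killed twice and the bound is $2((\Delta-1)+D_2)+D_3$. Separating the line factors from the slice interpolants, as you do, gives the slightly sharper $(\Delta-1)+2D_2+D_3$.

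One slip in the last step. A per-indeterminate degree of order $m^2\,2^{n^3}$ does not fit under $q>2^{n^3}$. You should count only admissible $M$, which the paper also does implicitly. These satisfy $|M|\le\binom{D_3+2}{3}\le\binom{m+1}{3}$, so for $m=\nu n$ there are at most $2^{(1-\delta)n^3}$ of them for some $\delta>0$. The product of the minors then has degree $\mathrm{poly}(n)\cdot 2^{(1-\delta)n^3}<2^{n^3}$ in each indeterminate for large $n$.
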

\begin{proof}
	The proof is a straightforward three-dimensional generalization of Lemma \ref{lemma:2D_independence}. Fix $M$, let $p = (a_0,b_0,c_0) \in M$, and set
	\begin{align}
		T_p = \{(a,b,c) \in M: a \neq a_0, b \neq b_0, c \neq c_0 \}, \quad
		\mathcal{P} = \{f(X,Y,Z): \deg f \leq D_3 \}.
	\end{align} 
	By the definition of $D_3$, $\dim \mathcal{P} = \binom{D_3+2}{3} \geq \vert M \vert > \vert T_p \vert$. Let $\mathbb{K}$ and $\mathbb{K}_p$ be the fraction fields obtained from the polynomial rings
	\begin{align}
		\mathbb{F}_q[\cdots,X_a,\cdots,Y_b,\cdots,Z_c,\cdots], \qquad 		\mathbb{F}_q[\cdots,\hat{X}_{a_0},\cdots,\hat{Y}_{b_0},\cdots,\hat{Z}_{c_0},\cdots],
	\end{align}
	respectively. We seek $H_p$ by choosing coefficients $f_{ijk}$ such that
	\begin{align}
		\sum_{i+j+k \leq D_3} f_{ijk} X_a^i Y_b^j Z_c^k = 0
	\end{align}
	for all $(a,b,c) \in T_p$, while $H_p(X_{a_0},Y_{b_0},Z_{c_0}) \neq 0$ in $\mathbb{K}$.
	
	It remains to handle the points lying in the coordinate slices through $p$:
	\begin{align}
		\{(b,c): (a_0,b,c) \in M \}, \quad \{(a,c): (a,b_0,c) \in M \}, \quad \{(a,b): (a,b,c_0) \in M \}.
	\end{align}
	We use Lemma \ref{lemma:2D_independence} to construct three two-dimensional polynomials $P_X,P_Y,P_Z$ that vanish on the relevant nonzero points in these directions, and set
	\begin{align}
		\Psi_p(X,Y,Z) = H_p(X,Y,Z) P_X(Y,Z) P_Y(X,Z) P_Z(X,Y).
	\end{align}
	Note that, for example, $P_X(Y,Z)$ is contained in $\mathbb{K}_p[X,Y,Z] \subseteq \mathbb{K}[X,Y,Z]$ although it does not depend on $X$. 
	By assumption, each slice has bounded size; hence
	\begin{align}
		& \deg_Y P_X, \ \deg_Z P_X, \ \deg_X P_Y, \ \deg_Z P_Y, \ \deg_X P_Z, \ \deg_Y P_Y \leq (\Delta-1) + D_2, \\
		& \deg_X H_p, \ \deg_Y H_p, \ \deg_Z H_p \leq D_3, \\
		\implies & \deg_X \Psi_p(X,Y,Z) \leq 2\left( (\Delta - 1) + D_2 \right) + D_3 \leq m-1. \label{eq:3D_independence}
	\end{align}	 
	
	Applying the same construction to every point of $M$ shows that the evaluation map
	\begin{align}
		\ev_M: \mathbb{K}^{m \times m \times m} \to \mathbb{K}^{M} \subseteq \mathbb{K}^{[n]^3}
	\end{align}
	is surjective. Taking the product of the nonzero minor polynomials in
	\begin{align}
		\mathbb{K}[\cdots,\tilde{X}_a,\cdots,\tilde{Y}_b,\cdots,\tilde{Z}_c,\cdots],
	\end{align}
	where $\tilde{X}_a,\tilde{Y}_b,\tilde{Z}_c$ are new indeterminates, we use the same argument as before to conclude the result over a sufficiently large field $\mathbb{F}_q$.
\end{proof}

We now consider the more difficult case in which $M$ is an $\epsilon$-closed set that may contain full lines. Let $M' = M \setminus L(M)$. Then $\Delta(M') = \Delta < \epsilon n$, $\Sigma(M') = \Sigma \leq \epsilon n^2$, and $\vert M' \vert \leq \epsilon n^3$. Suppose $D_2 =  \sqrt{2 \Sigma} $ and $D_3 = (6 \vert M' \vert)^{1/3}$ (omit taking integers as before); we strengthen \eqref{eq:3D_size} in Lemma \ref{lemma:3D_independence} to
\begin{align}\label{eq:epsilon_3D}
	2\left( \epsilon n +   \sqrt{2\epsilon}n  \right) + (6\epsilon)^{1/3} n + 2\left[ (\epsilon n - 1) +  \sqrt{2\epsilon} n  + \epsilon n \right] \leq m-1 = \nu n - 1,
\end{align}
which still admits a solution for sufficiently small $\epsilon$ depending on $\nu$ and $t = 3$. Note that the additional term in brackets is inherited from the 2D case in \eqref{eq:epsilon_2D}; it is used to control the degree when designing polynomials that vanish on full lines. However, unlike in the 2D case, Theorem \ref{thm:GRS} in 3D requires the stronger condition \eqref{eq:eplision_3D_condition}, while \eqref{eq:epsilon_3D} is sufficient to establish the following lemma.

\begin{lemma}\label{lemma:3D_independence_2}
	For any finite field $\mathbb{F}_q$ with $q > 2^{3 n^3}$, there exist evaluation sets $\{x_1,\ldots,x_n\}$, $\{y_1,\ldots,y_n\}$, and $\{z_1,\ldots,z_n\}$ such that, for any $\epsilon$-closed $M \subseteq [n]^3$ with $\epsilon$ satisfying \eqref{eq:epsilon_3D} and for any $(a,b,c) \in M' \subseteq M$, there is a polynomial whose evaluation vanishes on all of $M$, including the full lines, except at $(a,b,c)$.
\end{lemma}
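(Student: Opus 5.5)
The plan mirrors Lemma~\ref{lemma:2D_independence_2}, lifted to three dimensions. First I isolate the sparse part and then multiply by a factor that kills the full lines.

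\textbf{Step 1: the sparse part.} Fix an $\epsilon$-closed $M\subseteq[n]^3$ and put $M'=M\setminus L(M)$. Then $\Delta(M')<\epsilon n$, $\Sigma(M')\le \epsilon n^2$ and $|M'|\le\epsilon n^3$. I apply the construction in the proof of Lemma~\ref{lemma:3D_independence} to $M'$, but with the degree budget lowered from $m-1$ to
\[
m-1-2\bigl[(\epsilon n-1)+\sqrt{2\epsilon}\,n+\epsilon n\bigr].
\]
The first summand of \eqref{eq:epsilon_3D} guarantees this reduced condition \eqref{eq:3D_size}. For each $p=(a_0,b_0,c_0)\in M'$ this gives $\Psi_p=H_pP_XP_YP_Z$ over the generic field $\mathbb{K}$. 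It vanishes on $M'\setminus\{p\}$, it is nonzero at $p$, and each partial degree is at most that reduced bound.

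\textbf{Step 2: killing the full lines.} I multiply $\Psi_p$ by a polynomial $\Lambda_p$ that vanishes on every full line of $L(M)$ but not at $p$. The naive product $\prod_{a\in I_1}(X-x_a)\cdots$ fails because of configurations like \eqref{eq:diagonal_lines}, so I build $\Lambda_p$ slice by slice.

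Take the full $z$-lines $\{a\}\times\{b\}\times[n]$. Their projection $L_{z}\subseteq[n]^2$ onto the $(a,b)$-plane is a set with at most $\epsilon n$ points per line, since otherwise $\epsilon$-closedness would fill a whole plane line. It need not be $\epsilon$-closed itself, so I take its $\epsilon$-closure. That closure splits into genuine full lines, controlled by index sets of size $<\epsilon n$ as in Corollary~\ref{coro:GRS_extendable_2D}, plus sparse points. I then use Lemma~\ref{lemma:2D_independence_2} in the $(X,Y)$ variables to get a polynomial $Q_z(X,Y)$ that vanishes on $L_z$ but not at $(a_0,b_0)$. Here I must check that $(a_0,b_0)\notin L_z$; this holds because $p\in M'$ is not on a full $z$-line. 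Alternatively, I can take a product of the single-point-supported polynomials $1-\Psi$ or complementary ones.

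Its degrees are bounded by $(\epsilon n-1)+\sqrt{2\epsilon}\,n+\epsilon n$ in each of $X$ and $Y$, which is the bracketed term. I do the same for the $x$- and $y$-lines and set $\Lambda_p=Q_xQ_yQ_z$. Each variable appears in two of the three factors, which accounts for the factor $2$ in front of the bracket in \eqref{eq:epsilon_3D}. Then $\Psi_p\Lambda_p$ has every partial degree at most $m-1$, so it lies in $\C_1\otimes\C_2\otimes\C_3$. It vanishes on $M\setminus\{p\}$ and is nonzero at $p$, as a product of nonzero elements of the field $\mathbb{K}$.

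\textbf{Step 3: specialisation.} Over $\mathbb{K}$, nonvanishing at $p$ is a nonzero minor-type polynomial condition. I take the product over all $\epsilon$-closed $M$ and all $p\in M'$; there are at most $2^{n^3}\cdot n^3$ pairs, and I also include the 2D conditions needed for the $Q$'s. The total degree per variable stays below $2^{3n^3}$, so Schwartz--Zippel with $q>2^{3n^3}$ yields common evaluation sets.

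\textbf{Main obstacle.} I expect Step 2 to be the hard part. The issue is justifying that the projected line set can be annihilated with degree $O(\epsilon n+\sqrt{\epsilon}n)$ while sparing $(a_0,b_0)$. Since the sparse-point polynomials from Lemma~\ref{lemma:2D_independence_2} are point-supported, I would first try building a 2D polynomial that vanishes on a set while sparing one point by taking the single-point polynomial at $(a_0,b_0)$ relative to the $\epsilon$-closure of $L_z\cup\{(a_0,b_0)\}$. The delicate point there is whether adding $(a_0,b_0)$ creates a new full line through it; if it does, I would have to argue via $\epsilon$-closedness of $M$ that such a line forces $p$ onto a full line, contradicting $p\in M'$.
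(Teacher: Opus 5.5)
Your argument is correct and is essentially the paper's. You take $\Psi_p$ from Lemma~\ref{lemma:3D_independence} applied to $M'$ with the reduced degree budget and multiply it by three two-variable polynomials from Lemma~\ref{lemma:2D_independence_2}. The augmented planar set is $\epsilon$-closed because it sits inside the cross-section of $M$ through $p$ and $p$ lies on no full line, which is exactly how the obstacle you flag resolves. The only variation is that you use the projections $G_i$ of the full lines, whereas the paper uses the cross-sections $S_i$ of $L(M)$ through $p$, which contain $G_i$. Note also that $G_i$ is already $\epsilon$-closed and may contain full planar lines such as $I_1\times[n]$ (Lemma~\ref{lemma:eplison_geo}); so your remarks that it has at most $\epsilon n$ points per line and needs to be closed are inaccurate, but harmless.
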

\begin{proof}
	If $M = [n]^3$, then $M' = \emptyset$ and the statement holds trivially. For the given $\epsilon$, we need to choose evaluation sets satisfying the conclusions of Lemma \ref{lemma:2D_independence_2} and Lemma \ref{lemma:3D_independence} simultaneously. First, consider nonzero minor polynomials $P_{M_{3D}}$ for any $M_{3D} \subseteq [n]^3$ without full lines such that $\Delta(M_{3D})$, $\Sigma(M_{3D})$, and $\vert M_{3D} \vert$ are controlled by $\epsilon$. As in Lemma \ref{lemma:2D_independence_2}, each involved Vandermonde matrix has row size $2\left( (\epsilon n - 1) + \sqrt{2\epsilon}n \right) + (6\epsilon)^{1/3} n + 1$. As a reminder, the final codes always have code dimension $m$. We control the numbers of rows of these Vandermonde matrices in order to control the resulting polynomial degrees in \eqref{eq:Psi_deg_3D}.
	
	Next, we consider the nonzero minor polynomials $P_{X, M_{2D}}, P_{Y, M_{2D}}, P_{Z, M_{2D}}$ for the 2D case. As before, throughout the construction, the degrees of the indeterminates in the support polynomials cannot exceed $(\epsilon n - 1) +  \sqrt{2\epsilon} n + \epsilon n + 1$.
	It is also important to note that, by definition, no slice of any $M_{3D}$ can contain full lines. Therefore, each such slice is an $M_{2D}$ under consideration. Let
	\begin{align}
		\begin{aligned}
			& P(\cdots,\tilde{X}_a,\cdots,\tilde{Y}_b,\cdots,\tilde{Z}_c,\cdots) \\
			= & \prod_{M_{3D}} P_{M_{3D}}(\cdots,\tilde{X}_a,\cdots,\tilde{Y}_b,\cdots,\tilde{Z}_c,\cdots) \\
			& \cdot \prod_{M_{2D}} P_{X, M_{2D}} (\cdots,\tilde{Y}_b,\cdots,\tilde{Z}_c,\cdots) P_{Y, M_{2D}} (\cdots,\tilde{X}_a,\cdots,\tilde{Z}_c,\cdots) P_{Z, M_{2D}} (\cdots,\tilde{X}_a,\cdots,\tilde{Y}_b,\cdots)
		\end{aligned}
	\end{align}
	over $\mathbb{K}$. As before, this polynomial consists of monomials with coefficients equal to $1 \in \mathbb{K}$. A nonzero specialization exists whenever $q > 2^{n^3 + 3n^2}$. For $n \geq 2$, we use the larger but more compact condition $q > 2^{3n^3}$.
	
	Now fix an $\epsilon$-closed set $M$ and define $M'$ as above. Using the explicit evaluation sets $\{x_1,\ldots,x_n\}$, $\{y_1,\ldots,y_n\}$, and $\{z_1,\ldots,z_n\}$ obtained above, for any $p = (a_0,b_0,c_0) \in M'$, we can find a polynomial $\Psi_p(X,Y,Z)$ from Lemma \ref{lemma:3D_independence}, which is nonzero only at $p$ within $M$. Its degree satisfies
	\begin{align}\label{eq:Psi_deg_3D}
		\deg_X \Psi_p(X,Y,Z) \leq 2\left( (\epsilon n - 1) + \sqrt{2\epsilon}n \right) + (6\epsilon)^{1/3} n.
	\end{align}
	
	We consider the following slices in the three coordinate directions:
	\begin{align}
		& S_1 = \{(b,c): (a_0,b,c) \in M \setminus M' = L(M) \}, \\
		& S_2 = \{(a,c): (a,b_0,c) \in M \setminus M' = L(M) \}, \\
		& S_3 = \{(a,b): (a,b,c_0) \in M \setminus M' = L(M) \}
	\end{align}
	For instance, any $(b,c) \in S_1 \subseteq [n]^2$ with $(b,c) \neq (b_0,c_0)$ either supports a full line in the $x$-direction or belongs to a $y$- or $z$-line contained in $S_1$. Since $p$ does not lie on any full line, $S_1$ resembles Slice $A$ rather than Slice $B$ in Figure~\ref{fig:3D_full_lines}. By Lemma \ref{lemma:eplison_geo} below, $L(M)$ is $\epsilon$-closed, and $S_i$ is one of its cross sections; hence $S_i$ is also $\epsilon$-closed. We claim that $S_1 \cup \{(b_0,c_0)\}$ is also $\epsilon$-closed: for any line $\ell$, if $\ell \cap \{(b_0,c_0)\} = \emptyset$, then
	\begin{align}
		(S_1 \cup \{b_0,c_0\}) \cap \ell = S_1 \cap \ell
	\end{align}
	is either exactly $\ell$ or has size at most $\epsilon n$. If $\{(b_0,c_0)\} \subseteq \ell$, then, by definition, $\ell$ cannot be a full line in $L(M)$, and thus
	\begin{align}
		(S_1 \cup \{b_0,c_0\}) \cap \ell \subseteq M \cap \ell \implies \vert (S_1 \cup \{b_0,c_0\}) \cap \ell  \vert < \epsilon n.
	\end{align}
	Applying Lemma \ref{lemma:2D_independence_2} to $S_1 \cup \{(b_0,c_0)\}$ with the chosen evaluation sets, we can find a polynomial $\Phi_p(Y,Z)$ that is nonzero only at $(b_0,c_0)$ and whose degree satisfies (cf. \eqref{eq:2D_independence_2})
	\begin{align}\label{eq:Phi_deg_3D}
		\deg_Y \Phi_p(Y,Z) \leq (\epsilon n - 1) +  \sqrt{2\epsilon} n + \epsilon n. 
	\end{align}
	The same estimate holds for $\deg_Z \Phi_p(Y,Z)$, as well as for $\Phi_p(X,Z)$ and $\Phi_p(X,Y)$. The final polynomial is
	\begin{align}
		F_p(X,Y,Z) \coloneq \Psi_p(X,Y,Z) \Phi_p(Y,Z) \Phi_p(X,Z)  \Phi_p(X,Y)
	\end{align}
	with
	\begin{align}
		\begin{aligned}
			& \deg_X F_p, \ \deg_Y F_p, \ \deg_Z F_p  \\
			\leq & 
			2\left( (\epsilon n - 1) +  \sqrt{2\epsilon}n \right) + (6\epsilon)^{1/3} n + 
			2\left[ (\epsilon n - 1) +  \sqrt{2\epsilon} n + \epsilon n \right]
			\leq m - 1.
		\end{aligned}
	\end{align} 
	By definition, 
	\begin{enumerate}
		\item $F_p(x_{a_0},y_{b_0},z_{c_0}) \neq 0$.
		
		\item $F_p(x_a,y_b,z_c) \equiv 0$ for all other points in $M'$ since $\Psi_p(X,Y,Z)$ vanishes there.
		
		\item For each full line, e.g., in the $x$-direction, the line must pass through $S_1$ at some $(b',c')$, and $F_p(x_a,y_{b'},z_{c'}) \equiv 0$ on that line because $\Phi_p(y_{b'},z_{c'}) = 0$.
	\end{enumerate}
	This completes the proof.
\end{proof}

\begin{figure}[ht]
	\centering
	\includegraphics[width=0.8\textwidth]{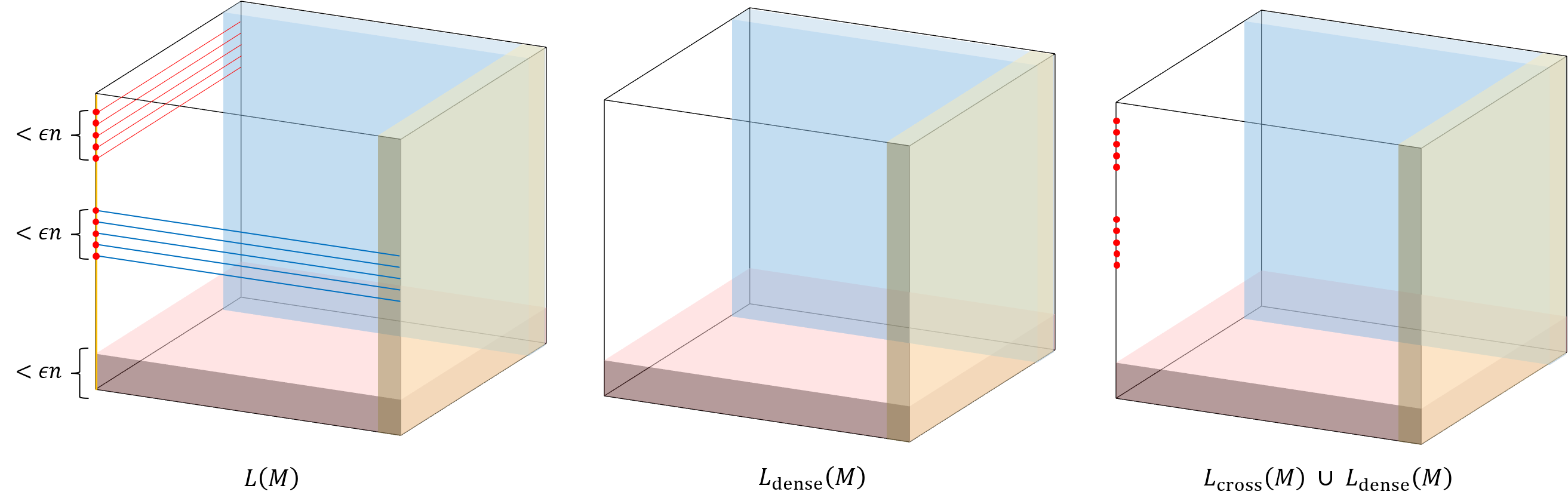}
	\caption{A sketch of the set of full lines $L(M)$, the dense set $L_{\text{dense}}(M)$, and the intersection set $L_{\text{cross}}(M)$ in 3D. The red, blue, and brown blocks represent dense collections of full lines, namely $I_1 \times [n] \times [n]$, $[n] \times I_2 \times [n]$, and $[n] \times [n] \times I_3$. Here $L(M)$ and $L_{\text{dense}}(M)$ are $\epsilon$-closed, while $L_{\text{cross}}(M) \cup L_{\text{dense}}(M)$ is $3\epsilon$-closed.}
	\label{fig:3D_dense_cross}
\end{figure}

The last ingredient for obtaining the complete extension is the ability to extend any local codeword defined on $L(M)$ (see Definition \ref{def:extendable}). Compared with the 2D case treated in Corollary \ref{coro:GRS_extendable_2D}, the problem is more intricate. Before the proof, we formally introduce several geometric concepts and properties of $\epsilon$-closed sets: let $M \subseteq [n]^3$ be $\epsilon$-closed, let $L(M)$ denote the subset of full lines, and set $M' = M \setminus L(M)$. Suppose $L(M) \neq [n]^3$. Let $I_1,I_2,I_3 \subseteq [n]$ be the maximal subset such that
\begin{align}
	I_1 \times [n] \times [n], \quad [n] \times I_2 \times [n], \quad [n] \times [n] \times I_3 \subseteq L(M).
\end{align}
Clearly, $I_i$ may be empty. If $I_i$ is nonempty, we say that there is a \emph{dense} hyperplane of full lines perpendicular to the $i$-th direction. We divide $L(M)$ into dense and sparse part (see also Figure~\ref{fig:3D_dense_cross}):
\begin{align}
	& L_{\text{dense}}(M) \coloneq (I_1 \times [n] \times [n]) \cup ([n] \times I_2 \times [n]) \cup ([n] \times [n] \times I_3), \label{eq:L_dense_3D} \\
	& L_{\text{sparse}}(M) \coloneq L(M) \setminus L_{\text{dense}}(M). \label{eq:L_sparse_3D}
\end{align}
However, within $L_{\text{sparse}}(M)$, there may be maximal subsets, e.g., $J_1, J_2 \subseteq [n]$ such that
\begin{align}
	J_1 \times J_2 \times [n] \subseteq L(M).
\end{align}
This set is regarded as sparse. Intuitively, such a set is a thin tube with $\vert J_i \vert < \epsilon n$, and it is straightforward to extend codewords from $\C_3$ along that direction. The following proof gives more details. 

Within $L(M)$, the \emph{sectional graph} of full lines in the $i$-th direction is defined by
\begin{align}
	& G_1 \coloneq \{(b,c): [n] \times \{b\} \times \{c\} \subseteq M \}, \\
	& G_2 \coloneq \{(a,c): \{a\} \times [n] \times \{c\} \subseteq M \},  \label{eq:G_2_def} \\
	& G_3 \coloneq \{(a,b): \{a\} \times \{b\} \times [n] \subseteq M \}.
\end{align}
Within $G_i \subseteq [n]^2$, we also consider $L(G_i)$ and distinguish its dense and sparse parts as before. We define, for example,
\begin{align}\label{eq:G_1_dense}
	L_{\text{dense}}(G_1) \coloneq (I_2 \times [n]) \cup ([n] \times I_3)
\end{align}
and $G_1 \setminus L_{\text{dense}}(G_1)$ consists only of sparse points (see also Figure \ref{fig:3D_full_lines}).

\begin{lemma}\label{lemma:eplison_geo}
	Let $M \subseteq [n]^3$ be $\epsilon$-closed. Then the following properties hold:
	\begin{enumerate}
		\item Any cross section of $M$ is $\epsilon$-closed in $2$ dimensions.
		
		\item The set of full lines $L(M)$ and $M' \coloneq M \setminus L(M)$ are $\epsilon$-closed, and so are their cross sections.
		
		\item Within $L(M)$, any sectional graph of full lines is $\epsilon$-closed.
		
		\item The set
		\begin{align}\label{eq:L_cross}
			L_{\text{cross}}(M) \coloneq \{ (a,b,c): \text{intersection points of two lines in } L_{\text{sparse}}(M) \}
		\end{align}
		is $2\epsilon$-closed. More than two lines may intersect at the same point, but such cases are encompassed by the definition above. Moreover, $L_{\text{cross}}(M) \cup L_{\text{dense}}(M)$ is $3\epsilon$-closed.
	\end{enumerate}
\end{lemma}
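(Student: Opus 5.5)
The plan is to verify each item directly from Definition~\ref{def:epsilon_close}, using throughout that a line of a cross section or slice is itself a line of $[n]^3$. The single check is: for every line $\ell$, either $\ell$ lies in the set, or it meets the set in fewer than $\epsilon n$ points (resp.\ $2\epsilon n$, $3\epsilon n$).

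\emph{Item 1.} A line inside a cross section $\{a\}\times[n]\times[n]$ (or the analogues) is a line of $[n]^3$, and its intersection with the section equals its intersection with $M$. So the dichotomy for $M$ transfers verbatim.

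\emph{Item 2.}
\begin{itemize}
\item Take a line $\ell$. If $\ell\subseteq M$, then $\ell\subseteq L(M)$ by definition.
\item Otherwise $|\ell\cap L(M)|\le|\ell\cap M|<\epsilon n$.
\item For $M'$, a line $\ell$ contained in $M'$ would be a full line of $M$, hence contained in $L(M)$, which is impossible. So no line lies in $M'$ (unless $M'=\emptyset$), and $|\ell\cap M'|\le|\ell\cap M|<\epsilon n$ whenever $\ell\not\subseteq M$. If $\ell\subseteq M$, then $\ell\subseteq L(M)$ and $\ell\cap M'=\emptyset$.
\item Cross sections of these sets are handled exactly as in item 1.
\end{itemize}

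\emph{Item 3.} Take $G_1$ and a line in $[n]^2$, say $\{b\}\times[n]$. Consider the slice $[n]\times\{b\}\times[n]$.
\begin{itemize}
\item If $|\{c:(b,c)\in G_1\}|\ge\epsilon n$, then for each $a$ the $z$-line $\{a\}\times\{b\}\times[n]$ meets $M$ in at least $\epsilon n$ points.
\item By closedness that $z$-line is then contained in $M$. This holds for every $a$, so the whole slice $[n]\times\{b\}\times[n]\subseteq M$.
\item Hence every $x$-line $[n]\times\{b\}\times\{c\}$ lies in $M$, i.e.\ $\{b\}\times[n]\subseteq G_1$.
\end{itemize}
This is a clean "transverse lines fill a slice" argument, and the same argument works for $G_2$ and $G_3$.

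\emph{Item 4.} This is the main obstacle, and it needs some care with the definition of $L_{\text{sparse}}$. Fix a line $\ell$, say in the $z$-direction through $(a,b)$, with $\ell\not\subseteq L_{\text{cross}}$.
\begin{itemize}
\item Each point of $\ell\cap L_{\text{cross}}$ lies on at least two sparse full lines. At least one of these, call it $\ell'$, is not $\ell$ itself and is therefore transverse to $\ell$, in the $x$- or $y$-direction.
\item Such transverse sparse full lines in the $x$-direction through points of $\ell$ all lie in the slice $[n]\times\{b\}\times[n]$, and are indexed by $c$ with $(b,c)\in G_1$.
\item If at least $\epsilon n$ values of $c$ occur, then the argument of item 3 forces $\{b\}\times[n]\subseteq G_1$. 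This means the full slice $[n]\times\{b\}\times[n]$ is in $L(M)$, so $b\in I_2$ and these lines are dense, not sparse.
\item So fewer than $\epsilon n$ points arise from each transverse direction, giving $|\ell\cap L_{\text{cross}}|<2\epsilon n$.
\end{itemize}
A subtlety remains: if $\ell$ itself is a sparse full line, it may be that every point of $\ell$ is crossed. In that case $\ell\subseteq L_{\text{cross}}$, which is allowed. We must also check that $\ell\subseteq L_{\text{dense}}$ cannot cause trouble, which is handled by adding that part.

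For the union with $L_{\text{dense}}$, any line not contained in $L_{\text{dense}}$ meets $L_{\text{dense}}$ only in coordinates from the $I_i$ transverse to it. By item 2 applied to slices, each $|I_i|<\epsilon n$ unless the slice is full. Combined with the $2\epsilon n$ bound this yields $3\epsilon n$. I expect the delicate point to be precisely this bookkeeping of dense versus sparse lines at the boundary of the dense blocks; the planned fix is to index crossings by their transverse direction and invoke item 3 for each direction separately.
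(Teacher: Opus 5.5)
Your proposal is correct and follows essentially the same route as the paper. Items 1--3 are the same direct checks against the definition of $\epsilon$-closedness; your item 3 spells out the paper's ``$\ell$ lies in every cross section, hence in $M$'' argument. In item 4 you supply, via item 3 and the definition of $I_2$ (resp.\ $I_1$), the justification that the paper omits for its assertion that fewer than $\epsilon n$ parallel sparse lines in each transverse direction can meet a given line, and then you add the $|I_i|<\epsilon n$ contribution of $L_{\text{dense}}(M)$ as the paper does.

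One small remark: your bound $|\ell\cap L_{\text{cross}}(M)|<2\epsilon n$ never uses the hypothesis $\ell\not\subseteq L_{\text{cross}}(M)$. So the ``subtlety'' you flag, a sparse line crossed at every point, cannot occur when $\epsilon<1/2$.
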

\begin{proof}
	The first property has been used above. Formally, let $S \subseteq M$ be any cross section, and let $\ell$ be any line in the corresponding hyperplane.
	\begin{align}
		\vert \ell \cap S \vert \geq \epsilon n \implies \vert \ell \cap M \vert \geq \epsilon n \implies \ell \subseteq M.
	\end{align}
	Since $\ell$ is in the hyperplane that contains $S$, $\ell \subseteq S$.
	
	Similarly,
	\begin{align}
		\vert \ell \cap L(M) \vert \geq \epsilon n \implies \vert \ell \cap M \vert \geq \epsilon n \implies \ell \subseteq M \implies \ell \subseteq L(M).
	\end{align}
	For $M'$, similarly,
	\begin{align}
		\vert \ell \cap M' \vert \geq \epsilon n \implies \vert \ell \cap M \vert \geq \epsilon n \implies \ell \subseteq M \implies \ell \nsubseteq M' = M \setminus L(M)
	\end{align}
	
	The sectional graph $G_i$ is contained in any cross section $S$ normal to the $i$-th direction. Then
	\begin{align}
		\vert \ell \cap G_i \vert \geq \epsilon n \implies \ell \subseteq S \subseteq M.
	\end{align}
	Thus $\ell$ lies in the dense part of $G_i$ as its intersection with the sparse part cannot $\geq \epsilon n$.
	
	It is important to note that not every subset of $M$ is $\epsilon$-closed. For the last case, suppose $\ell \cap L_{\text{cross}}(M) \neq \emptyset$. Since these intersection points lie on lines in sparse positions, in any fixed direction there can be at most $\epsilon n - 1$ parallel sparse lines, and the total number cannot exceed $2\epsilon n - 2 < 2\epsilon n$. For $L_{\text{cross}}(M) \cup L_{\text{dense}}(M)$, the line $\ell$ can intersect $L_{\text{dense}}(M)$ along its direction. Summing these contributions yields the result.
\end{proof}

These notions can be easily generalized for $M, L(M) \subsetneqq [n]^t$. A dense hyperplane of full lines perpendicular to the $i$-th direction is of the form:
\begin{align}
	[n] \times \cdots \times [n] \times I_i \times [n] \times \cdots \times [n] \subseteq M.
\end{align}
and there will be lower dimensional structures in the definition of dense sets, e.g.,
\begin{align}
	[n] \times \cdots \times J_i \times \cdots \times J_j \times \cdots \times [n] \subseteq L(M).
\end{align}
Sectional graphs and intersection sets can be defined accordingly, but for simplicity, we stick with the 3D case in the following proof.

\begin{figure}[ht]
	\centering
	\includegraphics[width=0.8\textwidth]{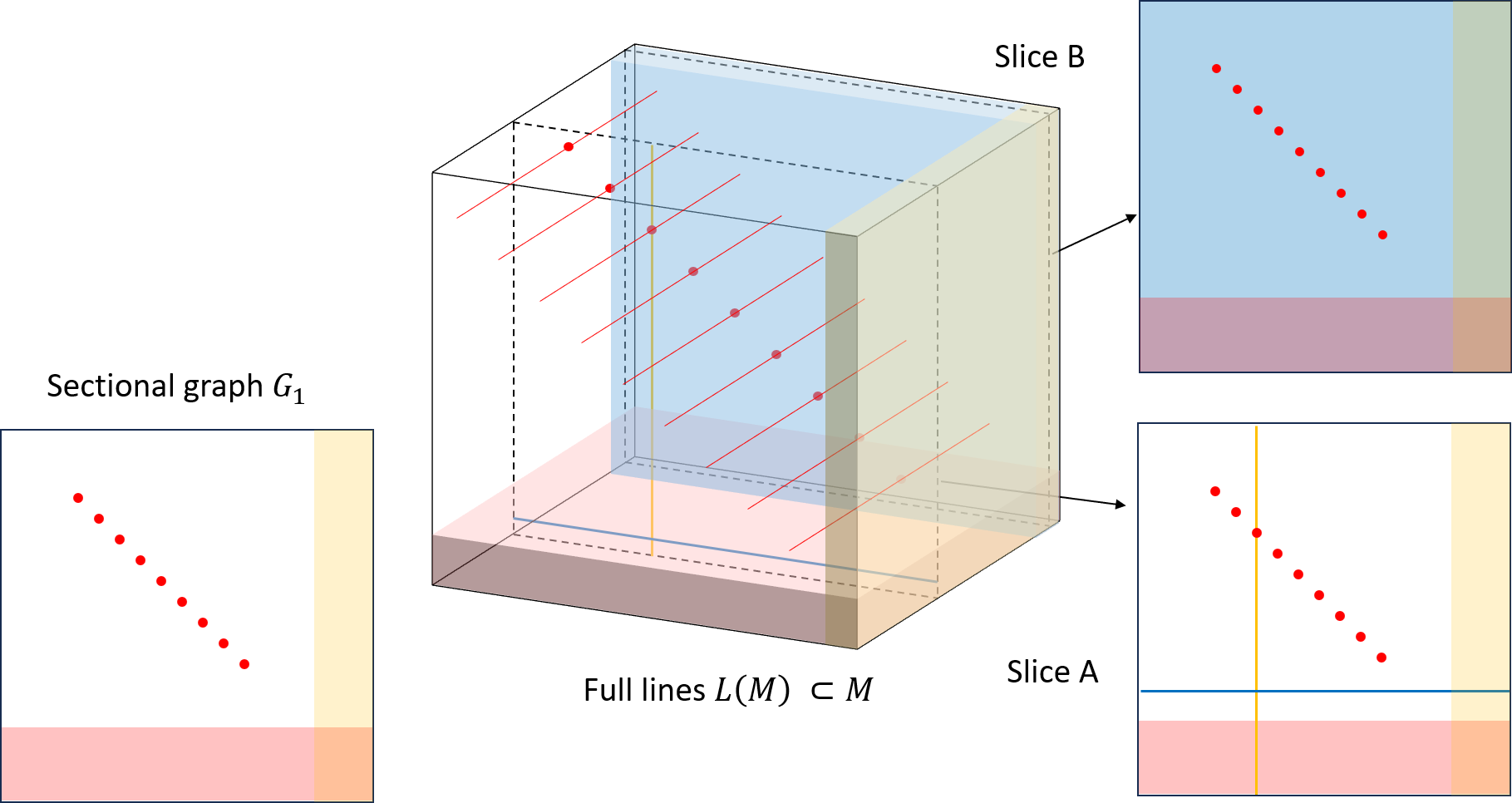}
	\caption{Another sketch of the set of full lines $L(M)$ in a set $M$. The red, blue, and brown blocks represent dense collections of full lines, namely $I_1 \times [n] \times [n]$, $[n] \times I_2 \times [n]$, and $[n] \times [n] \times I_3$. The red $x$-lines are in sparse locations, together with a single blue $y$-line and a brown $z$-line. The sectional graph $G_1$ consists of those $(b,c)$ that support an $x$-line. We define $G_2$ and $G_3$ similarly.}
	\label{fig:3D_full_lines}
\end{figure}

\begin{lemma}\label{lemma:3D_line_ext}
	Let $M \subseteq [n]^3$ be $\epsilon$-closed. For fixed $\nu \in (0,1)$ and sufficiently large $n$, we replace \eqref{eq:epsilon_3D} by the following more stringent condition: 
    \begin{align}\label{eq:eplision_3D_condition}
		23\epsilon + 4 \sqrt{6\epsilon}  + (18\epsilon)^{1/3} +  \sqrt{2\epsilon} \leq \nu.
	\end{align}
	Then, for any evaluation sets obtained in Lemma \ref{lemma:3D_independence_2} under \eqref{eq:eplision_3D_condition}, we can extend any local data (see Definition \ref{def:extendable}) from the full lines in $L(M) \subseteq M$ to a global tensor in $\C_1 \otimes \C_2 \otimes \C_3$.
\end{lemma}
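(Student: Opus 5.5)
We build the extension in three stages, following the decomposition $L(M)=L_{\text{dense}}(M)\cup L_{\text{sparse}}(M)$ introduced above. In each stage the new correction polynomial vanishes on everything already fixed, so earlier stages are never disturbed.

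\textbf{Stage 1: the dense part.} First extend the local data on $L_{\text{dense}}(M)=(I_1\times[n]\times[n])\cup([n]\times I_2\times[n])\cup([n]\times[n]\times I_3)$. Each dense slab $\{a\}\times[n]\times[n]$, $a\in I_1$, carries data whose lines all lie in $M$. By Lemma~\ref{lemma:eplison_geo} this slice is $\epsilon$-closed and full, so the data is a codeword $A_a(Y,Z)\in\C_2\otimes\C_3$. We then use the trivariate analogue of the inclusion--exclusion interpolation in Corollary~\ref{coro:GRS_extendable_2D}, with Lagrange-type multipliers $\phi_a(X)=\mathcal N_a\prod_{a'\in I_1\setminus\{a\}}(X-x_{a'})$ and $\psi_b,\chi_c$ defined analogously:
\begin{align}
F_1=\sum_a\phi_aA_a+\sum_b\psi_bB_b+\sum_c\chi_cC_c-\sum_{a,b}\phi_a\psi_b(\cdots)-\cdots+\sum_{a,b,c}\phi_a\psi_b\chi_c(\cdots).
\end{align}
Agreement of the data on pairwise intersections of slabs gives $F_1=$ data on $L_{\text{dense}}$. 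Since $|I_i|<\epsilon n$, each multiplier has degree $<\epsilon n$, and the degree bookkeeping uses only $m-1$ plus a few $\epsilon n$ terms. Subtracting $F_1$ reduces to data vanishing on $L_{\text{dense}}(M)$.

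\textbf{Stage 2: sparse lines and the crossing set.} Next handle the intersection points $L_{\text{cross}}(M)$ of sparse lines. By Lemma~\ref{lemma:eplison_geo}, $L_{\text{cross}}\cup L_{\text{dense}}$ is $3\epsilon$-closed. The residual data is zero on the dense part, so each crossing point has a prescribed value. For each crossing point $p$ we construct a polynomial nonzero at $p$ and vanishing on $(L_{\text{cross}}\cup L_{\text{dense}})\setminus\{p\}$. The construction mimics Lemma~\ref{lemma:3D_independence_2} with $\epsilon$ replaced by $3\epsilon$: take a $\Psi_p$ from Lemma~\ref{lemma:3D_independence}, multiply by products $\prod_{a\in I_1}(X-x_a)$ etc.\ to kill the dense slabs, and multiply by 2D factors from Lemma~\ref{lemma:2D_independence_2} on the slices through $p$. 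The terms $(18\epsilon)^{1/3}=(6\cdot3\epsilon)^{1/3}$ and $\sqrt{6\epsilon}=\sqrt{2\cdot3\epsilon}$ in \eqref{eq:eplision_3D_condition} are exactly the $3\epsilon$ versions of the $D_3$ and $D_2$ terms. This step is the main obstacle. We must check that the evaluation sets from Lemma~\ref{lemma:3D_independence_2} also satisfy the independence needed at parameter $3\epsilon$; if necessary, the generic-field/Schwartz--Zippel product would be enlarged to include these minors, which the bound $q>2^{3n^3}$ still accommodates. We must also check that slices through a crossing point remain $\epsilon$-closed after adding $p$, as in the $S_1\cup\{(b_0,c_0)\}$ argument.

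\textbf{Stage 3: the remaining line data.} After subtracting the Stage 2 corrections, the residual data vanishes on $L_{\text{dense}}\cup L_{\text{cross}}$. Each sparse line now carries a codeword that is zero at every point where it meets another full line. Take a sparse $z$-line $\{a\}\times\{b\}\times[n]$ with data $g(Z)\in\C_3$. Using Lemma~\ref{lemma:2D_independence_2} on the sectional graph $G_3$, which is $\epsilon$-closed by Lemma~\ref{lemma:eplison_geo}, we obtain $\Phi(X,Y)$ that is nonzero only at $(a,b)$ within $G_3$ and vanishes on its dense lines. We then set the correction to $\Phi(X,Y)g(Z)/\Phi(x_a,y_b)$, multiplied if needed by $\prod_{c\in I_3}(Z-z_c)$-type factors to stay zero on dense slabs.

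This correction vanishes on every other $z$-line. On an $x$- or $y$-line it is nonzero only where that line meets our $z$-line, which is a crossing point where $g$ already vanishes. Hence these corrections for different lines do not interfere, and summing them over all sparse lines in the three directions gives the global tensor. The remaining work is the degree count. The $\deg_Z$ and $\deg_X$ budgets add up to $23\epsilon n+4\sqrt{6\epsilon}n+(18\epsilon)^{1/3}n+\sqrt{2\epsilon}n\le\nu n$, which is where \eqref{eq:eplision_3D_condition} enters; the constant $23$ is a sum of the linear-in-$\epsilon$ contributions tallied from the three stages.
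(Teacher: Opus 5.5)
There is a genuine gap in Stage 3. You claim that $\Phi(X,Y)g(Z)/\Phi(x_a,y_b)$ is nonzero on an $x$- or $y$-line only where that line meets your $z$-line. This is false.

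Take a sparse $x$-line $[n]\times\{b'\}\times\{c'\}$ with $b'\neq b$. At $(a',b',c')$ the correction equals $\Phi(x_{a'},y_{b'})\,g(z_{c'})/\Phi(x_a,y_b)$. The factor $g(z_{c'})$ is unconstrained, because $(a,b,c')$ is not a crossing point. The factor $\Phi(\cdot,y_{b'})$ is only forced to vanish at the fewer than $\epsilon n$ indices $a'$ with $(a',b')\in G_3$. So each line correction corrupts the data on many lines of the other directions, and summing the corrections does not give an extension.

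Correcting the directions one after another does not help: the $x$-corrections then disturb the $z$-lines in exactly the same way. Nor can you kill the leakage with extra factors in $(X,Y)$ alone. That would require vanishing on entire rows $[n]\times\{b'\}$, which costs $\deg_Y$ of order $n$ when sparse $x$-lines occupy many distinct $y$-coordinates (e.g.\ the $x$-direction analogue of \eqref{eq:diagonal_lines}).

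The missing idea is the paper's pairing of two devices.

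\emph{Line-selecting factors.} For a sparse line, say the $x$-line at $(b_0,c_0)$, the paper multiplies by $\tilde f_Y(X,Z)=\mathrm{1}-\sum f_{(a,c)}(X,Z)$. This factor is built via Lemma~\ref{lemma:2D_independence_2} on the $(\epsilon+\frac1n)$-closed set $G_2\cup([n]\times\{c_0\})$. It is identically $1$ along the row $[n]\times\{c_0\}$ and vanishes at the other sparse points of $G_2$. An analogous factor $\tilde f_Z(X,Y)$ is also used. These factors depend on the line's own variable and raise its degree by roughly $2\epsilon n+\sqrt{2\epsilon}n$ each.

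\emph{Peeling.} Such factors are affordable only after the residual line data has degree at most $m-r-1+\epsilon n$. Achieving this is the purpose of the peeling step, which your proposal lacks. One factors out $\prod_{a\in I_1}(X-x_a)$. Then, for each $i\ge m-r$, one interpolates the coefficients of $X^i$ across all sparse $x$-lines simultaneously by a 2D polynomial $L_i(Y,Z)$ on the sectional graph $G_1$. Condition \eqref{eq:eplision_3D_condition1} ensures that peeling in one direction introduces only low powers in the others. Likewise, $F_{\text{cross}}$ must have low enough degree not to reintroduce high powers, which is \eqref{eq:eplision_3D_condition2}.

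So the constants in \eqref{eq:eplision_3D_condition} come from the peeling budget $r$ combined with $\deg F_{\text{cross}}$, not from a tally of your three stages. Your Stage 2, which you flagged as the main obstacle, is essentially the paper's $F_{\text{cross}}$ step and is the routine part.
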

\begin{proof} 
	In two dimensions, this result follows easily from Corollary \ref{coro:GRS_extendable_2D}. However, the problem suddenly becomes nontrivial in three dimensions. One example is provided by sparse lines as in \eqref{eq:diagonal_lines}. Since these lines are not localized in any smaller slices, the information-set technique fails. Nevertheless, we can solve the problem by combining Lemma \ref{lemma:2D_independence_2} and Lemma \ref{lemma:3D_independence_2} in a different way.
	
	\paragraph{Extension of the dense part.} 
	We first extend local codewords on the dense part $L_{\text{dense}}(M)$; the sparse part will be treated independently, in a manner similar to Lemma \ref{lemma:3D_independence_2}. By \eqref{eq:L_dense_3D}, the dense part consists of
	\begin{align}
		I_1 \times [n] \times [n], \ [n] \times I_2 \times [n], \ [n] \times [n] \times I_3 \subseteq M.
	\end{align}
	By definition, $\vert I_i \vert < \epsilon n$. For each direction and each $a,b,c$, define
	\begin{align}
		\phi_a(X) \coloneq \mathcal{N}_a \prod_{a' \neq a \in I_1}(X - x_{a'}), \quad
		\psi_b(Y) \coloneq \mathcal{N}_b \prod_{b' \neq b \in I_2}(Y - y_{b'}), \quad
		\varphi_c(Z) \coloneq \mathcal{N}_c \prod_{c' \neq c \in I_3}(Z - z_{c'}),
	\end{align}
	where $\mathcal{N}_i$ are normalization constants; for example, $\phi_a(x_a) = 1$, so the evaluation within $I_1$ is the delta function. The local codewords in the dense collection of lines are
	\begin{align}
		& A_a(Y,Z) \in \C_2 \otimes \C_3, \quad a \in I_1, \quad \deg_Y A_a, \ \deg_Z A_a \leq m-1 \\
		& B_b(X,Z) \in \C_1 \otimes \C_3, \quad b \in I_2, \quad \deg_X B_b, \ \deg_Z B_b \leq m-1 \\  
		& C_c(X,Y) \in \C_1 \otimes \C_2, \quad c \in I_3, \quad \deg_X C_c, \ \deg_Y C_c \leq m-1. 
	\end{align} 
	These local codewords agree on pairwise intersections:
	\begin{align}
		& A_a(y_b, Z) = B_b(x_a,Z), \quad a \in I_1, b \in I_2, \\
		& A_a(Y, z_c) = C_c(x_a,Y), \quad a \in I_1, c \in I_3, \\
		& B_b(X, z_c) = C_c(X,y_b), \quad b \in I_2, c \in I_3.
	\end{align}
	We consider the interpolation:
	\begin{align}
		& F_{\text{dense}}(X,Y,Z) \notag \\
		= & \sum_{a \in I_1} \phi_a(X) A_a(Y,Z) + \sum_{b \in I_2} \psi_b(Y) B_b(X,Z) + \sum_{c \in I_3} \varphi_c(Z) C_c(X,Y) \\
		& - \sum_{a \in I_1, b \in I_2} \phi_a(X) \psi_b(Y) A_a(y_b,Z) 
		- \sum_{a \in I_1, c \in I_3} \phi_a(X) \varphi_c(Z) A_a(Y,z_c) 
		- \sum_{b \in I_2, c \in I_3} \psi_b(Y) \varphi_c(Z) B_b(X,z_c) \notag \\
		& + \sum_{a \in I_1, b \in I_2, c \in I_3} \phi_a(X) \psi_b(Y) \varphi_c(Z) A_a(y_b,z_c). \notag
	\end{align}
	A degree check shows that $F_{\text{dense}}(X,Y,Z) \in \C_1 \otimes \C_2 \otimes \C_3$. It also matches the local data; for example, for any $z$-line $\{a'\} \times \{b'\} \times [n]$,
	\begin{align}
		& F_{\text{dense}}(x_{a'},x_{b'},Z) \notag \\
		= & A_{a'}(y_{b'},Z) + B_{b'}(x_{a'},Z) + \sum_{c \in I_3} \varphi_c(Z) C_c(x_{a'},x_{b'}) \\
		& - A_{a'}(y_{b'},Z) 
		- \sum_{c \in I_3} \varphi_c(Z) A_{a'}(x_{b'},z_c) 
		- \sum_{c \in I_3} \varphi_c(Z) B_{b'}(x_{a'},z_c) \notag \\
		& + \sum_{c \in I_3} \varphi_c(Z) A_{a'}(y_{b'},z_c) = B_{b'}(x_{a'},Z)  \notag
	\end{align}
	by the agreement on intersections: $B_{b'}(x_{a'},z_c) = C_c(x_{a'},x_{b'})$.
	
	\paragraph{Peeling off high-order powers.}
	The next step is to extend local data in sparse positions. However, local data may correspond to a high-degree polynomial, and unlike in the previous case, we cannot simply multiply it by polynomials in other variables. To address this issue, we peel off high-order powers. Consider the sectional graphs of full lines in $L(M) \subseteq M$ in 3D. By~\eqref{eq:G_1_dense}, 
	\begin{align}
		G_1 & = \{ (b,c) \text{ in sparse positions} \} \cup (I_2 \times [n]) \cup ([n] \times I_3) \\
		G_2 & = \{ (a,c) \text{ in sparse positions} \} \cup (I_1 \times [n]) \cup ([n] \times I_3) \label{eq:G_2_eq} \\
		G_3 & = \{ (a,b) \text{ in sparse positions} \} \cup (I_1 \times [n]) \cup ([n] \times I_2).
	\end{align}
	For any $(b,c) \in G_1 \setminus L_{\text{dense}}(G_1)$ in the sparse region, the updated local data for the $x$-line $[n] \times \{b\} \times \{c\}$ satisfies
	\begin{align}
		& g_{(b,c)}(X) = \text{ original local data } + F_{\text{dense}}(X,y_b,z_c), \\
		& g_{(b,c)}(x_a) = 0 \text{ if } (a,b,c) \in I_1 \times [n] \times [n] \ \text{ due to agreement on the local intersections} \\
		\implies & g_{(b,c)}(X) = \left( \prod_{a \in I_1 } (X - x_a) \right) \cdot \left( \sum_{i \leq m-1 - \vert I_i \vert} g_{i,(b,c)} X^i \right).
	\end{align}
	Let $r = 3\left[ (\epsilon n - 1) + \sqrt{2\epsilon} n + \epsilon n \right]$. Again, $r$ should be an integer, but we omit the details. We decompose the polynomial as
	\begin{align}
		\begin{aligned}
			g_{(b,c)}(X) = & \left( \prod_{a \in I_1 } (X - x_a) \right) \cdot \left( \sum_{i \leq m-r-1} g_{i,(b,c)} X^i \right) \\
			& + \left( \prod_{a \in I_1 } (X - x_a) \right) \cdot \left( \sum_{m-r \leq i \leq m-1 - \vert I_i \vert} g_{i,(b,c)} X^i \right).  
		\end{aligned}
	\end{align}
	Since $\vert I_i \vert < \epsilon n$, the possible highest degree of the first part is $m-r-1 + \epsilon n$, which is clearly $< m-1$ by the definition of $r$. Both parts vanish on $I_1 \times [n] \times [n]$, and note that $b \notin I_2$ and $c \notin I_3$.
	
	By Lemma \ref{lemma:eplison_geo}, $G_1$ is $\epsilon$-closed. For each $i = m-r,\ldots,m-1$, Lemma \ref{lemma:2D_independence_2} provides an interpolation polynomial $L_i(Y,Z)$ such that $L_i(b,c) = g_{i,(b,c)}$ in the sparse region of $G_1$ and $L_i$ vanishes on the dense region. Its degrees satisfy
	\begin{align}
		\deg_Y L_i(Y,Z), \ \deg_Z L_i(Y,Z) \leq (\epsilon n - 1) + \sqrt{2\epsilon} n 
	\end{align}
	Then we obtain
	\begin{align}
		\left( \prod_{ a \in I_1 } (X - x_a) \right) \cdot \left( \sum_{i = m-r}^{m-1} L_i(Y, Z) X^i \right)
	\end{align}
	with the following properties:
	\begin{enumerate}
		\item For any $(b,c) \in G_1$ in the sparse region, it matches the high-order powers of $g_{(b,c)}(X)$.
		
		\item If either $b \in I_2$ or $c \in I_3$, then $L_i(y_b, z_c) = 0$. If $X = x_a$ with $a \in I_1$, the polynomial also vanishes. Therefore, it vanishes on $L_{\text{dense}}(M)$.
		
		\item Its value may, however, affect the local data on $y$- and $z$-lines at sparse locations. The crucial point is that it does not introduce high-order powers into the local data $g_{(a,c)}(Y)$ and $g_{(a,b)}(Z)$. To guarantee this, we require
		\begin{align}\label{eq:eplision_3D_condition1}
			(\epsilon n - 1) + \sqrt{2\epsilon} n  \leq m-r-1 + \epsilon n
			\implies 4\left[ (\epsilon n - 1) +  \sqrt{2\epsilon} n + \epsilon n \right] -\epsilon n \leq m - 1,
		\end{align}
		which can be achieved for sufficiently small $\epsilon$.
	\end{enumerate}
	Applying the same peeling-off method to $y$- and $z$-lines also does not reintroduce high-order powers into $x$-lines. The whole process yields the following polynomial, with an associated global tensor codeword:
	\begin{align}\label{eq:3D_line_ext}
		\begin{aligned}
			F_{\text{peel}}(X,Y,Z) = & \left( \prod_{a \in I_1 } (X - x_a) \right) \cdot \sum_{i = m-r}^{m-1} L_i(Y, Z) X^i 
			+ \left( \prod_{b \in I_2} (Y - y_b) \right) \cdot \sum_{j = m-r}^{m-1} L_j(X, Z) Y^j \\
			& + \left( \prod_{c \in I_3} (Z - z_c) \right) \cdot \sum_{k = m-r}^{m-1} L_k(X, Y) Z^k.
		\end{aligned}
	\end{align}
	As a reminder, $L_j(X, Z)$ is extended exactly from $g_{j,(a,c)}$ with $j \geq m-r$, since the influence of peeling off in the $X$-direction only introduces lower-order terms. The same applies to $L_k(X, Y)$. In other words, although the terms in Eq.~\eqref{eq:3D_line_ext} interact along lines in different directions, no additional high-order powers are introduced, and $F_{\text{peel}}(X,Y,Z)$ still vanishes $L_{\text{dense}}(M)$.
	
	\paragraph{Intersections of lines in sparse positions.}
	One more step is needed before extending the local line data from the sparse part. We copy~\eqref{eq:L_cross} here:
	\begin{align}
		L_{\text{cross}}(M) \coloneq \{ (a,b,c): \text{intersection points of two lines in } L_{\text{sparse}}(M) \}.
	\end{align}
	A line may be isolated from every other line in the sparse part (although it must cross the normal dense part). In that case, none of its points belongs to $L_{\text{cross}}(M)$. By Lemma \ref{lemma:eplison_geo}, $L_{\text{cross}}(M)$ is $2\epsilon$-closed and $L_{\text{cross}}(M) \cup L_{\text{dense}}(M)$ is $3\epsilon$-closed.
	
	For each $(a,b,c) \in L_{\text{cross}}(M)$, we apply Lemma \ref{lemma:2D_independence_2} again to find a global polynomial whose evaluation is nonzero only at $(a,b,c)$ and vanishes elsewhere on $L_{\text{cross}}(M) \cup L_{\text{dense}}(M)$. We then rescale it so that its value equals
	\begin{align}\label{eq:def_cross}
		\text{original local data at } (a,b,c) + F_{\text{dense}}(x_a,y_b,z_c) +  F_{\text{peel}}(x_a,y_b,z_c) 
	\end{align}
	Doing this for every point in $L_{\text{cross}}(M)$ and taking the sum, we obtain $F_{\text{cross}}(X,Y,Z)$ with degrees
	\begin{align}\label{eq:deg_cross}
		\begin{aligned}
			& \deg_X F_{\text{cross}}, \ \deg_Y F_{\text{cross}}, \ \deg_Z F_{\text{cross}} \\
			\leq & 2\left( ((3\epsilon) n - 1) + \sqrt{2(3\epsilon)}n  \right) + (6 (3\epsilon))^{1/3} n + 
			2\left[ ((3\epsilon) n - 1) + \sqrt{2 (3\epsilon)} n + (3\epsilon) n \right].  
		\end{aligned}
	\end{align}  
	We now have
	\begin{align}
		F_{\text{dense}}(X,Y,Z) + F_{\text{peel}}(X,Y,Z) + F_{\text{cross}}(X,Y,Z)
	\end{align}
	such that
	\begin{enumerate}
		\item It agrees exactly with the original data on $L_{\text{dense}}(M)$ because $F_{\text{peel}}(X,Y,Z)$ and $F_{\text{cross}}(X,Y,Z)$ vanish there.
		
		\item For each line in a sparse position, for example $[n] \times \{b\} \times \{c\}$,
		\begin{align}\label{eq:local_data_sparse}
			\tilde{g}_{(b,c)}(X) = \text{ original local data } + F_{\text{dense}}(X,y_b,z_c) +  F_{\text{peel}}(X,y_b,z_c) + F_{\text{cross}}(X,y_b,z_c)
		\end{align}
		is of degree
		\begin{align}\label{eq:local_data_deg}
			\leq m-r-1 + \epsilon n = m - 3\left[ (\epsilon n - 1) + \sqrt{2\epsilon} n + \epsilon n \right] -1 + \epsilon n.
		\end{align}
		This is because, for sufficiently small $\epsilon$, \eqref{eq:deg_cross} cannot contribute high-order terms:
		\begin{align}
			& (18\epsilon) n - 4 + 4 \sqrt{6\epsilon}n  + (18\epsilon)^{1/3} n 
			\leq m-r-1 + \epsilon n \\ 
			\implies & 
			23\epsilon n - 7 + 4 \sqrt{6\epsilon}n + (18\epsilon)^{1/3} n +  \sqrt{2\epsilon} n \leq m - 1 \label{eq:eplision_3D_condition2}.
		\end{align}
		The evaluation of $\tilde{g}_{(b,c)}(X)$ vanishes on $L_{\text{dense}}(M)$, and by \eqref{eq:def_cross}, it also vanishes at every intersection point $(a,b,c) \in L_{\text{cross}}(M)$.
	\end{enumerate}
	
	\paragraph{Extension of the sparse part.}
	Now let $p = (b_0,c_0) \in G_1 \setminus L_{\text{dense}}(G_1)$. We extend $\tilde{g}_p(X)$ from \eqref{eq:local_data_sparse}. Applying Lemma \ref{lemma:2D_independence_2} to $G_1$, we can find a polynomial $G_p(Y,Z)$ that is nonzero at $p$ within $G_1$. We normalize it so that $G_p(y_{b_0}, z_{c_0}) = 1$ and
	\begin{align}
		\deg_Y G_p(Y,Z), \ \deg_Z G_p(Y,Z) \leq  (\epsilon n - 1) + \sqrt{2\epsilon} n + \epsilon n. 
	\end{align} 
	Then we consider the union $G_2 \cup ([n] \times \{c_0\})$. By an argument similar to that in Lemma \ref{lemma:eplison_geo}, it is $(\epsilon + \frac{1}{n})$-closed.
	As mentioned, $[n] \times \{b_0\} \times \{c_0\}$ intersects the dense part $I_1 \times [n] \times [n]$ unless $I_1 = \emptyset$. It may also intersect other $y$- and $z$-lines at sparse locations. Nevertheless, by Lemma \ref{lemma:2D_independence_2}, for any $(a,c) \in G_2$ outside $[n] \times \{c_0\}$ and the dense region, we can find a polynomial $f_{(a,c)}(X,Z)$ that vanishes everywhere, including on $[n] \times \{c_0\}$, except at $(a,c)$. We normalize it so that $f_{(a,c)}(x_a, z_c) = 1$. Its degrees are 
	\begin{align}
		\deg_X f_{(a,c)}(X,Z), \ \deg_Z f_{(a,c)}(X,Z) \leq \left( \left(\epsilon + \frac{1}{n} \right)   n - 1 \right) +  \sqrt{2\left(\epsilon + \frac{1}{n} \right)  } n  + \left(\epsilon + \frac{1}{n} \right) n. 
	\end{align} 
	The key trick is to consider the following polynomial, where $\mathrm{1}$ denotes the constant polynomial:
	\begin{align}
		\tilde{f}_Y(X,Z) = \mathrm{1} - \sum_{(a,c) \in G_2 \setminus (\text{dense} \cup  [n] \times \{c_0\} )} f_{(a,c)}(X,Z).
	\end{align}
	It gives the all-ones vector along $[n] \times \{c_0\}$ and vanishes at the other points of $G_2 \cup ([n] \times \{c_0\})$. The summation does not increase the polynomial degree, and we define $\tilde{f}_Z(X,Y)$ based on $G_3 \cup ([n] \times \{b_0\})$ in the same way. The product $\tilde{g}_p(X) G_p(Y,Z) \tilde{f}_Y(X,Z) \tilde{f}_Z(X,Y)$ satisfies the following properties:
	\begin{enumerate}
		\item The values are identical to $\tilde{g}_p(X)$ along $[n] \times \{b_0\} \times \{c_0\}$. 
		
		\item The values are zero on other $x$-lines $[n] \times \{b\} \times \{c\}$ due to the vanishing of $G_p(Y,Z)$.
		
		\item For lines in other directions, such as $\{a\} \times [n] \times \{c\}$, if the line does not intersect $[n] \times \{b_0\} \times \{c_0\}$, then $\tilde{f}_Y(x_a, z_c) = 0$ annihilates the value.
		
		\item For any possible intersection $(a,b_0,c_0)$ in the dense or sparse part, $\tilde{g}_p(x_a) = 0$ implies that the product vanishes on that line, regardless of whether $\tilde{f}_Y(X,Z)$ or $\tilde{f}_Z(X,Y)$ is zero.
		
		\item It vanishes on $L_{\text{dense}}(M)$ because at least one of $G_p(Y,Z)$, $\tilde{f}_Y(X,Z)$, or $\tilde{f}_Z(X,Y)$ vanishes there.
	\end{enumerate}
	By \eqref{eq:local_data_deg}, the total $X$ degree equals
	\begin{align}
		\begin{aligned}
			& \deg \tilde{g}_p(X) + 2\left[ \left( \left(\epsilon + \frac{1}{n} \right)   n - 1 \right) +  \sqrt{2\left(\epsilon + \frac{1}{n} \right)  } n + \left(\epsilon + \frac{1}{n} \right)  n \right]  \\
			\leq & m - 1 - 3\left[ (3\epsilon n - 1) +  \sqrt{2\epsilon} n  \right] + \epsilon n
			+ 2\left[ \left( 2\left(\epsilon + \frac{1}{n} \right)  n - 1 \right) +  \sqrt{2\left(\epsilon + \frac{1}{n} \right)  } n \right] 
		\end{aligned}
	\end{align}
	which is smaller than $m$ for sufficiently large $n$. For the degree in $Y$ (and similarly in $Z$),
	\begin{align}
		\begin{aligned}
			& \deg_Y \left( \tilde{g}_p(X) G_p(Y,Z) \tilde{f}_Y(X,Z) \tilde{f}_Z(X,Y) \right) \\
			\leq  & (2\epsilon n - 1) +  \sqrt{2\epsilon} n  + \left[ \left( 2\left(\epsilon + \frac{1}{n} \right)   n - 1 \right) + \sqrt{2\left(\epsilon + \frac{1}{n} \right)  } n \right] \leq m -1 - r < m-1
		\end{aligned}
	\end{align}
	by the more stringent inequality \eqref{eq:eplision_3D_condition2}.
	
	We relabel $\tilde{f}_{Y}(X,Z)$ and $\tilde{f}_Z(X,Y)$ as $\tilde{f}_{1,Y}(X,Z)$ and $\tilde{f}_{1,Z}(X,Y)$, respectively, and apply the same strategy to the $y$- and $z$-lines. Let
	\begin{align}
		\begin{aligned}
			F_{\text{sparse}}(X,Y,Z) \coloneq
			& \sum_{p \in G_1 \setminus L_{\text{dense}}(G_1)} \tilde{g}_p(X) G_p(Y,Z) \tilde{f}_{1,Y}(X,Z) \tilde{f}_{1,Z}(X,Y) \\
			& + \sum_{p \in G_2 \setminus L_{\text{dense}}(G_2)} \tilde{g}_p(Y) G_p(X,Z) \tilde{f}_{2,Z}(X,Y) \tilde{f}_{2,X}(Y,Z) \\
			& + \sum_{p \in G_3 \setminus L_{\text{dense}}(G_3)} \tilde{g}_p(Z) G_p(X,Y) \tilde{f}_{3,Z}(X,Y) \tilde{f}_{3,X}(Y,Z).
		\end{aligned}
	\end{align}
	Then,
	\begin{align}
		F_{\text{dense}}(X,Y,Z) + F_{\text{peel}}(X,Y,Z) + F_{\text{cross}}(X,Y,Z) + F_{\text{sparse}}(X,Y,Z)
	\end{align}
	is the desired global extension.
\end{proof}

\begin{corollary}\label{coro:GRS_extendable}
	Any $\epsilon$-closed $M$ with $\epsilon$ satisfying \eqref{eq:eplision_3D_condition} is extendable for the punctured RS codes $\C_1,\C_2,\C_3$ obtained from the evaluation sets in Lemma \ref{lemma:3D_independence_2}. As a result,
	\begin{align}
		\rho(\C_1^\perp,\C_2^\perp,\C_3^\perp) \geq \frac{\epsilon^3}{3(2^3+1)^3},
	\end{align}
	where $\epsilon$ depends only on $\nu$ and $t = 3$.
\end{corollary}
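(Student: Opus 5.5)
The plan mirrors the proof of Corollary~\ref{coro:GRS_extendable_2D}: first show that every $\epsilon$-closed $M\subseteq[n]^3$ is extendable for $\C_1\otimes\C_2\otimes\C_3$. Proposition~\ref{lemma:inner-g/ext} then turns this into inner-generation for $\C_1^\perp\boxplus\C_2^\perp\boxplus\C_3^\perp$, and Lemma~\ref{lemma:inner-g/rho} with $t=3$ gives $\rho(\C_1^\perp,\C_2^\perp,\C_3^\perp)\ge \epsilon^3/(3(2^3+1)^3)$. Throughout, fix $\nu$ (so $m=\lfloor\nu n\rfloor$, taking the minimum of the $\nu_i$ as in the paper) and pick $\epsilon=\epsilon(\nu)>0$ small enough that \eqref{eq:eplision_3D_condition} holds. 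This condition implies \eqref{eq:epsilon_3D}, so the evaluation sets from Lemma~\ref{lemma:3D_independence_2} exist over any $\mathbb F_q$ with $q>2^{3n^3}$. We use these sets for all statements below.

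Fix an $\epsilon$-closed $M$ and local data $c_M\in\mathbb F_q^M$ whose restriction to every line $\ell\in L(M)$ in direction $i$ lies in $\C_i$. Split $M=L(M)\sqcup M'$. The first step applies Lemma~\ref{lemma:3D_line_ext} to the data on $L(M)$ alone, giving a global $F_{\mathrm{line}}\in\C_1\otimes\C_2\otimes\C_3$ that agrees with $c_M$ on every full line. Subtracting $F_{\mathrm{line}}$ from $c_M$ leaves residual data $r$ that vanishes on $L(M)$ and is arbitrary on $M'$. Note that points of $M'$ carry no line constraints, since no full line passes through them.

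The second step handles the residual. For each $p\in M'$, Lemma~\ref{lemma:3D_independence_2} supplies a polynomial $F_p$ with $\deg_X,\deg_Y,\deg_Z\le m-1$ whose evaluation vanishes on $M\setminus\{p\}$, including all of $L(M)$, and is nonzero at $p$. Rescaling so that $F_p(p)=r(p)$ and setting $F=F_{\mathrm{line}}+\sum_{p\in M'}F_p$ gives a codeword of $\C_1\otimes\C_2\otimes\C_3$ with $F|_M=c_M$. This establishes extendability, and the two cited results then finish the argument.

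The main obstacle is confirming that the hypotheses match. Lemma~\ref{lemma:3D_line_ext} needs the evaluation sets of Lemma~\ref{lemma:3D_independence_2} together with \eqref{eq:eplision_3D_condition}, and the same sets must simultaneously realize the pointwise interpolants on $M'$. I would check directly that \eqref{eq:eplision_3D_condition} dominates the left-hand side of \eqref{eq:epsilon_3D} divided by $n$ (up to $O(1/n)$ terms absorbed for large $n$), so that a single choice of evaluation sets serves both steps. I would also verify that the degenerate cases $M=[n]^3$ and $L(M)=\emptyset$ are covered: the first reduces to Lemma~\ref{lemma:3D_line_ext} alone, and the second to Lemma~\ref{lemma:3D_independence_2} alone.
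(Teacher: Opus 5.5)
Your proposal is correct and matches the paper's proof. You first extend the full-line data with Lemma~\ref{lemma:3D_line_ext}, then reassign the residual values on $M'$ pointwise using the interpolants of Lemma~\ref{lemma:3D_independence_2}, which vanish on all of $L(M)$, and conclude via Proposition~\ref{lemma:inner-g/ext} and Lemma~\ref{lemma:inner-g/rho}. Your check that \eqref{eq:eplision_3D_condition} implies \eqref{eq:epsilon_3D}, so that one choice of evaluation sets serves both lemmas, and your explicit use of the extendability--inner-generation duality spell out steps the paper leaves implicit.
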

\begin{proof}
	Let $M$ be any $\epsilon$-closed set with $\epsilon$ satisfying \eqref{eq:eplision_3D_condition}. Given any local data on sparse points and full lines in $M$, we first extend the full lines by Lemma \ref{lemma:3D_line_ext}. This may change the residual data on sparse points, but each such residual value can be reassigned independently by Lemma \ref{lemma:3D_independence_2}. One crucial observation is that, when choosing the polynomials in Lemma \ref{lemma:3D_independence_2}, codewords on $L_{\text{dense}}(M)$ remain intact. This establishes the global extension, and the product expansion follows from Lemma \ref{lemma:inner-g/rho}.
\end{proof}

It is straightforward to generalize these results to higher dimensions by induction. In retrospect, when $q > 2^{t n^t}$, the Schwartz--Zippel lemma shows that the probability of obtaining favorable evaluation sets is greater than $1 - 2^{t n^t - \log q}$. We next show that the dual codes automatically satisfy all desired properties when $\nu_i \leq \frac{1}{2}$, thereby proving Theorem \ref{thm:GRS}.


\subsection{Extendability of dual codes}

We now prove that $\rho(\C_1^\times,\ldots,\C_t^\times) > c(\nu_i,t)$ when involving dual codes.

\begin{lemma}[Diagonal scaling preserves extendability]
	\label{lem:diagonal-scaling-preserves-extendability}
	Let $\C_i \subseteq \mathbb F_q^n$ be classical linear codes for $i \in[t]$, let $D_i = \text{diag}(\lambda_{i,1},\ldots,\lambda_{i,n})$ be invertible diagonal matrices, and set $\C_i' \coloneq D_i \C_i$. Let $\C \coloneq \bigotimes_{i\in[t]} \C_i \subseteq \mathbb F_q^{[n]^t}$ and $\C' \coloneq \bigotimes_{i\in[t]} \C_i' \subseteq \mathbb F_q^{[n]^t}$. Then, for every subset $M\subseteq[n]^t$, $M$ is extendable with respect to $\C$ if and only if $M$ is extendable with respect to $\C'$.
\end{lemma}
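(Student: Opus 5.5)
The plan is to exhibit an explicit invertible, coordinatewise linear map on $\mathbb F_q^{[n]^t}$ that carries $\C$ onto $\C'$ and respects restrictions to lines, and then transport extensions through it. Define $D \coloneq D_1\otimes\cdots\otimes D_t$, which acts on $\mathbb F_q^{[n]^t}$ diagonally by
\begin{align}
	(Dx)(a_1,\ldots,a_t) = \lambda_{1,a_1}\cdots\lambda_{t,a_t}\, x(a_1,\ldots,a_t).
\end{align}
Because every $\lambda_{i,a}$ is nonzero, $D$ is invertible. On pure tensors, $D(c_1\otimes\cdots\otimes c_t) = D_1c_1\otimes\cdots\otimes D_tc_t$. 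By linearity this gives $D\,\C = \C'$, and likewise $D^{-1}\C' = \C$.

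Since $D$ is diagonal, it also acts pointwise on subsets. For $M\subseteq[n]^t$, let $D_M$ be the restriction of $D$ to $\mathbb F_q^M$, which is again invertible and diagonal. The key local check concerns a line $\ell$ in the $i$-th direction, $\ell = \{a_1\}\times\cdots\times[n]\times\cdots\times\{a_t\}$. Along $\ell$, $D$ acts as the scalar $\mu_\ell \coloneq \prod_{j\ne i}\lambda_{j,a_j}\neq 0$ times $D_i$. Hence for $c_M\in\mathbb F_q^M$ we get $(D_Mc_M)|_\ell = \mu_\ell D_i (c_M|_\ell)$. So $c_M|_\ell\in\C_i$ if and only if $(D_Mc_M)|_\ell\in\C_i'$, using that $\C_i'$ is a linear subspace so the nonzero scalar $\mu_\ell$ does no harm.

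Now suppose $M$ is extendable for $\C$, and take local data $c'_M$ for $\C'$, meaning $c'_M|_\ell\in\C_i'$ for every $\ell\in L(M)$ in direction $i$. Then $c_M\coloneq D_M^{-1}c'_M$ is admissible local data for $\C$, so it extends to some $c\in\C$ with $c|_M=c_M$. Set $c'\coloneq Dc\in\C'$. Because $D$ is diagonal, $c'|_M = D_M(c|_M) = c'_M$, so $c'$ is the required extension. The converse follows by the same argument with the roles reversed, since $\C_i = D_i^{-1}\C_i'$ and $D_i^{-1}$ is again invertible diagonal.

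There is no real obstacle here. The only points needing care are the bookkeeping of the scalar factor $\mu_\ell$ on each line and the fact that diagonal maps commute with restriction to $M$ and to lines.
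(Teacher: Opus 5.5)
Your proof is correct and follows the same route as the paper. Both arguments use the invertible diagonal map $D_1\otimes\cdots\otimes D_t$ on $\mathbb F_q^{[n]^t}$: you pull the local data back through its restriction to $M$, extend in $\C$, and push the extension forward to $\C'$. Your explicit checks that $D\,\C=\C'$ and that the line constraints are preserved (via the nonzero scalar $\mu_\ell$ on each line) fill in steps that the paper's proof leaves implicit.
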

\begin{proof}
	We define the diagonal linear map by
	\begin{equation}
		\label{eq:global-diagonal-map}
		T:=D_1\otimes\cdots\otimes D_t:
		\mathbb F_q^{[n]^t}
		\longrightarrow
		\mathbb F_q^{[n]^t}.
	\end{equation}
	Equivalently, for $x\in\mathbb F_q^{[n]^t}$, one has
	\begin{equation}
		\label{eq:global-diagonal-map-coordinate}
		(Tx)_{i_1,\ldots,i_t}
		=
		\left(\prod_{j=1}^t \lambda_{j,i_j}\right)
		x_{i_1,\ldots,i_t}.
	\end{equation}
	Since each $D_i$ is invertible, $T$ is invertible. For $M\subseteq[n]^t$, the restriction $T|_M$ is also invertible. Therefore, if $M$ is extendable with respect to $\C$, then for each local codeword $c_M'\in \C'$, there exists a global codeword $c \in \C$ such that $c|_M=(T|_M)^{-1}c_M'$. Then $(Tc)|_M=T|_M c|_M=c_M'$. Hence $M$ is also extendable with respect to $\C'$. The converse follows by applying the same argument to the inverse diagonal matrices $D_i^{-1}$.
\end{proof}

\begin{corollary}\label{coro:RS_dual}
	Suppose $\nu_i \leq \frac{1}{2}$. If every $\epsilon$-closed set $M$ is extendable for the tensor product $\C_1 \otimes \cdots \otimes \C_t$ of punctured RS codes, then it is also extendable for $\C_1^\perp \otimes \cdots \otimes \C_t^\perp$. Moreover, the same extendability holds for any combination of the codes, for example $\C_1,\C_2^\perp,\ldots,\C_t^\perp$ or $\C_1,\C_2,\ldots,\C_t^\perp$. As a result, $(\C_1^\times,\ldots,\C_t^\times)$, where each $\C_i^\times$ denotes either the code or its dual, is product-expanding with at least the same $\rho$ as $(\C_1^\perp,\ldots,\C_t^\perp)$.
\end{corollary}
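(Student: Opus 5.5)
The plan is to use the classical fact that the dual of a punctured Reed--Solomon code is a generalized Reed--Solomon code on the same evaluation set. Concretely, let $E_i=\{x_{i,1},\ldots,x_{i,n}\}$ and $\C_i=\RS_q(E_i,m_i)$. Then
\begin{align}
\C_i^\perp = D_i\,\RS_q(E_i,n-m_i),\qquad D_i=\operatorname{diag}(\lambda_{i,1},\ldots,\lambda_{i,n}),\quad \lambda_{i,a}=\prod_{a'\neq a}(x_{i,a}-x_{i,a'})^{-1}.
\end{align}
This holds because $\sum_a \lambda_{i,a} f(x_{i,a})=0$ for every polynomial $f$ of degree $\le n-2$; this is the leading coefficient of Lagrange interpolation. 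Since $\deg f+\deg g\le (m_i-1)+(n-m_i-1)=n-2$, the two codes are orthogonal, and a dimension count gives equality. Because the evaluation points are distinct, every $\lambda_{i,a}$ is nonzero. So each $D_i$ is invertible, and Lemma \ref{lem:diagonal-scaling-preserves-extendability} applies.

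Next, fix any choice $\C_i^\times\in\{\C_i,\C_i^\perp\}$. By the above, $\bigotimes_i \C_i^\times=(\bigotimes_i D_i^{\times})\bigotimes_i \RS_q(E_i,m_i^\times)$, where $m_i^\times\in\{m_i,n-m_i\}$ and $D_i^\times$ is either $\mathrm{id}$ or $D_i$. By Lemma \ref{lem:diagonal-scaling-preserves-extendability}, extendability of an $\epsilon$-closed $M$ for $\bigotimes_i\C_i^\times$ is equivalent to extendability for the unscaled tensor product $\bigotimes_i\RS_q(E_i,m_i^\times)$. The hypothesis $\nu_i\le\frac12$ gives $m_i^\times\ge m_i\ge m=\min_i m_i$.

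The key step is to check that the extension constructions of Corollary \ref{coro:GRS_extendable_2D}, Lemma \ref{lemma:3D_independence_2}, Lemma \ref{lemma:3D_line_ext} and Corollary \ref{coro:GRS_extendable} are monotone in the dimension. They only ever require inequalities of the form ``degree $\le m-1$'', such as \eqref{eq:epsilon_2D} and \eqref{eq:eplision_3D_condition}, with $m=\min_i m_i$. All auxiliary polynomials $\Psi_p,\Phi_p,L_i,G_p,\tilde f$ have degrees bounded purely in terms of $\epsilon n$. So for the larger dimensions $m_i^\times$, the same polynomials are still codewords in each direction.

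The one place needing care is the peeling step. There the cutoff $m-r$ must be taken per direction as $m_i^\times-r$, and the non-interference condition \eqref{eq:eplision_3D_condition1} must hold uniformly. It does, since the condition only becomes easier as $m_i^\times$ grows. The evaluation sets come from Lemma \ref{lemma:3D_independence_2}, whose genericity statements (the nonvanishing minors) concern only Vandermonde matrices with row counts determined by $\epsilon$. They therefore do not depend on which of $m_i$ or $n-m_i$ is used. Hence every $\epsilon$-closed $M$ is extendable for $\bigotimes_i\RS_q(E_i,m_i^\times)$, and so for $\bigotimes_i\C_i^\times$.

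Finally, I would turn extendability into product expansion. Take the dual collection $(\C_i^\times)^\perp$, which is again a choice from $\{\C_i,\C_i^\perp\}$. By Proposition \ref{lemma:inner-g/ext}, every $\epsilon$-closed $M$ is inner-generated for $\boxplus_i \C_i^\times$. Then Lemma \ref{lemma:inner-g/rho} gives $\rho(\C_1^\times,\ldots,\C_t^\times)\ge \epsilon^t/(t(2^t+1)^t)$, which is the same bound as for $(\C_1^\perp,\ldots,\C_t^\perp)$. The main obstacle is the audit of the earlier proofs, to confirm that they use only the lower bound $m$ on each local dimension and never an upper bound or an exact value. The case $m_i^\times=n-m_i$ must in particular not break the degree bookkeeping in the $\tilde g_p$ step of Lemma \ref{lemma:3D_line_ext}. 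I would handle this by stating an explicit ``monotonicity in $m_i$'' remark before invoking those results.
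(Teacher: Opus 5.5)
Your proposal is correct and follows the paper's proof essentially step for step. You write $\C_i^\perp$ as a diagonally rescaled punctured RS code with $n-m_i$ rows, use $\nu_i\le\frac12$ so the earlier constructions (which only need $\epsilon$-dependent degree bounds below $m$) carry over, transfer via Lemma~\ref{lem:diagonal-scaling-preserves-extendability}, and conclude with Proposition~\ref{lemma:inner-g/ext} and Lemma~\ref{lemma:inner-g/rho}; your explicit Lagrange multipliers and per-direction cutoff $m_i^\times-r$ in the peeling step are useful details the paper leaves implicit.
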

\begin{proof}
	Each dual code of $\C_i$ is a GRS code with generator matrix
	\begin{align}\label{eq:C1_dual}
		\begin{pmatrix}
			1      & 1      & \dots  & 1      \\
			x_1 & x_2 & \dots  & x_n  \\
			x_1^2  & x_2^2  & \dots  & x_n^2  \\
			\vdots & \vdots & \ddots & \vdots \\
			x_1^{n-m_i-1} & x_2^{n-m_i-1} & \dots & x_n^{n-m_i-1}
		\end{pmatrix} D, \qquad  D_i = \text{diag}(\lambda_{i,1},\ldots,\lambda_{i,n}),
	\end{align}
	where the $\lambda_{i,a}$ are nonzero Lagrange multipliers.
	
	Since $\nu_i \leq \frac{1}{2}$, we have $m_i \leq n-m_i$. Consider the punctured RS codes generated by Vandermonde matrices with $n-m_i$ rows on the same evaluation sets. Since each of these codes contains the corresponding $\C_i$, all previous proofs of surjectivity of evaluation maps and independence of tensors apply immediately. For instance, in Lemma \ref{lemma:3D_independence_2} and Lemma \ref{lemma:3D_line_ext}, given any $\epsilon$-closed $M$ and any $p = (a_0,b_0,c_0) \in M'$, we build the same polynomial as in the case of $\C_i$. Applying Lemma \ref{lem:diagonal-scaling-preserves-extendability} to the Lagrange multipliers completes the proof.
\end{proof}

\begin{corollary}\label{coro:GRS_expanding}
	Suppose $\C_1,\ldots,\C_t$ are punctured RS codes with the extendability property in Corollary \ref{coro:GRS_extendable}, and suppose that for each $i = 1,\ldots,t$, $m_i = \nu_i n \leq \frac{1}{2r} n$. Then, for the $r$-fold Schur products (see below Definition \ref{def:KR_product}),
	\begin{align}
		\rho\left( (\C_1^{\ast r})^\times, \ldots, (\C_t^{\ast r})^\times \right) > c(\nu_i,t)
	\end{align}  
	where $(\C_i^{\ast r})^\times$ denotes either the product code $\C_i^{\ast r}$ or its dual. Moreover, 
	\begin{align}
		\rho\left( \mathcal{D}_1, \ldots, \mathcal{D}_t \right) > c(\nu_i,t)
	\end{align}  
	where, for each $i =1,\ldots,t$, $\mathcal{D}_i$ can be freely taken among $\C_i$, $\C_i^\perp$, $\C_i^{\ast r}$, and $(\C_i^{\ast r})^\perp$.
\end{corollary}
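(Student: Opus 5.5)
The plan is to reduce everything to the statements already established for punctured RS codes, using the fact that Schur powers of punctured RS codes are again punctured RS codes on the same evaluation sets. If $\C_i = \RS_q(E_i,m_i)$ is spanned by evaluations of polynomials of degree $\le m_i-1$, then $\C_i^{\ast r}$ is spanned by evaluations of products of $r$ such polynomials. These products span all polynomials of degree $\le r(m_i-1)$. Hence $\C_i^{\ast r} = \RS_q(E_i, r(m_i-1)+1)$, whose dimension is at most $r m_i \le n/2$ and is still a constant fraction $\nu_i' \approx r\nu_i \le \tfrac12$ of $n$. By Corollary~\ref{coro:RS_dual}, the dual of each such code is a GRS code, i.e.\ a diagonal rescaling of a punctured RS code of dimension $n - r(m_i-1)-1 \ge n/2$ on the same evaluation set.

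The key step is to run the extendability argument once, for the codes of the \emph{smallest} dimension, and then propagate it upward by monotonicity. Specifically, I will use the observation already made in the proof of Corollary~\ref{coro:RS_dual}. All constructions in Lemmas~\ref{lemma:2D_independence_2}, \ref{lemma:3D_independence_2} and \ref{lemma:3D_line_ext} produce polynomials whose degrees in each variable are at most $m_i-1$, where $m_i$ is the dimension of the $i$-th code. Therefore, once the degree budget allows extension for $\C_1\otimes\cdots\otimes\C_t$, every polynomial built there also lies in $\mathcal{D}_1'\otimes\cdots\otimes\mathcal{D}_t'$ whenever each $\mathcal{D}_i'$ is a punctured RS code on $E_i$ of dimension $\ge m_i$.

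There is one subtlety: extendability requires extending \emph{arbitrary} local data from the larger code. Local codewords on full lines now have degree up to $\dim\mathcal{D}_i'-1$ rather than $m_i-1$. I would handle this by rerunning the proofs of Lemma~\ref{lemma:3D_line_ext} and Corollary~\ref{coro:GRS_extendable} with $m$ replaced by $\min_i \dim \mathcal{D}_i' \ge m$ in the inequalities \eqref{eq:epsilon_2D} and \eqref{eq:eplision_3D_condition}. These are monotone in $m$, so the same $\epsilon = \epsilon(\nu_i,t)$ still works. The evaluation sets need not change, because the Schwartz--Zippel step only involved minors of Vandermonde tensors with row counts bounded by the $\epsilon$-dependent quantities, not by $m$. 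With each $\mathcal{D}_i' \in \{\C_i, \C_i^{\ast r}\}$ or the RS code underlying a dual, Lemma~\ref{lem:diagonal-scaling-preserves-extendability} then transfers extendability to the GRS duals $\C_i^\perp$ and $(\C_i^{\ast r})^\perp$, in any mixture across coordinates.

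Finally, the expansion bound follows as before. Extendability of every $\epsilon$-closed $M$ for $\mathcal{D}_1^\perp\otimes\cdots\otimes\mathcal{D}_t^\perp$ is equivalent, by Proposition~\ref{lemma:inner-g/ext}, to inner-generation for $\mathcal{D}_1\boxplus\cdots\boxplus\mathcal{D}_t$. Lemma~\ref{lemma:inner-g/rho} then gives $\rho(\mathcal{D}_1,\ldots,\mathcal{D}_t) \ge \epsilon^t/(t(2^t+1)^t) =: c(\nu_i,t)$. Since the family $\{\C_i, \C_i^\perp, \C_i^{\ast r}, (\C_i^{\ast r})^\perp\}$ is closed under duals, applying this to every choice covers all cases.

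I expect the main obstacle to be the subtlety above. The dual-code argument in Corollary~\ref{coro:RS_dual} is phrased as ``the same polynomials work'', but local data for a larger code carries higher-degree pieces. The peeling step in Lemma~\ref{lemma:3D_line_ext} must therefore be redone with the threshold $m-r$ measured against the larger dimension. I would check that all degree inequalities there depend on $m$ only through $m - (\text{fixed }\epsilon n\text{ terms}) \ge 0$, which makes them monotone. The hypothesis $r m_i \le n/2$ is used exactly to keep every involved dimension in the regime where Corollary~\ref{coro:RS_dual} applies.
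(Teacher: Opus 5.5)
Your proposal is correct and follows the same route as the paper. The paper's argument runs as follows: $\C_i$ has the smallest dimension among $\C_i,\C_i^\perp,\C_i^{\ast r},(\C_i^{\ast r})^\perp$, and $\C_i^{\ast r}$ is a punctured RS code of rate at most $\tfrac12$, so the polynomial extendability argument carries over to any combination of these codes. The duals are then handled by diagonal scaling as in Corollary~\ref{coro:RS_dual}, and Proposition~\ref{lemma:inner-g/ext} with Lemma~\ref{lemma:inner-g/rho} gives the expansion bound. Your explicit treatment of the higher-degree local data, with the peeling threshold in Lemma~\ref{lemma:3D_line_ext} taken relative to the larger per-direction dimension and all degree conditions monotone, makes precise what the paper compresses into ``extendability can be proved as before.''
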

\begin{proof}
	Evidently, the rate of any $r$-fold product $\C_i^{\ast r}$ is larger than that of $\C_i$, so extendability can be proved as before. Given $m_i = \nu_i n \leq \frac{1}{2r} n$, the rate of $\C_i^{\ast r}$ is at most $\frac{1}{2}$. The extendability of tensor products with dual codes then follows from Lemma \ref{coro:RS_dual}. If $\mathcal{D}_i$ is one of $\C_i$, $\C_i^\perp$, $\C_i^{\ast r}$, and $(\C_i^{\ast r})^\perp$, then, since $\C_i$ has the smallest rate among these codes, extendability can be proved for any combination of them.
\end{proof}


\section{Transversal non-Clifford gates on almost-good qLDPC codes and qLTCs}

We are now going to prove that the cup and cap products defined in Section~\ref{sec:cup and cap on sheaf} combining with the GRS codes in Section~\ref{sec:GRS} induce nontrivial multi-controlled-$Z$ gates on almost-good qLDPC codes and qLTCs. The key observation underlying our proof is that the almost-good qLTCs constructed in Ref.~\cite{Dinur2024sheaf} are closely related to a class of sheaved HGP codes~\cite{LSWLL2026Theory}. To be precise, the base spaces and sheaves of these codes arise as covering spaces and pullback sheaves of the corresponding HGP codes, on which it is straightforward to compute cup and cap products. Furthermore, the two-way product-expanding punctured RS codes can help preserve code parameters ~\cite{Dinur2024sheaf,KP2025Extendable} and ensure nontrivial cup and cap products~\cite{LSWLL2026Theory}. We now elucidate the construction, thereby establishing transversal non-Clifford gates on almost-good qLDPC codes and qLTCs. This answers Conjecture~1.2 in Ref.~\cite{Li2025Poincare} in the affirmative.

\subsection{Covering spaces and covering maps}

As preparation for inducing cup and cap products from HGP codes to sheaf codes, we formally define the following cellular notion of covering spaces and introduce various key properties. This also generalizes the induction scheme for qLDPC codes on cubical complexes to arbitrary cell complexes~\cite{LSWLL2026Theory}.

\begin{definition}[Covering map]
Let $\widetilde X$ and $X$ be topological spaces, and let $P:\widetilde X\to X$ be a continuous surjection. We say that $P$ is a \emph{covering map} and $\widetilde X$ is the \emph{covering space} of $X$, if for every $x\in X$, there exists a closed set $F\subseteq X$ containing $x$ such that
\begin{align}
    P^{-1}(F)=\bigsqcup_{\tilde x\in P^{-1}(x)} F_{\tilde x},
\end{align}
where each $F_{\tilde x}\subseteq \widetilde X$ is closed, $\tilde x\in F_{\tilde x}$ and the restriction
\begin{align}
P|_{F_{\tilde x}}:F_{\tilde x}\to F    
\end{align}
is a homeomorphism. We say that $P$ is an $\ell$-sheeted covering if for every $x\in X$, $|P^{-1}(x)|=\ell$. 
\end{definition}

We are particularly interested in the case when $X$ and $\widetilde X$ are finite cell posets with Alexandrov topology. Then for each $\sigma\in X$, the closure of $\sigma$ is exactly $X_{\le\sigma}$. Therefore, the map $P:\widetilde X\to X$ is a covering map if and only if it is an order-preserving surjection such that for each $\sigma\in X$,
\begin{align}
    P^{-1}(X_{\le\sigma})=\bigsqcup_{\tilde\sigma\in P^{-1}(\sigma)}\widetilde X_{\le\tilde\sigma}
\end{align}
and for every $\tilde\sigma\in P^{-1}(\sigma)$, the restricted map
\begin{align}
    P|_{\widetilde X_{\le\tilde\sigma}}:\widetilde X_{\le\tilde\sigma}\to X_{\le \sigma}
\end{align}
is an isomorphism of posets.

Since the covering map $P:\widetilde X\to X$ is an order-preserving map, $\sd P:\sd \widetilde X\to \sd X$ is well-defined. Then we have the following proposition: 

\begin{proposition}
    Let $P:\widetilde X\to X$ be a covering map of cell posets. Then
    \begin{align}
        \sd P:\sd \widetilde X\to \sd X
    \end{align}
    is also a covering map.
\end{proposition}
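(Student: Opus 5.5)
We verify directly the poset characterization of covering maps stated just before the proposition. It says that an order-preserving surjection $Q:\widetilde Y\to Y$ is a covering iff, for each $\rho\in Y$, two things hold. First, $Q^{-1}(Y_{\le\rho})$ is the disjoint union of the down-sets $\widetilde Y_{\le\tilde\rho}$ over $\tilde\rho\in Q^{-1}(\rho)$. Second, each restriction $Q|_{\widetilde Y_{\le\tilde\rho}}$ is a poset isomorphism onto $Y_{\le\rho}$.

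Order preservation of $\sd P$ was already noted after the definition of the subdivision functor. For surjectivity, take a chain $[\sigma_0<\cdots<\sigma_k]$ in $X$ and pick any $\tilde\sigma_k\in P^{-1}(\sigma_k)$. Since $P|_{\widetilde X_{\le\tilde\sigma_k}}\to X_{\le\sigma_k}$ is a poset isomorphism, the chain lifts uniquely to a chain $\tilde\sigma_0<\cdots<\tilde\sigma_k$ inside $\widetilde X_{\le\tilde\sigma_k}$. This uses that $P$ preserves strict inequalities and dimensions on closed cells, which follows from the isomorphism.

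The main step is a unique-lifting description. Fix a simplex $\rho=[\sigma_0<\cdots<\sigma_k]\in\sd X$. Its down-set $(\sd X)_{\le\rho}$ consists of the subchains of $\rho$; all of them are contained in $X_{\le\sigma_k}$. Given a lift $\tilde\rho=[\tilde\sigma_0<\cdots<\tilde\sigma_k]$, we have $\tilde\sigma_k\in P^{-1}(\sigma_k)$, and $\sd P$ maps the subchains of $\tilde\rho$ bijectively and order-preservingly (in both directions) onto subchains of $\rho$. This gives the second condition, that $\sd P|_{(\sd\widetilde X)_{\le\tilde\rho}}$ is a poset isomorphism.

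For the decomposition, take $\tilde\tau\in(\sd P)^{-1}((\sd X)_{\le\rho})$, so $\sd P(\tilde\tau)$ is a subchain of $\rho$. The top cell of $\tilde\tau$ lies over some $\sigma_j$, and we must show $\tilde\tau$ lies below exactly one lift of $\rho$. We use the covering property of $P$ at $\sigma_k$: $P^{-1}(X_{\le\sigma_k})=\bigsqcup_{\tilde\sigma}\widetilde X_{\le\tilde\sigma}$. All cells of $\tilde\tau$ lie in $P^{-1}(X_{\le\sigma_k})$. Because $\tilde\tau$ is a chain, its cells all sit below its top cell. That top cell lies in a unique piece $\widetilde X_{\le\tilde\sigma}$ with $\tilde\sigma\in P^{-1}(\sigma_k)$, so the whole chain lies in that piece. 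Inside this piece $P$ is an isomorphism, so $\rho$ lifts uniquely to some $\tilde\rho$ with top $\tilde\sigma$, and this lift contains $\tilde\tau$ as a subchain.

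Uniqueness follows from the disjointness of the pieces. Suppose $\tilde\tau\le\tilde\rho'$ for another lift $\tilde\rho'$ with top $\tilde\sigma'$. Then $\tilde\tau$ lies in $\widetilde X_{\le\tilde\sigma'}$, which forces $\tilde\sigma'=\tilde\sigma$, and the uniqueness of lifts inside that piece gives $\tilde\rho'=\tilde\rho$. Hence the down-sets $(\sd\widetilde X)_{\le\tilde\rho}$ over $\tilde\rho\in(\sd P)^{-1}(\rho)$ are pairwise disjoint and their union is the full preimage, which is the first condition.

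The main obstacle is exactly this step: showing every chain in the preimage lands in a single sheet. It rests on the topmost-cell argument combined with the disjointness in the covering condition for $P$. As a by-product, the lifts are in bijection with $P^{-1}(\sigma_k)$, so $\sd P$ is $\ell$-sheeted whenever $P$ is.
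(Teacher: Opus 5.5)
Your proof is correct and follows essentially the same route as the paper. Both arguments sort each chain in the preimage of $(\sd X)_{\le\rho}$ according to which lift $\tilde\sigma_k\in P^{-1}(\sigma_k)$ of the top cell it lies under, using the covering condition of $P$ at $\sigma_k$, and then use unique lifting inside $\widetilde X_{\le\tilde\sigma_k}$ to identify each piece with $(\sd\widetilde X)_{\le\tilde\rho}$. Your version additionally makes explicit three points the paper leaves implicit: surjectivity of $\sd P$, preservation of strict inequalities, and the disjointness of the pieces.
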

\begin{proof}
    Let $\rho=[\sigma_0,\cdots,\sigma_k]\in \sd X$. By definition,
    \begin{align}
        (\sd X)_{\le\rho}=\{[\sigma_{i_0},\cdots,\sigma_{i_m}]:0\le i_0<\cdots<i_m\le k, m\le k\},
    \end{align}
    Then we have
    \begin{align}
        &(\sd P)^{-1}( (\sd X)_{\le\rho})=\{[\tilde\sigma_{i_0},\cdots,\tilde\sigma_{i_m}]:0\le i_0<\cdots<i_m\le k, m\le k, P(\tilde\sigma_{i_j})=\sigma_{i_j},\, 0\le j\le m\} \notag \\
        &=\bigsqcup_{\tilde\sigma_k\in P^{-1}(\sigma_k)}\{[\tilde\sigma_{i_0},\cdots,\tilde\sigma_{i_m}]:0\le i_0<\cdots<i_m\le k, m\le k, P(\tilde\sigma_{i_j})=\sigma_{i_j},\, 0\le j\le m,\tilde\sigma_{i_m}\le\tilde\sigma_k\} \notag \\
        &=\bigsqcup_{\tilde\sigma_k\in P^{-1}(\sigma_k)}\{\tilde\rho'\in \sd \tilde X:\tilde\rho'\le\tilde\rho, s(\tilde\rho)=\sigma_k, \tilde\rho\in (\sd P)^{-1}(\rho)\}\\
        &=\bigsqcup_{\tilde\rho\in (\sd P)^{-1}(\rho)}(\sd \widetilde X)_{\le \tilde\rho}\ , \notag
    \end{align}
   where the last equality holds because for each $\tilde\rho=[\tilde\sigma_0,\cdots,\tilde\sigma_k]\in (\sd P)^{-1}(\rho)$, $\tilde\rho$ is uniquely determined by $\tilde\sigma_k$ since $\tilde\sigma_0<\cdots<\tilde\sigma_k\in \widetilde X_{\le\tilde\sigma_k}$, and each cell in $X_{\le\sigma_k}$ has a unique lift in $\widetilde X_{\le\tilde\sigma_k}$. This further implies that each cell in $(\sd X)_{\le\rho}$ has a unique lift in $(\sd \widetilde X)_{\le \tilde\rho}$. Hence
   \begin{align}
       (\sd P)|_{(\sd \widetilde X)_{\le \tilde\rho}}:(\sd \widetilde X)_{\le \tilde\rho}\to(\sd X)_{\le\rho}
   \end{align}
   is an isomorphism, which finishes the proof.
\end{proof}

Suppose $X$ is a cell complex and $P:\widetilde X\to X$ is a covering map, then $\widetilde X$ automatically admits a cell structure via $P$ and it induces a chain map
\begin{align}
    P_\#: C_\bullet(\widetilde X,\mathbb F_q)\rightarrow C_\bullet(X,\mathbb F_q),
\end{align}
by setting 
\begin{align}
    P_\#(\tilde \sigma)=\sigma,
\end{align}
for any $\sigma\in X$ and $\tilde \sigma\in P^{-1}(\sigma)$.
Furthermore, we have another chain map called \emph{transfer map}
\begin{align}
    T_\#:C_\bullet (X,\mathbb F_q)\to C_\bullet (\widetilde X,\mathbb F_q),
\end{align}
defined by, for each $\sigma\in X$, 
\begin{align}
    T_\#(\sigma):=\sum_{\tilde\sigma\in P^{-1}(\sigma)}\tilde\sigma.
\end{align}

When it is necessary to emphasize the domain, we write $T_{X,\#}$ for $T_\#$ defined above. It is straightforward to check that when $P$ is an $\ell$-sheeted covering map, then
\begin{align}
    P_\#T_\#=\ell\cdot \text{id}_{C_\bullet(X,\mathbb F_q)}.
\end{align}
Here and throughout, by slight abuse of notation, we also treat $\ell$ as an element in $\mathbb F_q$ obtained by summing $1$ exactly $\ell$ times. Obviously, when $\ell$ is an odd number, then $P_\#$ and $T^\#$ are surjective, $T_\#$ and $P^\#$ are injective. They further induce surjective and injective maps on (co)homology groups, respectively.

Moreover, a direct calculation shows that
\begin{align}
    S_{\widetilde X,\#}T_{X,\#}=T_{\sd X,\#}S_{X,\#},\quad (\sd P)_\#S_{\widetilde X,\#}=S_{X,\#}P_\#.
\end{align}
We call this the compatibility of the subdivision map with covering maps and transfer maps. The following proposition shows that one can construct approximate inverses that enjoy similar compatibility properties.

\begin{proposition}\label{prop:compatible approximate inverse}
Suppose $P:\widetilde X\to X$ is covering map. Let $S_{X,\#}:C_\bullet(X;\mathbb F_q)\to C_\bullet(\operatorname{sd}X;\mathbb F_q)$ be the subdivision map, and let $A_{\sd X,\#}$ be any approximate inverse. Then there exists an approximate inverse
\begin{align}
    A_{\sd \widetilde X,\#}:C_\bullet(\operatorname{sd}\widetilde X;\mathbb F_q)\to C_\bullet(\widetilde X;\mathbb F_q)
\end{align}
of the subdivision map $S_{\widetilde X,\#}:C_\bullet(\widetilde X,\mathbb F_q)\to C_\bullet(\sd \widetilde X,\mathbb F_q),$
such that $A_{\sd \widetilde X,\#}$ is compatible with the covering maps and transfer maps in the sense of
\begin{align}
    P_\#A_{\sd \widetilde X,\#}=A_{\sd X,\#}(\operatorname{sd}P)_\#,\quad A_{\sd \widetilde X,\#}T_{\sd X,\#}=T_{X,\#}A_{\sd X,\#}.
\end{align}
\end{proposition}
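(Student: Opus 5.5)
We lift $A_{\sd X,\#}$ cell by cell. Since $\sd P$ is a covering map (previous proposition), every cell $\tilde\rho\in\sd\widetilde X$ with $\rho=(\sd P)(\tilde\rho)$ determines a unique lift $\tilde\sigma_k=s(\tilde\rho)\in P^{-1}(s(\rho))$. The restriction
\begin{align}
P_{\tilde\rho}\coloneqq P|_{\widetilde X_{\le s(\tilde\rho)}}:\widetilde X_{\le s(\tilde\rho)}\to X_{\le s(\rho)}
\end{align}
is a poset isomorphism, so it induces an isomorphism of chain complexes $C_\bullet(\widetilde X_{\le s(\tilde\rho)},\mathbb F_q)\cong C_\bullet(X_{\le s(\rho)},\mathbb F_q)$. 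By the carrier condition, $A_{\sd X,\#}(\rho)\in C_{\dim\rho}(X_{\le s(\rho)},\mathbb F_q)$. We therefore define
\begin{align}
A_{\sd\widetilde X,\#}(\tilde\rho)\coloneqq (P_{\tilde\rho})_\#^{-1}\big(A_{\sd X,\#}(\rho)\big)\in C_{\dim\rho}(\widetilde X_{\le s(\tilde\rho)},\mathbb F_q).
\end{align}
In coefficients this reads $A_{\tilde\rho,\tilde\sigma}=A_{\rho,P(\tilde\sigma)}$ for $\tilde\sigma\le s(\tilde\rho)$, and $A_{\tilde\rho,\tilde\sigma}=0$ otherwise. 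The carrier condition then holds by construction.

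\textbf{Chain map.} For a face $\tilde\rho'\lessdot\tilde\rho$ we have $s(\tilde\rho')\le s(\tilde\rho)$. The unique lift of $\rho'$ inside $\widetilde X_{\le s(\tilde\rho)}$ is exactly $\tilde\rho'$, so $(P_{\tilde\rho'})^{-1}_\#$ agrees with $(P_{\tilde\rho})^{-1}_\#$ on chains supported in $X_{\le s(\rho')}$. Moreover, the faces of $\tilde\rho$ correspond bijectively to the faces of $\rho$ under $\sd P$. Hence
\begin{align}
A_{\sd\widetilde X,\#}(\partial\tilde\rho)=(P_{\tilde\rho})^{-1}_\#\big(A_{\sd X,\#}(\partial\rho)\big)=(P_{\tilde\rho})^{-1}_\#\big(\partial A_{\sd X,\#}(\rho)\big)=\partial A_{\sd\widetilde X,\#}(\tilde\rho),
\end{align}
using that $(P_{\tilde\rho})_\#$ commutes with $\partial$ because it is induced by a poset isomorphism of closed cells. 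The main point to check here is this consistency of local inverses on faces, which is precisely the uniqueness-of-lift statement used in the proof that $\sd P$ is a covering.

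\textbf{Left inverse.} For $\tilde\sigma\in\widetilde X(k)$, $S_{\widetilde X,\#}(\tilde\sigma)$ consists of the flags ending at $\tilde\sigma$. These are exactly the lifts in $\widetilde X_{\le\tilde\sigma}$ of the flags in $S_{X,\#}(\sigma)$. Applying the definition term by term gives
\begin{align}
A_{\sd\widetilde X,\#}S_{\widetilde X,\#}(\tilde\sigma)=(P|_{\widetilde X_{\le\tilde\sigma}})^{-1}_\#\big(A_{\sd X,\#}S_{X,\#}(\sigma)\big)=\tilde\sigma.
\end{align}
Alternatively, one can rerun the ball argument of Proposition~\ref{prop:existence of approximate inverse}, whose proof shows that the left-inverse property is automatic for any chain map satisfying the carrier condition.

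\textbf{Compatibility.} Applying $P_\#$ to $(P_{\tilde\rho})^{-1}_\#(c)$ returns $c$, since $P_\#$ restricts to $(P_{\tilde\rho})_\#$ on chains supported in $\widetilde X_{\le s(\tilde\rho)}$. This gives $P_\#A_{\sd\widetilde X,\#}(\tilde\rho)=A_{\sd X,\#}(\rho)=A_{\sd X,\#}(\sd P)_\#(\tilde\rho)$.

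For the transfer identity, write
\begin{align}
A_{\sd\widetilde X,\#}T_{\sd X,\#}(\rho)=\sum_{\tilde\rho\in(\sd P)^{-1}(\rho)}(P_{\tilde\rho})^{-1}_\#\big(A_{\sd X,\#}(\rho)\big).
\end{align}
The lifts $\tilde\rho$ are indexed by $P^{-1}(s(\rho))$, and the corresponding closed sets $\widetilde X_{\le s(\tilde\rho)}$ form the disjoint decomposition of $P^{-1}(X_{\le s(\rho)})$ given by the covering condition. Therefore each cell $\sigma\le s(\rho)$ appearing in $A_{\sd X,\#}(\rho)$ contributes exactly one copy of each of its preimages, and the sum equals $T_{X,\#}A_{\sd X,\#}(\rho)$. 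The one delicate point is this disjointness, which is supplied exactly by the definition of a covering map.
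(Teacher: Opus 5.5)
Your proposal is correct and follows the paper's proof. It uses the same unique-lift construction $A_{\tilde\rho,\tilde\sigma}=A_{\rho,P(\tilde\sigma)}$, the same disjoint-decomposition argument for the transfer identity, and the same local lifting for the chain-map property, where you justify the consistency on faces more carefully; your term-by-term left-inverse computation is equivalent to the paper's use of injectivity of $P_\#$ on $C_\bullet(\widetilde X_{\le\tilde\sigma},\mathbb F_q)$.

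One aside is inaccurate: the left-inverse property is not automatic for every chain map satisfying the carrier condition (the zero map is a counterexample), because the ball argument of Proposition~\ref{prop:existence of approximate inverse} also needs the degree-$0$ normalization $A_\#([v])=v$ for $0$-cells $v$, which your lift does satisfy.
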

\begin{proof}
We divide the proof into the following steps.

\medskip
\noindent
\textbf{Step 1. Construction of $A_{\sd \widetilde X,\#}$.} Let $\widetilde\rho$ be a cell of $\sd \widetilde X$. By our convention,
\begin{align}
    A_{\sd X,\#}((\sd P)(\widetilde\rho))=\sum_{\sigma\in X(\dim\tilde\rho)}A_{(\sd P)(\widetilde\rho),\sigma}\cdot\sigma,
\end{align}
where each $\sigma\le s((\sd P)(\widetilde\rho))=P(s(\widetilde\rho))$ is guaranteed by the carrier condition. Since $P$ is a covering map, the restriction
\begin{align}
    P|_{\widetilde X_{\le s(\tilde\rho)}}:\widetilde X_{\le s(\tilde\rho)}\to X_{\le P(s(\tilde\rho))}
\end{align}
is an isomorphism. As a result, each cell $\sigma\in X_{\le P(s(\tilde\rho))}$ has a unique lift 
$\tilde\sigma\in \widetilde X_{\le s(\tilde\rho)}$ such that $P(\tilde\sigma)=\sigma$. Now we define
\begin{align}\label{eq:def of subdivided approximate inverse}
    A_{\sd \widetilde X,\#}(\tilde\rho)\coloneqq \sum_{\sigma\in X(\dim\tilde\rho)} A_{(\sd P)(\widetilde\rho),\sigma}\cdot\tilde\sigma.
\end{align}

Extending linearly over all cells of $\operatorname{sd}\widetilde X$
defines a linear map
\begin{align}
    A_{\sd \widetilde X,\#}:C_\bullet(\operatorname{sd}\widetilde X,\mathbb F_q)\to C_\bullet(\widetilde X,\mathbb F_q).
\end{align}

\medskip
\noindent
\textbf{Step 2. Verification of the compatibility.} By our construction,
\begin{align}
    P_\#A_{\sd \widetilde X,\#}(\tilde\rho) =\sum_{\sigma\in X_{\le P(s(\tilde\rho))}(\dim\tilde\rho)} A_{(\sd P)(\widetilde\rho),\sigma}\cdot P(\tilde\sigma) 
=
A_{\sd X,\#}((\operatorname{sd}P)(\tilde\rho))=A_{\sd X,\#}(\sd P)_\#(\tilde\rho),
\end{align}
for any cell $\tilde\rho\in\sd \widetilde X$, so $P_\#A_{\sd \widetilde X,\#}=A_{\sd X,\#}(\sd P)_\#$.

Now we consider a cell $\rho\in \sd X$, then by the definition of transfer map,
\begin{align}
    T_{\sd X,\#}(\rho)=\sum_{\tilde\rho\in (\sd P)^{-1}(\rho)}\tilde\rho,
\end{align}
and we have
\begin{align}
    A_{\sd \widetilde X,\#}T_{\sd X,\#}(\rho)
    =\sum_{\tilde\rho\in (\sd P)^{-1}(\rho)}\ \sum_{\sigma\in X(\dim\rho)}A_{\rho,\sigma}\cdot \tilde\sigma_{\tilde\rho}=\sum_{\sigma\in X(\dim\rho)}A_{\rho,\sigma} \sum_{\tilde\rho\in (\sd P)^{-1}(\rho)}\tilde\sigma_{\tilde\rho}
\end{align}
where each $\tilde \sigma_{\tilde\rho}$ is the unique lift of $\sigma$ in $\widetilde X_{\le s(\tilde\rho)}$. On the other hand,
\begin{align}
    T_{X,\#}A_{\sd X,\#}(\rho)
    =\sum_{\sigma\in X(\dim\rho)}A_{\rho,\sigma}\sum_{\tilde\sigma\in P^{-1}(\sigma)}\tilde\sigma.
\end{align}

As $\tilde\rho$ ranges over $(\sd P)^{-1}(\rho)$ for each fixed $\sigma\le s(\rho)$, the lifted cell $\tilde\sigma_{\tilde\rho}$ ranges exactly once over all elements of $P^{-1}(\sigma)$. As a result,
\begin{align}
    A_{\sd \widetilde X,\#}T_{\sd X,\#}(\rho)=T_{X,\#}A_{\sd X,\#}(\rho).
\end{align}
By linearity, $A_{\sd \widetilde X,\#}T_{\sd X,\#}=T_{X,\#}A_{\sd X,\#}$.

\medskip
\noindent
\textbf{Step 3. Verification of the carrier condition.} Since each cell $\tilde\sigma$ in~\eqref{eq:def of subdivided approximate inverse} lies in $\widetilde X_{\le s(\tilde\rho)}$ by definition, we have
\begin{align}
    A_{\sd \widetilde X,\#}(\tilde\rho)\in C_{\dim\tilde\rho}(\widetilde X_{\le s(\tilde\rho)},\mathbb F_q).
\end{align}

\medskip
\noindent
\textbf{Step 4. Verification that $A_{\sd \widetilde X,\#}$ is a chain map.} Since
\begin{align}
    P|_{\widetilde X_{\le s(\tilde\rho)}}:\widetilde X_{\le s(\tilde\rho)}\to X_{\le P(s(\tilde\rho))}
\end{align}
is a homeomorphism of cell posets, there is a natural correspondence among the cellular boundary operators through lifting. Then lifting the equality
\begin{align}
    \partial A_{\sd X,\#}((\operatorname{sd}P)(\tilde\rho))=A_{\sd X,\#}\partial((\operatorname{sd}P)(\tilde\rho))
\end{align}
from $X_{\le P(s(\tilde\rho))}$ to $\widetilde X_{\le s(\tilde\rho)}$ yields 
\begin{align}
    \partial A_{\sd \widetilde X,\#}(\tilde\rho)=A_{\sd \widetilde X,\#}\partial(\tilde\rho).    
\end{align}
This indicates that $A_{\sd \widetilde X,\#}$ is a chain map.

\medskip
\noindent
\textbf{Step 5. Verification of the left inverse condition.} Using the compatibility relation and the fact that the subdivision commutes with covering maps,
\begin{align}
    (\sd P)_\#S_{\widetilde X,\#}=S_{X,\#}P_\#,
\end{align}
we compute
\begin{align}
    P_\#A_{\sd \widetilde X,\#}S_{\widetilde X,\#}=A_{\sd X,\#}S_{X,\#}P_\#=P_\#.
\end{align}

Let $\widetilde\sigma$ be a cell of $\widetilde X$.
Since $S_{\widetilde X,\#}(\widetilde\sigma)$ is supported in $(\sd \widetilde X)_{\le s^{-1}(\tilde\sigma)}$, the carrier condition implies
\begin{align}
    A_{\sd \widetilde X,\#}S_{\widetilde X,\#}(\widetilde\sigma)
\in
C_{\dim\tilde\sigma}(\widetilde X_{\le \tilde\sigma}).
\end{align}
Applying $P_\#$ gives
\begin{align}
    P_\#A_{\sd \widetilde X,\#}S_{\widetilde X,\#}(\tilde\sigma)=P_\#(\tilde\sigma).
\end{align}
Since $P_\#$ restricts to an isomorphism from $C_\bullet(\widetilde X_{\le\tilde\sigma},\mathbb F_q)$ to $C_\bullet(X_{\le P(\tilde\sigma)},\mathbb F_q) $, it follows that
\begin{align}
    A_{\sd \widetilde X,\#}S_{\widetilde X,\#}(\tilde\sigma)=\tilde\sigma,
\end{align}
which finishes the proof.
\end{proof}

It is well known that the projection $P_\#$ and transfer map $T_\#$ are compatible with the cup and cap product when the coefficients are scalars in a field. So it is natural to expect that, when $X$ is equipped with a sheaf $\mathcal F$, they are still compatible with cup and cap product. In other words, if we extend the definition of $P_\#$ and $T_\#$ as follows:
\begin{align}
    P_\#:C_\bullet(\widetilde X,P^*\mathcal F)\to C_\bullet(X,\mathcal F),
\end{align}
where for each $\tilde \sigma\in \widetilde X$, $\tilde x(\tilde\sigma)\in (P^*\mathcal F)_{\tilde\sigma}=\mathcal F_{P(\tilde\sigma)}$,
\begin{align}
    P_\#(\tilde x(\tilde \sigma)\cdot \tilde \sigma):=\tilde x(\tilde\sigma)\cdot P(\tilde\sigma).
\end{align}
And
\begin{align}
    T_\#:C_\bullet(X,\mathcal F)\to C_\bullet(\widetilde X,P^*\mathcal F),
\end{align}
where for each $\sigma\in X$, $x(\sigma)\in \mathcal F_\sigma$,
\begin{align}
    T_\#(x(\sigma)\cdot \sigma)=\sum_{\tilde\sigma\in P^{-1}(\sigma)}x(\sigma)\cdot \tilde\sigma.
\end{align}
Again we get $P_\#T_\#=\ell\cdot \id_{C_\bullet(X,\mathcal F)}.$ It is also routine to verify that one can construct sheaf-valued subdivision maps and approximate inverses that remain compatible with the sheaf-valued covering maps and transfer maps, in a way similar to Proposition~\ref{prop:compatible approximate inverse}. All these favorable properties are demonstrated in the following two propositions.

\begin{proposition}\label{prop:lift of cup product}
    Let $P:\widetilde X\to X$ be a covering map and $\mathcal{F}$ and $\mathcal G$ are two sheaves on $X$. Suppose $\alpha\in C^p(X,\mathcal F)$, $\beta\in C^q(X,\mathcal G)$, then
    \begin{align}
        P^\#(\alpha\smile\beta)=(P^\#\alpha)\smile(P^\#\beta),
    \end{align}
\end{proposition}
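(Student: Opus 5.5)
The proof unwinds the definition of the cup product on a general cell complex and moves $P^\#$ through each factor of the defining diagram. On $X$ we write $\alpha\smile\beta=S_X^\#\bigl(A_{\sd X}^\#\alpha\smile A_{\sd X}^\#\beta\bigr)$, and on $\widetilde X$ we use the approximate inverse $A_{\sd\widetilde X,\#}$ from Proposition~\ref{prop:compatible approximate inverse} (its sheaf-valued version). The sheaves involved are $P^*\mathcal F$, $P^*\mathcal G$ on $\widetilde X$ and $s^*P^*\mathcal F=(\sd P)^*s^*\mathcal F$ on $\sd\widetilde X$; the last identity holds because $s\circ\sd P=P\circ s$, which is the naturality of the carrier map. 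The argument has three steps.

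First, we dualize the compatibility relations. The chain-level identities $(\sd P)_\#S_{\widetilde X,\#}=S_{X,\#}P_\#$ and $P_\#A_{\sd\widetilde X,\#}=A_{\sd X,\#}(\sd P)_\#$ hold in the sheaf-valued setting, by the same verification as in Proposition~\ref{prop:compatible approximate inverse}. In the sheaf-valued case one must check that the coefficients $\mathcal F^T_{\sigma,s(\rho)}$ lift correctly, and they do because $(P^*\mathcal F)_{\tilde\sigma,s(\tilde\rho)}=\mathcal F_{P\tilde\sigma,P s(\tilde\rho)}$. Taking $\operatorname{Hom}(-,\mathbb F_q)$ then gives
\begin{align}
S_{\widetilde X}^\#(\sd P)^\#=P^\#S_X^\#,\qquad A_{\sd\widetilde X}^\#P^\#=(\sd P)^\#A_{\sd X}^\#.
\end{align}

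Second, we prove the simplicial statement: for an order-preserving simplicial map such as $\sd P$ and sheaves pulled back along it,
\begin{align}
(\sd P)^\#(\tilde\alpha\smile\tilde\beta)=(\sd P)^\#\tilde\alpha\smile(\sd P)^\#\tilde\beta.
\end{align}
To see this, evaluate both sides on a simplex $\tilde\rho=[\tilde\sigma_0,\dots,\tilde\sigma_{p+q}]$. Because $\sd P$ preserves the vertex ordering (chains map to chains), it sends the front face ${}_p\tilde\rho$ to ${}_p(\sd P\,\tilde\rho)$ and the back face $\tilde\rho_q$ to $(\sd P\,\tilde\rho)_q$. The restriction maps of the pullback sheaf are, by definition, those of the original sheaf at the image cells. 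Hence both sides equal $\mathcal F_{{}_p\rho,\rho}(\tilde\alpha({}_p\rho))\otimes\mathcal G_{\rho_q,\rho}(\tilde\beta(\rho_q))$ with $\rho=\sd P(\tilde\rho)$. Dimensions are preserved, since a covering map is a local isomorphism.

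Third, we chain the identities together:
\begin{align}
P^\#(\alpha\smile\beta)=S_{\widetilde X}^\#(\sd P)^\#\bigl(A^\#\alpha\smile A^\#\beta\bigr)=S_{\widetilde X}^\#\bigl((\sd P)^\#A^\#\alpha\smile(\sd P)^\#A^\#\beta\bigr)=S_{\widetilde X}^\#\bigl(A_{\sd\widetilde X}^\#P^\#\alpha\smile A_{\sd\widetilde X}^\#P^\#\beta\bigr),
\end{align}
and the right-hand side is $(P^\#\alpha)\smile(P^\#\beta)$ by definition.

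The main obstacle is the first step. The chain-level compatibility $P_\#A_{\sd\widetilde X,\#}=A_{\sd X,\#}(\sd P)_\#$ must be shown to hold with sheaf coefficients, and we need that the cup product on $\widetilde X$ is computed with this particular compatible approximate inverse. The chain-level cup product depends on the choice of approximate inverse, so the proposition should be read with the choice fixed by Proposition~\ref{prop:compatible approximate inverse}.
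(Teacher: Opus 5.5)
Your proposal is correct and follows essentially the same route as the paper. You verify the identity for the simplicial map $\sd P$, using that it preserves front and back faces. You then transport it through the defining diagram via $s\circ\sd P=P\circ s$ and the dualized compatibilities $S_{\widetilde X}^\#(\sd P)^\#=P^\#S_X^\#$ and $A_{\sd\widetilde X}^\#P^\#=(\sd P)^\#A_{\sd X}^\#$; these are exactly the side faces of the paper's commutative cube. Your remark that the chain-level identity requires the (sheaf-valued) compatible approximate inverse from Proposition~\ref{prop:compatible approximate inverse} on $\widetilde X$ makes explicit an assumption the paper leaves implicit.
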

\begin{proof}
    First we suppose $X$ and $\widetilde X$ are simplicial complexes. Then for each $\tilde\sigma\in\widetilde X(p+q)$ and $P(\tilde\sigma)=\sigma$, by definition,
    \begin{align}
        \big(P^\#(\alpha\smile \beta)\big)(\tilde\sigma)=(\alpha\smile\beta)(\sigma)=\alpha({}_p\sigma)\otimes\beta(\sigma_q).       
    \end{align}
    On the other hand,
    \begin{align}
        \big((P^\#\alpha)\smile(P^\#\beta)\big)(\tilde \sigma)=\big(P^\#\alpha({}_p\tilde\sigma)\big)\otimes\big(P^\#\beta(\tilde\sigma_q))=\alpha({}_p\sigma)\otimes\beta(\sigma_q).
    \end{align}
    Therefore $P^\#(\alpha\smile\beta)=(P^\#\alpha)\smile(P^\#\beta)$. Now for general cell complexes, we notice that $s\circ \sd P=P\circ s$, hence for any sheaf $\mathcal H$ on $X$, $(\sd P)^*s^*\mathcal H= s^*P^* \mathcal H$ always holds. Then the following commutative diagram finishes the proof.
\begin{equation*}
    \begin{tikzpicture}[>=Stealth, font=\footnotesize]
\node (A) at (0,4.8) {$C^p(\widetilde X,P^*\mathcal F)\times C^{q}(\widetilde X,P^*\mathcal G)$};
\node (B) at (11,4.8) {$C^{p+q}(\widetilde X,P^*(\mathcal F\otimes\mathcal G))$};
\node (C) at (0,0) {$C^p(X,\mathcal F)\times C^{q}(X,\mathcal G)$};
\node (D) at (11,0) {$C^{p+q}(X,\mathcal F\otimes\mathcal G)$};

\node (Ap) at (3.2,3.2) {$C^p(\sd \widetilde X,s^*P^*\mathcal F)\times C^{q}(\sd \widetilde X,s^*P^*\mathcal G)$};
\node (Bp) at (8.0,3.2) {$C^{p+q}(\sd \widetilde X,s^*P^*(\mathcal F\otimes\mathcal G))$};
\node (Cp) at (3.2,1.6) {$C^p(\sd X,s^*\mathcal F)\times C^{q}(\sd X,s^*\mathcal G)$};
\node (Dp) at (8.0,1.6) {$C^{p+q}(\sd X,s^*(\mathcal F\otimes\mathcal G))$};

\draw[->] (A) -- node[midway, above=2pt] {$\smile$} (B);
\draw[->] (C) -- node[midway, left=4pt] {$P^\#\times P^\#$} (A);
\draw[->] (D) -- node[midway, right=4pt] {$P^\#$} (B);
\draw[->] (C) -- node[midway, below=2pt] {$\smile$} (D);

\draw[->] (Ap) -- node[midway, above=2pt] {$\smile$} (Bp);
\draw[->] (Cp) -- node[midway, left=4pt] {$(\sd P)^\#\times (\sd P)^\#$} (Ap);
\draw[->] (Dp) -- node[midway, right=4pt] {$(\sd P)^\#$} (Bp);
\draw[->] (Cp) -- node[midway, below=2pt] {$\smile $} (Dp);

\draw[->] (A) -- node[midway, right=10pt] {$A^\#\times A^\#$} (Ap);
\draw[->] (Bp) -- node[midway, left=10pt] {$S^\#$} (B);
\draw[->] (C) -- node[midway, right=10pt] {$A^\#\times A^\#$} (Cp);
\draw[->] (Dp) -- node[midway, left=10pt] {$S^\#$} (D);

\end{tikzpicture}
\end{equation*}
\end{proof}

\begin{proposition}\label{prop:lift of cap product}
    Let $P:\widetilde X\to X$ be a covering map and $\mathcal{F}$ and $\mathcal G$ are two sheaves on $X$. Suppose $\alpha\in C^p(X,\mathcal{F})$ and $x\in C_{p+q}(X,\mathcal F\otimes\mathcal{G})$, then 
    \begin{align}
        T_\#(\alpha\frown x)=(P^\#\alpha)\frown (T_\# x)
    \end{align}
\end{proposition}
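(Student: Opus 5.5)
I would mirror the proof of Proposition~\ref{prop:lift of cup product}: first verify the identity directly when $X$ and $\widetilde X$ are simplicial complexes, then pass to general cell complexes using the subdivision functor and the compatible approximate inverse of Proposition~\ref{prop:compatible approximate inverse}. By linearity it suffices to treat a single term $x = x(\sigma)\cdot\sigma$ with $\sigma\in X(p+q)$ and $x(\sigma)\in(\mathcal F\otimes\mathcal G)_\sigma$.

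\textbf{Simplicial case.} By definition $T_\# x=\sum_{\tilde\sigma\in P^{-1}(\sigma)} x(\sigma)\cdot\tilde\sigma$. For each lift $\tilde\sigma$, the simplicial map $P$ restricts to an isomorphism $\widetilde X_{\le\tilde\sigma}\cong X_{\le\sigma}$, which preserves vertex order. Hence $P({}_p\tilde\sigma)={}_p\sigma$ and $P(\tilde\sigma_q)=\sigma_q$, and the restriction maps of $P^*\mathcal F$ and $P^*\mathcal G$ along $\tilde\sigma$ are literally those of $\mathcal F$ and $\mathcal G$ along $\sigma$. Therefore
\begin{align}
(P^\#\alpha)\frown(x(\sigma)\cdot\tilde\sigma)=\mathcal G^T_{\sigma_q,\sigma}\langle\mathcal F_{{}_p\sigma,\sigma}\alpha({}_p\sigma),x(\sigma)\rangle_{\mathcal F}\cdot\tilde\sigma_q .
\end{align}
Summing over $\tilde\sigma\in P^{-1}(\sigma)$, the faces $\tilde\sigma_q$ run through lifts of $\sigma_q$, one per lift of $\sigma$. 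The one subtle point is whether $\sum_{\tilde\sigma}\tilde\sigma_q$ equals $T_\#(\sigma_q)=\sum_{\tilde\tau\in P^{-1}(\sigma_q)}\tilde\tau$. This requires the map $\tilde\sigma\mapsto\tilde\sigma_q$ from $P^{-1}(\sigma)$ to $P^{-1}(\sigma_q)$ to be a bijection. It is injective because two lifts of $\sigma$ sharing a lifted face would violate the disjoint union $P^{-1}(X_{\le\sigma})=\bigsqcup\widetilde X_{\le\tilde\sigma}$. For surjectivity, and in general, both fibers have $\ell$ elements for an $\ell$-sheeted covering. I expect this counting step to be the main obstacle. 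If the covering is not uniformly sheeted, I would either add $\ell$-sheetedness as a hypothesis (the case used later) or argue via connectedness of the base.

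\textbf{General cell complexes.} I would take $A_{\sd\widetilde X,\#}$ as in Proposition~\ref{prop:compatible approximate inverse}, so that $A_{\sd\widetilde X,\#}T_{\sd X,\#}=T_{X,\#}A_{\sd X,\#}$. I would also use the identity $S_{\widetilde X,\#}T_{X,\#}=T_{\sd X,\#}S_{X,\#}$ and the naturality relation $(\sd P)^\#A^\#_{\sd X}=A^\#_{\sd\widetilde X}P^\#$, obtained by dualizing $P_\#A_{\sd\widetilde X,\#}=A_{\sd X,\#}(\sd P)_\#$. All of these are taken in their sheaf-valued forms, using $(\sd P)^*s^*=s^*P^*$.

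The computation then chains as follows:
\begin{align}
(P^\#\alpha)\frown T_\#x &= A_{\sd\widetilde X,\#}\big(A^\#P^\#\alpha\frown S_\# T_\# x\big)\\
&= A_{\sd\widetilde X,\#}\big((\sd P)^\#A^\#\alpha\frown T_{\sd X,\#}S_\# x\big)\\
&= A_{\sd\widetilde X,\#}T_{\sd X,\#}\big(A^\#\alpha\frown S_\#x\big)\\
&= T_{X,\#}A_{\sd X,\#}\big(A^\#\alpha\frown S_\# x\big)\\
&= T_\#(\alpha\frown x).
\end{align}
The first and last equalities are the definition of the cap product on cell complexes. The second uses the naturality and subdivision--transfer identities above. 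The third is the simplicial case applied to the covering $\sd P$, which is a covering by the preceding proposition. The fourth uses the compatibility of the approximate inverses with transfer.

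The sheaf-valued versions of these compatibilities are routine, since the coefficient bookkeeping is pulled back pointwise.
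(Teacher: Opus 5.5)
Your proof is correct and follows the paper's route. You do a direct computation on each lift $\tilde\sigma$ in the simplicial case, then handle general cell complexes with the same subdivision/approximate-inverse commutative diagram (your chain of equalities is exactly that diagram read off). The bijection $\tilde\sigma\mapsto\tilde\sigma_q$, which the paper uses silently, is easier than you fear. Any lift $\tilde\tau$ of $\sigma_q$ lies in $P^{-1}(X_{\le\sigma})=\bigsqcup_{\tilde\sigma}\widetilde X_{\le\tilde\sigma}$, hence in a unique $\widetilde X_{\le\tilde\sigma}$, where it must equal $\tilde\sigma_q$ by the isomorphism with $X_{\le\sigma}$. So no sheet-counting, no $\ell$-sheetedness hypothesis and no connectedness argument is needed.
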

\begin{proof}
    First we suppose $X$ and $\widetilde X$ are simplicial complexes. Then for each $x=x(\sigma)\cdot\sigma$ for $\sigma\in X(p+q)$ and $x(\sigma)\in\mathcal F_\sigma\otimes \mathcal G_\sigma$, by definition
    \begin{align}
    \begin{aligned}
        T_\#(\alpha\frown x(\sigma)\cdot\sigma)&=T_\#\big(\mathcal{G}^T_{\sigma_q,\sigma}\langle \mathcal{F}^T_{{}_p\sigma,\sigma}\alpha({}_p\sigma),x(\sigma)\rangle_{\mathcal{F}}\cdot\sigma_q\big)\\
        &=\sum_{\tilde\sigma\in P^{-1}(\sigma)}\mathcal{G}^T_{\sigma_q,\sigma}\langle \mathcal{F}^T_{{}_p\sigma,\sigma}\alpha({}_p\sigma),x(\sigma)\rangle_{\mathcal{F}}\cdot\tilde\sigma_q.
    \end{aligned}
    \end{align}
    On the other hand,
    \begin{align}
    \begin{aligned}
        P^\#\alpha\frown T_\#(x(\sigma)\cdot\sigma)&=P^\#\alpha\frown \sum_{\tilde\sigma\in P^{-1}(\sigma)}x(\sigma)\cdot \tilde\sigma\\
        &=\sum_{\tilde\sigma\in P^{-1}(\sigma)}(P^*\mathcal{G})^T_{\tilde\sigma_q,\tilde\sigma}\langle(P^*\mathcal{F})^T_{{}_p\tilde\sigma,\tilde\sigma} (P^\#\alpha)({}_p\tilde\sigma),x(\sigma)\rangle_{P^*\mathcal{F}}\cdot\tilde\sigma_q\\
        &=\sum_{\tilde\sigma\in P^{-1}(\sigma)}\mathcal{G}^T_{\sigma_q,\sigma}\langle \mathcal{F}^T_{{}_p\sigma,\sigma}\alpha({}_p\sigma),x(\sigma)\rangle_{\mathcal{F}}\cdot\tilde\sigma_q.
    \end{aligned}
    \end{align}
    Therefore $T_\#(\alpha\frown x)=(P^\#\alpha)\frown (T_\# x)$. Now for general cell complexes, the following commutative diagram finishes the proof.
\begin{equation*}
    \begin{tikzpicture}[>=Stealth, font=\footnotesize]
\node (A) at (0,4.8) {$C^p(\widetilde X,P^*\mathcal F)\times C_{p+q}(\widetilde X,P^*(\mathcal F\otimes\mathcal G))$};
\node (B) at (11,4.8) {$C_q(\widetilde X,P^*\mathcal G)$};
\node (C) at (0,0) {$C^p(X,\mathcal F)\times C_{p+q}(X,\mathcal F\otimes\mathcal G)$};
\node (D) at (11,0) {$C_q(X,\mathcal G)$};

\node (Ap) at (3.2,3.2) {$C^p(\sd \widetilde X,s^*P^*\mathcal F)\times C_{p+q}(\sd \widetilde X,s^*P^*(\mathcal F\otimes\mathcal G))$};
\node (Bp) at (8.0,3.2) {$C_q(\sd \widetilde X,s^*P^*\mathcal G)$};
\node (Cp) at (3.2,1.6) {$C^p(\sd X,s^*\mathcal F)\times C_{p+q}(\sd X,s^*(\mathcal F\otimes\mathcal G))$};
\node (Dp) at (8.0,1.6) {$C_q(\sd X,s^*\mathcal G)$};

\draw[->] (A) -- node[midway, above=2pt] {$\frown$} (B);
\draw[->] (C) -- node[midway, left=4pt] {$P^\#\times T_\#$} (A);
\draw[->] (D) -- node[midway, right=4pt] {$T_\#$} (B);
\draw[->] (C) -- node[midway, below=2pt] {$\frown$} (D);

\draw[->] (Ap) -- node[midway, above=2pt] {$\frown$} (Bp);
\draw[->] (Cp) -- node[midway, left=4pt] {$(\sd P)^\#\times T_\#$} (Ap);
\draw[->] (Dp) -- node[midway, right=4pt] {$T_\#$} (Bp);
\draw[->] (Cp) -- node[midway, below=2pt] {$\frown$} (Dp);

\draw[->] (A) -- node[midway, right=10pt] {$A^\#\times S_\#$} (Ap);
\draw[->] (Bp) -- node[midway, left=10pt] {$A_\#$} (B);
\draw[->] (C) -- node[midway, right=10pt] {$A^\#\times S_\#$} (Cp);
\draw[->] (Dp) -- node[midway, left=10pt] {$A_\#$} (D);

\end{tikzpicture}
\end{equation*}
\end{proof}


\subsection{Geometry of the cell complexes of almost-good qLDPC codes and qLTCs}
Let $G_0=(V_0,E_0)$ be an $n$-regular Cayley expander graph on $n'$ vertices. Based on existing constructions of expanders, e.g., Refs.~\cite{Marcus2015I,Hall_2018,Agarwal2016,Jeronimo2021,Huang2025Ramanujan}, $n$ can be any large integer. 
Let $\mathbb{H}$ be a finite abelian group with $|\HH|=\ell=\exp(O(n'))$ elements, where $\ell$ is assumed to be an odd integer. Let $\gamma: E(G_0) \rightarrow \HH$ be a map assigning each edge $(u,v)$ a group element $s \in \HH$. Then we have $t$ permutation sets $A_1,\cdots,A_t$ on $\HH\times V_0^t\times\{0,1\}^t$, where each $A_j=\{a_j^1,\cdots a_j^n\}$ and
\begin{align}\label{eq:generator_action}
		a_j^{\mu} \cdot (h,v_1,...,v_j,...,v_t; b_1,...,b_j,...,b_t) 
		= (\gamma_{(v_j, v_j')} \cdot h,v_1,...,v_j',...,v_t; b_1,...,1-b_j,...,b_t),
\end{align} 
where $(v_j',1-b_j)$ is the $\mu$-th neighbor of $(v_j,b_j)$ in the double cover of $G_0$. 

Now we define a $t$-dimensional complex $X^t_{\HH}$ as follows:

\medskip
\noindent
\textbf{Cells.} For $0 \leq k \leq t$ the set of $k$-dimensional cells of $X^t_{\HH}$ is defined by
\begin{align}
X_{\HH}^t(k)\coloneqq \bigg \{[g;\,(a_j)_{j\in S},\,(b_j)_{j\notin S}] : g\in \HH\times V_0^t,\, S\subseteq [t],\, |S|=k,\,
a\in\prod_{j\in S}A_j,\,
b\in\prod_{j\notin S}\{0,1\}\bigg\}.
\end{align}
For a cell $\sigma=[g;\,(a_j)_{j\in S},\,(b_j)_{j\notin S}]$, we call $S=\operatorname{type}(\sigma)$ the \emph{type} of $\sigma$, and write
\begin{align*}
\sigma_0=g
\qquad\text{and}\qquad
\sigma_j=
\begin{cases}
a_j & j\in S,\\
b_j & j\notin S,
\end{cases}
\end{align*}
for $j=1,\ldots,t$.

\medskip
\noindent
\textbf{Partial order.} Given two cells $\sigma,\sigma'\in X^t_{\HH}$ and $j\in \{1,\ldots,t\}$ we write $\sigma'\lessdot_j \sigma$ if all the following hold:
\begin{itemize}
\item $\sigma'_j\in \{0,1\}$ while $\sigma_j\in A_j$,
\item $\sigma_i=\sigma'_i$ for all $i\notin \{0,j\}$,
\item $\sigma'_0=\begin{cases}
\sigma_0 & \text{if } \sigma'_j=0,\\
\sigma_0\cdot \sigma_j & \text{if } \sigma'_j=1.
\end{cases}$
\end{itemize}
For two general cells $\sigma'$ and $\sigma$, we write $\sigma'\le \sigma$ if $\sigma'=\sigma$ or if there is a chain from $\sigma'$ to $\sigma$ generated by the relations $\lessdot_j$.

If we replace the group $\HH$ by the trivial group $\{e\}$, then the above construction
produces a $t$-dimensional cell complex $X^t_{\{e\}}$. The trivial group homomorphism
$\HH\to \{e\}$ induces a projection
\begin{align}
    P:X^t_{\HH}\to X^t_{\{e\}},
\end{align}
which sends each cell
\[
[(h,v_1,\dots,v_t);\,(a_j)_{j\in S},\,(b_j)_{j\notin S}]\in X^t_{\HH}
\]
to
\[
[(e,v_1,\dots,v_t);\,(a_j)_{j\in S},\,(b_j)_{j\notin S}]\in X^t_{\{e\}}.
\]
This projection $P$ is an $\ell$-sheeted covering map, and $X^t_{\{e\}}$ is naturally identified with the
$t$-fold Cartesian product of the double cover of the graph $G_0$, namely the $t$-fold hypergraph product of the double cover of $G_0$.

\subsection{Proof details}\label{sec:proof}

By the convention above, $X_{\{e\}}^1$ is exactly the double cover of $G_0$. Given any local parity check matrix $h:\mathbb F_q^{A}\to \mathbb F_q^{\hat A}$, where $|\hat A|$ is a constant, the associated Tanner code can be realized as follows. Let $\mathcal F^{h}$ be a sheaf on $X^1_{\{e\}}$ defined as follows. For each vertex $[g;b]\in X^1_{\{e\}}(0)$,
\begin{align}
    \mathcal F^{h}_{[g;b]}\coloneqq \mathbb F_q^{\hat A},
\end{align}
for each edge $[g';a]\in X^1_{\{e\}}(1)$,
\begin{align}
    \mathcal F^{h}_{[g';a]}:=\mathbb F_q,
\end{align}
and whenever $[g;b]\lessdot[g';a]$, the restriction map $\mathcal F^{h}_{[g;b],[g';a]}:\mathcal F^{h}_{[g;b]} \to \mathcal F^{h}_{[g';a]}$ is defined by setting, for any vector $x\in \mathcal F^{h}_{[g;b]}$,
\begin{align}
    \mathcal F^{h}_{[g;b],[g';a]}(x) \coloneqq  \langle h\mathbbm 1_{\{a\}},x\rangle,
\end{align}
where for each $a\in A$, $\mathbbm 1_{\{a\}}$ is the corresponding standard basis vector in $\mathbb F_q^{A}$. The transpose of this restriction map satisfies
\begin{align}
    (\mathcal{F}^{h}_{[g;b],[g';a]})^T(1) = h\mathbbm 1_{\{a\}},
\end{align}
where $1 \in \mathcal F^{h}_{[g';a]} = \mathbb{F}_q$.

Now consider the complex $X^t_{\{e\}}$ with $t$ local codes $h_1,\dots, h_t$, where for each $i\in[t]$, $h_i:\mathbb F_q^{A_i}\to \mathbb F_q^{\hat A_i}$, $|\hat A_i|$ is a constant and $|A_i|=n$, with $n$ being the degree of the Cayley graph $G_0$. Notice that $X^t_{\{e\}}$ is the $t$-fold HGP of $X^1_{\{e\}}$, and for each copy of $X^1_{\{e\}}$ there is a sheaf $\mathcal{F}^{h_i}$. We equip $X^t_{\{e\}}$ with the external tensor product sheaf $\boxtimes_{i=1}^t\mathcal F^{h_i}$. Then by the K\"unneth formula,
\begin{align}\label{eq:chain complex of HGP with sheaf}
    C^\bullet(X^t_{\{e\}},\boxtimes_{i=1}^t\mathcal F^{h_i})=\bigotimes_{i=1}^tC^\bullet(X^1_{\{e\}},\mathcal F^{h_i}),
\end{align}
and
\begin{align}
    H^{p}(X^t_{\{e\}},\boxtimes_{i=1}^t\mathcal F^{h_i})\cong\bigoplus_{p_1+\cdots+p_t=p}\bigotimes_{i=1}^tH^{p_i}(X^1_{\{e\}},\mathcal F^{h_i}).
\end{align}
Therefore,
\begin{align}
    H_0(X^t_{\{e\}},\boxtimes_{i=1}^t\mathcal F^{h_i})\cong \bigotimes_{i=1}^tH_{0}(X^1_{\{e\}},\mathcal F^{h_i}),
\end{align}

The covering map $P:X^t_{\HH}\to X^t_{\{e\}}$ induces a pullback sheaf $P^*(\boxtimes_{i=1}^t\mathcal F^{h_i})$ on $X^t_{\HH}$. By construction, $X^t_{\HH}$ together with the pullback sheaf
$P^*(\boxtimes_{i=1}^t\mathcal F^{h_i})$
agrees with the complex and sheaf used in Ref.~\cite{Dinur2024sheaf}
to construct almost-good qLDPC codes and qLTCs. Since $\ell$ is an odd number, $P^\#$ and the transfer map $T_\#$ are injective. By Propositions~\ref{prop:lift of cup product} and~\ref{prop:lift of cap product}, nontrivial cup products on $X^t_{\{e\}}$ induce nontrivial cup products on $X^t_{\HH}$. This further implies the nontriviality of the multi-controlled-$Z$ gates on these almost-good codes.

\medskip
We elaborate on the 3-dimensional case for almost-good qLDPC codes, as a warm-up. The arguments can be easily extended to higher dimensions for qLTCs. We prepare three collections of local parity-check matrices $\{h_1^{(j)},h_2^{(j)},h_3^{(j)}\}, j\in[3]$, where each local code is specified by a matrix $h_i^{(j)}: \mathbb{F}_q^{A_i} \to \mathbb{F}_q^{\hat{A}_i^{(j)}}$. Suppose that there exist $a_1\in A_1$, $a_2\in A_2$ and $a_3\in A_3$ satisfying 
\begin{align}\label{eq: condition for nontirvial cup}
    h_i^{(j)}\mathbbm 1_{\{a_i\}}\ne 0, i,j\in [3], i\ne j
\end{align}
simultaneously, then we can find $z_i^{(j)}\in \mathbb F_q^{\hat{A}_i^{(j)}}$, such that $\langle h_i^{(j)}\mathbbm 1_{\{a_i\}},z_i^{(j)}\rangle= 1$, where $i,j\in [3]$, and define $\zeta_i^{(j)}\in C^0(X^1_{\{e\}},\mathcal F^{h_i^{(j)}})$ by, for each vertex $v\in X^1_{\{e\}}(0)$,
\begin{align}\label{eq:zeta}
    \zeta_i^{(j)}(v)=z_i^{(j)}.
\end{align}
Each $\zeta_i^{(j)}$ is then a cocycle in $C^0(X^1_{\{e\}},\mathcal F^{h_i^{(j)}})$. Therefore, if we choose three arbitrary vertices $v_{1},v_{2},v_{3}\in V_0$, then by Eq.~\eqref{eq:chain complex of HGP with sheaf},
\begin{align}
\begin{aligned}
    \alpha^{(1)}\coloneqq[v_{1};a_1]\otimes \zeta_2^{(1)}\otimes \zeta_3^{(1)}\in C^1(X^3_{\{e\}},\boxtimes_{i=1}^3\mathcal F^{h_i^{(1)}}),\\
    \alpha^{(2)}\coloneqq \zeta_1^{(2)}\otimes[v_{2};a_2]\otimes\zeta_3^{(2)}\in C^1(X^3_{\{e\}},\boxtimes_{i=1}^3\mathcal F^{h_i^{(2)}}),\\
    \alpha^{(3)}\coloneqq \zeta_1^{(3)}\otimes\zeta_2^{(3)}\otimes[v_{3};a_3]\in C^1(X^3_{\{e\}},\boxtimes_{i=1}^3\mathcal F^{h_i^{(3)}}),
\end{aligned}
\end{align}
are cocycles. By the definition of cup product, we have
\begin{align}
    &\alpha^{(1)}\smile \alpha^{(2)}\smile\alpha^{(3)}=\prod_{i,j\in [3], i\ne j} \langle h_i^{(j)}\mathbbm 1_{\{a_i\}},z_i^{(j)}\rangle \cdot [(v_{1},v_{2},v_{3}); a_{1},a_{2},a_{3}],
\end{align}
which is supported on the single cube $[(v_{1},v_{2},v_{3}); a_{1},a_{2},a_{3}]$. To find a cycle in $C_3(X^3_{\{e\}},\boxtimes_{i=1}^3\mathcal F^{h_i})$, we have the following observation: let $E=\{\beta_1,\ldots,\beta_n\}\subseteq \mathbb{F}_q$
be a set of distinct evaluation points, and let
$\C = \RS_{q}(E,m)\subseteq\mathbb{F}_q^n$ be the punctured RS code. We choose the standard Vandermonde matrix $h$ whose rows generate $\C$, namely
\begin{align}
    h=
    \begin{pmatrix}
        1 & 1 & \cdots & 1\\
        \beta_1 & \beta_2 & \cdots & \beta_n\\
        \beta_1^2 & \beta_2^2 & \cdots & \beta_n^2\\
        \vdots & \vdots & & \vdots\\
        \beta_1^{m-1} & \beta_2^{m-1} & \cdots & \beta_n^{m-1}
    \end{pmatrix}.
\end{align}
Thus $\im h^T=\C$, and the $a$-th column is $h\mathbbm 1_{\{a\}}=(1,\beta_a,\beta_a^2,\dots,\beta_a^{m-1})^T$.

Suppose $3m-3\le n-2$. We consider the 3-fold column-wise tensor product, or Khatri--Rao product, $h*h*h$ in Definition \ref{def:KR_product}. We index the rows of $h*h*h$ by triples $(p,q,r)$ with $0\le p,q,r\le m-1$. Then the entry in row $(p,q,r)$ and column $a$ is
\begin{align}
    (h*h*h)_{(p,q,r),a}=(h\mathbbm 1_{\{a\}}\otimes h\mathbbm 1_{\{a\}}\otimes h\mathbbm 1_{\{a\}})_{(p,q,r)}=\beta_a^{p+q+r}.
\end{align}
For each $a\in[n]$, the multiplier is 
\begin{align}
    \lambda_a = \frac{1}{\prod_{b\neq a}(\beta_a-\beta_b)}.
\end{align}
Then for each $0\le d\le n-2$, a direct calculation verifies $\sum_{a=1}^n \lambda_a \beta_a^d=0$. Hence, if $3m-3\le n-2$, the vector $\lambda=(\lambda_a)_{a\in[n]}\in \ker (h*h*h)$.

Suppose $2m-1<n$, we set $\im (h')^T \coloneq (\C * \C)^\perp$. Then a similar calculation shows that the all-ones vector $\mathbbm 1_{[n]}\in \ker( h*h*h')\subseteq \mathbb F_q^n$.

Given this observation, we are now able to provide a specific choice of local codes such that \eqref{eq: condition for nontirvial cup} holds and give a nontrivial cycle in $C_3(X_{\{e\}}^3,\otimes_{j=1}^3(\boxtimes_{i=1}^3\mathcal F^{h_i^{(j)}}))$. Let $\C_i=\RS_q(E_i,|\hat A_i^{(i)}|)\subseteq\mathbb F_q^n$, $i\in[3]$ be low rate two-way product-expanding punctured RS codes satisfying Corollary~\ref{coro:RS_dual} and \ref{coro:GRS_expanding}, then $(\C_1*\C_1)^\perp, (\C_2*\C_2)^\perp, \C_3$ is also two-way product-expanding. 
 
We choose the local parity check matrices $h_i^{(j)}$ according to the following table:
\begin{align}
\begin{array}{c|ccc}
\im (h_i^{(j)})^T & j=1 & j=2 & j=3\\
\hline
i=1 & \C_1 & \C_1 & (\C_1 * \C_1)^\perp\\
i=2 & \C_2 & \C_2 & (\C_2 * \C_2)^\perp\\
i=3 & \C_3 & \C_3 & \C_3
\end{array}
\end{align}

To satisfy the requirement in \eqref{eq: condition for nontirvial cup}, we choose $a_1,a_2,a_3$ such that $h_i^{(1)}\mathbbm 1_{\{a_i\}}\ne 0$, $i\in[3]$. Then this automatically gives a valid choice since for $i\in[2]$, $\im (h_i^{(j)})^T= (\C_i*\C_i)^\perp$, then $h_i^{(j)}\mathbbm 1_{\{a_i\}}=0$ if and only if $\mathbbm 1_{\{a_i\}}\in \C_i * \C_i=\RS_q(E_i,2|\hat{A}_i^{(i)}|-1)$, which is impossible since a polynomial of degree at most $2|A_i^{(i)}|-2$ cannot have $n-1$ roots. 

Notice that $X^3_{\{e\}}$ is the HGP of three copies of $X^1_{\{e\}}$, where the $i$-th copy can be identified with the double cover of the Cayley graph $G_0$ with generating set $A_i$. Since $|A_i|=n=|E_i|$, we can associate each evaluation point with an element in $A_i$. Therefore we may write $E_{i}=\{\beta_a\}_{a\in A_i}$. In this sense, we note that by the above observation, we can write down three cycles $x_i\in C_1(X^1_{\{e\}},\otimes_{j=1}^3\mathcal F^{h_i^{(j)}})$, $i\in[3]$ respectively as follows:
\begin{align}
    \begin{aligned}
        &x_1=\mathbbm 1_{X^1_{\{e\}}}\in \mathbb F_q^{X^1_{\{e\}}}\cong C_1(X^1_{\{e\}},\otimes_{j=1}^3\mathcal F^{h_1^{(j)}}),\\
        &x_2=\mathbbm 1_{X^1_{\{e\}}}\in \mathbb F_q^{X^1_{\{e\}}}\cong C_1(X^1_{\{e\}},\otimes_{j=1}^3\mathcal F^{h_2^{(j)}}),\\
        &x_3=\sum_{v\in V_0,a\in A_3}\frac{1}{\prod_{a'\in A_3, a'\ne a}(\beta_a-\beta_{a'})}\cdot [v;a].
    \end{aligned}
\end{align}
By the relationship between tensor product and external tensor product of sheaves, we have
\begin{align}
    C_\bullet(X^3_{\{e\}},\otimes_{j=1}^3(\boxtimes_{i=1}^3\mathcal F^{h_i^{(j)}}))\cong C_\bullet(X^3_{\{e\}},\boxtimes_{i=1}^3(\otimes_{j=1}^3\mathcal F^{h_i^{(j)}}))\cong\bigotimes_{i=1}^3 C_\bullet(X_{\{e\}}^1,\otimes_{j=1}^3\mathcal F^{h_i^{(j)}}).
\end{align}
Therefore, $x_1\otimes x_2\otimes x_3$ is a cycle in $C_3(X^3_{\{e\}},\otimes_{j=1}^3(\boxtimes_{i=1}^3\mathcal F^{h_i^{(j)}}))$. A direct calculation shows that 
\begin{align}
    \left\langle \alpha^{(1)}\smile \alpha^{(2)}\smile \alpha^{(3)}, \ x_1\otimes x_2\otimes x_3 \right\rangle = \frac{1}{\prod_{a_3'\in A_3, a_3'\ne a_3}(\beta_{a_3}-\beta_{a_3'})}\ne 0.
\end{align}

This computation lifts to the covering space $X^3_{\HH}$, i.e., there exist nontrivial (co)homology elements
\begin{align}
    \begin{aligned}
        & P^*[\alpha^{(j)}]\in H^1(X^3_{\HH},P^*(\boxtimes_{i=1}^3\mathcal F^{h_i^{(j)}})), \quad j\in [3], \\
        & T_{*}[x_1\otimes x_2\otimes x_3]\in  H_3(X^3_{\HH},\otimes_{j=1}^3(\boxtimes_{i=1}^3\mathcal F^{h_i^{(j)}})),
    \end{aligned}
\end{align}
such that
\begin{align}
    \left\langle P^*[\alpha^{(1)}]\smile P^*[\alpha^{(2)}]\smile P^*[\alpha^{(3)}], \ T_*[x_1\otimes x_2\otimes x_3] \right\rangle
    =\frac{\ell}{\prod_{a_3'\in A_3, a_3'\ne a_3}(\beta_{a_3'}-\beta_{a_3})}\ne 0
\end{align}

Let $\xi\coloneq T_\#(x_1\otimes x_2\otimes x_3)$. Then $\xi$ induce the following cohomological invariant form 
\begin{align}
    I_{\xi}:\prod_{j\in[3]} C_1(X^3_{\HH},P^*(\boxtimes_{i=1}^3\mathcal F^{h_i^{(j)}}))\to\mathbb F_q,
\end{align}
defined by
\begin{align}
    I_{\xi}(\alpha_1,\alpha_2,\alpha_3)\coloneq \langle \alpha_1\smile\alpha_2\smile\alpha_3, \ \xi \rangle,
\end{align}
where $\alpha_j\in C_1(X^3_{\HH},P^*(\boxtimes_{i=1}^3\mathcal F^{h_i^{(j)}})),j\in[3]$. This cohomological invariant form $I_\xi$ induces a nontrivial constant-depth logical $\CCZ$ gate among the three corresponding CSS codes.

By the rate requirement of local codes in Ref.~\cite{Dinur2024sheaf}, since we choose $\C_1, \C_2, \C_3$ to have small rate, $(\C_1*\C_1)^\perp$ and $(\C_2*\C_2)^\perp$ have large rate. Then the CSS code built from $C_1(X^3_{\HH},P^*(\boxtimes_{i=1}^3\mathcal F^{h_i^{(3)}}))$ is a qLDPC code with parameters
\begin{align}
    [\![N,\Theta(N),\Theta(N/(\operatorname{log}N)^2)]\!].
\end{align}
On the other hand, the CSS codes built from $C_1(X^3_{\HH},P^*(\boxtimes_{i=1}^3\mathcal F^{h_i^{(j)}}))$, $j\in[2]$ admit almost-good distance, but the number of logical qubits is only lower bounded by a constant, which gives a qLDPC code with parameters
\begin{align}
    [\![N,\Omega(1),\Theta(N/(\operatorname{log}N)^2)]\!].
\end{align}
Fortunately, in implementing the $\CCZ$ gate, we prepare three code blocks and only the last one may have merely constantly many logical qubits, so the rate of the combined blocks is still constant. 

Note that any constant-depth logical $\CCZ$ gate can be made transversal, at the cost of only a constant-factor loss in code parameters, as stated in the following lemma.

\begin{lemma}[\cite{Nguyen2025CCZ, Golowich_Lin2024}]\label{lem:constant-depth-to-transversal}
    Suppose the quantum codes $\mathcal Q_1,\ldots,\mathcal Q_r$ support a constant-depth $\mCZ$ circuit. Then we can construct new codes $\mathcal Q'_1,\ldots,\mathcal Q'_r$ that support a transversal $\mCZ$ circuit inducing the same number of logical $\mCZ$ gates. Furthermore, the parameters of the new codes (sparsity, qubits, rate, relative distance, soundness) worsen by at most a constant factor.
\end{lemma}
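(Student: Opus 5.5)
The plan is the standard ``disjoint-support + copying'' reduction from constant-depth circuits to transversal ones. Write the constant-depth circuit implementing the logical action as $U=\prod_{g\in\mathcal G}g$, where each $g$ is a $\mCZ$ on $r$ qubits, one from each block $\mathcal Q_1,\dots,\mathcal Q_r$. This is the form produced by $U_T$ in Definition~\ref{def:invariant_poly}: $T$ is multilinear across blocks and sparse. Form the interaction hypergraph whose vertices are the physical qubits of all blocks and whose hyperedges are the supports of the gates $g$. Since the circuit has constant depth, or equivalently $T$ is sparse, every qubit lies in at most $w=O(1)$ hyperedges.

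\textbf{Step 1: coloring.} Greedily colour the hyperedges with $K\le r(w-1)+1=O(1)$ colours so that hyperedges of the same colour are pairwise disjoint. Then $U=\prod_{\kappa=1}^{K}U_\kappa$, and each $U_\kappa$ is a product of $\mCZ$ gates with disjoint supports.

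\textbf{Step 2: copying.} Define $\mathcal Q'_j$ as the code in which each physical qubit $u$ of $\mathcal Q_j$ is replaced by $K$ qubits $(u,1),\dots,(u,K)$. Logically, the encoded state is made to satisfy $Z$-type parity checks forcing all copies equal: $Z_{(u,\kappa)}Z_{(u,\kappa')}$, together with $X$-type stabilizers lifted via $X_u\mapsto\prod_\kappa X_{(u,\kappa)}$. In CSS terms, this composes the $X$-side map with the diagonal embedding $\mathbb F_2\to\mathbb F_2^K$, i.e.\ concatenation with a $K$-qubit repetition code in the $Z$ basis.

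The gate $g\in U_\kappa$ acting on $(u_1,\dots,u_r)$ is now applied to the copies $((u_1,\kappa),\dots,(u_r,\kappa))$. Since gates of the same colour are disjoint and different colours use different copies, each new physical qubit is touched by at most one gate. After relabelling qubits so that the partners are aligned, the new circuit is therefore transversal.

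On the code space, all copies carry the same $Z$-eigenvalue as $u$. The diagonal phase $(-1)^{\prod_i z_{u_i}}$ is therefore reproduced exactly, so the new circuit induces the same logical action, and hence the same number of logical $\mCZ$ gates.

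\textbf{Step 3: parameters.} The qubit count grows by a factor of $K$, and the number of logical qubits is unchanged, so the rate drops by at most a factor $K$. Sparsity stays $O(1)$: the new $Z$-checks have weight 2, each qubit is in $K-1$ of them (or $O(1)$ if a path of checks is used), and the lifted $X$-checks have weight multiplied by $K$. For distance, a $Z$-logical of the new code restricts, by summing copies, to a $Z$-logical of the old code or is detected by the repetition checks, so $d_Z'\ge d_Z$. An $X$-logical must act uniformly on copies modulo stabilizers, so $d_X'\ge d_X$; relative distance thus drops by at most a factor $K$. Soundness also degrades by a constant factor, since the repetition checks are themselves locally testable with constant soundness and can be combined with the original checks.

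\textbf{Main obstacle.} I expect the soundness bookkeeping in Step 3 to be the delicate part, since it requires checking that the combined check system does not degrade soundness by more than a constant. Here I would defer to the arguments of~\cite{Nguyen2025CCZ, Golowich_Lin2024}, which this lemma cites.
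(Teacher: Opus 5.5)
Your proposal is correct and is essentially the standard argument. The paper gives no proof of its own and imports the lemma from the cited works, and your construction is the usual reduction behind that citation: each physical qubit is replaced by copies held together by $Z$-type repetition checks, so that each $\mCZ$ gate is routed to its own copy while the diagonal phase on the code space is unchanged.

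Two details are worth making explicit.
\begin{itemize}
\item The original $Z$-checks must be lifted through a single copy. If you lift them through the sum over all $K$ copies, they become products of repetition checks when $K$ is even, and the old constraints are lost.
\item The soundness step you defer is immediate. Let $\iota$ be the diagonal embedding, $R$ the pairwise repetition checks, $x'_1$ the first copy and $\pi$ the sum over copies. Then $\operatorname{dist}(x',\iota(\ker H_Z))\le |Rx'|+K\operatorname{dist}(x'_1,\ker H_Z)$ and $\operatorname{dist}(z',\ker(H_X\pi))\le\operatorname{dist}(\pi z',\ker H_X)$, so both soundness parameters drop by at most a constant factor.
\end{itemize}
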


Therefore, we obtain the following corollary.

\begin{corollary}
    There exist $[\![N,\Theta(N),\Theta(N/(\operatorname{log}N)^2)]\!]$ qLDPC codes that support nontrivial transversal logical $\CCZ$ gates.
\end{corollary}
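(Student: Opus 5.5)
The corollary is obtained by chaining the $t=3$ construction just completed with Lemma~\ref{lem:constant-depth-to-transversal}. Concretely, I would fix low-rate punctured RS codes $\C_1,\C_2,\C_3$ from Theorem~\ref{thm:GRS} whose rates satisfy the hypotheses of Corollaries~\ref{coro:RS_dual} and~\ref{coro:GRS_expanding}. I would then assign the local codes $h_i^{(j)}$ according to the table above. This yields three sheaves $P^*(\boxtimes_{i=1}^3\mathcal F^{h_i^{(j)}})$ on $X^3_{\HH}$, for $j\in[3]$, and hence three CSS code blocks.

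First I would record the parameters of each block. Corollary~\ref{coro:GRS_expanding} makes every triple of local codes two-way product-expanding with a constant $\rho$, including each triple involving $(\C_i*\C_i)^\perp$ and each dual triple. This is exactly the hypothesis needed by the distance and rate analysis of Ref.~\cite{Dinur2024sheaf}. That analysis gives distance $\Theta(N/(\log N)^2)$ for all three blocks. It also gives linear rate for the $j=3$ block, whose local codes have large rate. The $j=1,2$ blocks have at least a constant number of logical qubits, as witnessed by the nonzero classes $P^*[\alpha^{(j)}]$, which are nonzero because $P^\#$ is injective when $\ell$ is odd. All codes are LDPC because the degree $n$ of $G_0$ is a constant and the complex is sparse. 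The combined block therefore has $\Theta(N)$ total qubits and $\Theta(N)$ logical qubits, since the third block contributes $\Theta(N)$ and the blocks have comparable lengths.

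Second, I would invoke the cohomological invariant form $I_\xi$ with $\xi=T_\#(x_1\otimes x_2\otimes x_3)$. Its invariance follows from the Leibniz rule (Proposition~\ref{prop:cup_Leibniz}, extended to cell complexes) together with $\partial\xi=0$, which holds because $T_\#$ is a chain map. Its nontriviality follows from the pairing computed above being $\ell/\prod(\cdot)\neq0$. I would then convert the resulting $\mathbb F_q$-valued statement to qubits. The gate $U_{I_\xi}$ is a diagonal $\CCZ$ circuit through the trace map of Definition~\ref{def:invariant_poly}. A nonzero trilinear form on cohomology has nonzero $\tr_{\mathbb F_q/\mathbb F_2}\circ I_\xi$ after suitable $\mathbb F_q$-scalings of the classes. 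The gate therefore acts as a nontrivial logical operator on the $\mathbb F_2$-restricted codes. Sparsity of the cup product makes $U_{I_\xi}$ constant depth.

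Finally, I would apply Lemma~\ref{lem:constant-depth-to-transversal}. It converts the constant-depth circuit to a transversal one at a constant-factor cost in every parameter, which preserves the $[\![N,\Theta(N),\Theta(N/(\log N)^2)]\!]$ scaling. The step I expect to need the most care is the rate and distance bookkeeping from Ref.~\cite{Dinur2024sheaf}. It must accept the mixed local codes $(\C_i*\C_i)^\perp$ and must treat the three blocks jointly so that the rate is constant despite two blocks possibly having only $\Omega(1)$ logical qubits. Interpreting the result as a single code, namely the union of the blocks, resolves this.
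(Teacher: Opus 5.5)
Your proposal is correct and follows the paper's argument essentially step for step. You use the $t=3$ cup-product construction with the tabulated local codes and lift it to $X^3_{\HH}$ via $P^*$ and $T_*$ with $\ell$ odd. You take distance and rate from Ref.~\cite{Dinur2024sheaf}, with the $j=3$ block supplying the linear rate of the combined blocks. Note that this block's local codes are $(\C_1*\C_1)^\perp,(\C_2*\C_2)^\perp,\C_3$, so only two of them have large rate, not all three as your wording suggests. You then finish with Lemma~\ref{lem:constant-depth-to-transversal}. Your remark that an $\mathbb F_q$-scaling of one class makes $\tr_{\mathbb F_q/\mathbb F_2}\circ I_\xi$ nonzero on cohomology is a correct detail that the paper leaves implicit.
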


The $\CCZ$ cases above should be viewed as illustrative examples, which we now generalize to multi-controlled-$Z$ gates. For general $t\ge 1$, we may choose $t$ punctured RS codes $\C_1,\ldots,\C_t$ satisfying Corollary~\ref{coro:GRS_expanding}. Then, we choose parity-check matrices $\{h_i^{(j)}\}_{i,j\in[t]}$ such that $\im (h_i^{(j)})^T=C_i$ when $j\le t-1$, $\im (h_i^{(t)})^T = (\C_i^{*(t-1)})^\perp$ when $i\le t-1$ and $\im (h_t^{(t)})^T=C_t$. Let the evaluation set of $\C_t$ be $\{\beta_{a_t}\}_{a_t\in A_t}$, then
\begin{align}
    \xi_t = T_\#\big(\bigotimes_{j=1}^{t-1}(\sum_{v_j\in V_0,a_j\in A_j}[v_j;a_j])\otimes(\sum_{v_t\in V_0,a_t\in A_t} \prod_{a_t'\in A_t,a_t'\ne a_t}(\beta_{a_t}-\beta_{a_t'})^{-1}\cdot[v_t;a_t]) \big),
\end{align}
induce a cohomological invariant form
\begin{align}
    I_{\xi_t}:\prod_{j\in[t]} C^1(X^t_{\HH},P^*(\boxtimes_{i=1}^t\mathcal F^{h_i^{(j)}}))\to\mathbb F_q,
\end{align}
defined by
\begin{align}
    I_{\xi_t}(\alpha_1,\cdots,\alpha_t)\coloneq \langle \alpha_1\smile\dots\smile\alpha_t, \ \xi_t\rangle,
\end{align}
where $\alpha_j\in C^1(X^t_{\HH},P^*(\boxtimes_{i=1}^t\mathcal F^{h_i^{(j)}})),j\in[t]$. This cohomological invariant form $I_{\xi_t}$ induces a nontrivial transversal logical $\mathrm{C}^{t-1}Z$ gate on qLDPC codes with parameters $[\![N,\Theta(N),\Theta(N/(\operatorname{log}N)^{t-1})]\!]$.

If $t$ is an even number, then we can choose the parity-check matrices such that $\im (h_i^{(j)})^T = \C_i$ when $j\le t-1$, $\im (h_i^{(t)})^T = (\C_i^{*(t-1)})^\perp$ when $i \le t-2$ and $\im (h_i^{(t)})^T = \C_i$ when $i=t-1$ or $t$. Let the evaluation sets of $\C_{t-1}$ and $\C_t$ to be $\{\beta_{a_{t-1}}\}_{a_{t-1}\in A_{t-1}}$ and $\{\beta_{a_{t}}\}_{a_{t}\in A_{t}}$ respectively, then
\begin{align}
    \xi'_t=T_\#\big(\bigotimes_{j=1}^{t-2}(\sum_{v_j\in V_0,a_j\in A_j}[v_j;a_j])\otimes\bigotimes_{j=t-1}^{t}(\sum_{v_j\in V_0,a_j\in A_j} \prod_{a_j'\in A_j,a_j'\ne a_j}(\beta_{a_j}-\beta_{a_j'})^{-1}\cdot[v_j;a_j]) \big),
\end{align}
induce a cohomological invariant form
\begin{align}
    I_{\xi'_t}:\prod_{j\in[t/2]} C^2(X^t_{\HH},P^*(\boxtimes_{i=1}^t\mathcal F^{h_i^{(j)}}))\to\mathbb F_q,
\end{align}
defined by
\begin{align}
    I_{\xi'_t}(\alpha_1,\cdots,\alpha_{t/2})\coloneq \langle \alpha_1\smile\dots\smile\alpha_{t/2}, \ \xi'_t\rangle,
\end{align}
where $\alpha_j\in C^2(X^t_{\HH},P^*(\boxtimes_{i=1}^t\mathcal F^{h_i^{(j)}})),j\in[t/2]$. This cohomological invariant form $I_{\xi'_t}$ induces a nontrivial transversal logical $\mathrm{C}^{t/2-1}Z$ gate on qLDPC codes with parameters $[\![N,\Theta(N),\Theta(N/(\operatorname{log}N)^{t-1})]\!]$ and soundness $\Theta(1/(\operatorname{log}N)^{t-1})$. This completes the proof of Theorem~\ref{thm:main}.

\section{Discussion and outlook}

Although the transversal non-Clifford gates on almost-good qLDPC codes and qLTCs are presented as main results, we highlight that they emerge as an  implication of highly general algebraic reasoning. Beyond the  specific cases, our derivation points to a broader principle: given a quantum code defined from a cell complex, with or without sheaf structures, fault-tolerant multi-controlled-$Z$ gates can be systematically constructed by cohomological invariant forms in a  flexible manner. 
Looking ahead, should good qLTCs eventually be constructed from cell complexes and sheaves as one may expect, our framework should apply with little additional effort. This leads to the expectation that such codes would likewise admit transversal logical multi‑controlled‑$Z$ gates. 

An immediate next goal is to consider more intricate features of the logical operations, such as addressability and parallelizability, for which we expect that significantly refined analysis and ideas will be needed. Here, we have only rigorously proved the occurrence of nontrivial logical $\mCZ$ action, that is, the subrank $k_{\mCZ}$ of the invariant form, which characterizes the number of parallelizable logical gates, does not vanish. Establishing better lower bounds on $k_{\mCZ}$ remains open. One apparent limitation of our current construction is that one of the code blocks involved is only known to possess a constant number of logical qubits. Progress toward sharper estimates will likely first require a more refined strategy for evaluating the rates of sheaf codes in Ref.~\cite{Dinur2024sheaf} which facilitates a broader choice of local codes in order to overcome the constant-logical-qubit bottleneck.
Recent work \cite{KoppartyTamo2026} realizes Tanner codes with constant rate by local codes of rate $\leq \frac{1}{2}$. This breaks the low-local-rate barrier for achieving a nontrivial global rate, but the construction is non-LDPC because the graph node degree and local code length grow sublinearly as a fractional power. Nevertheless, it remains important to study the low-local-rate regime and extend these results to high-dimensional sheaf codes.
Subsequently, a deeper understanding of specific logical representatives in the code space would be important, as they make the expression of cohomological invariants, such as cup products, fully explicit~\cite{LSWLL2026Theory}.

Furthermore, our constructions open the possibility of further improving the asymptotic spacetime overhead of quantum fault tolerance, which we leave for separate work.

\section*{Acknowledgments} 
{We thank an anonymous referee for making us aware of a gap in v1 of this paper, which led us to the corrected argument presented here.}
This work is supported in part by NSFC under Grant No.~12475023, Dushi Program, and a startup funding from YMSC.


\printbibliography[heading=bibintoc,title=References]

\end{document}